\newtheorem{lemma}{Lemma}
\newtheorem{theorem}{Theorem}
\newtheorem{proposition}{Proposition}
\newtheorem{corollary}{Corrollary}
\newtheorem{definition}{Definition}
\newenvironment{proof}[1][]{\textit{Proof#1: }}{\hfill$\square$\\}
\newcounter{remarkcnt}
\newenvironment{boldremark}[1][]{\refstepcounter{remarkcnt}\par\smallskip
   \noindent \textbf{Remark~\theremarkcnt. #1} \rmfamily}{\smallskip}
\newcounter{interpcnt}
\newcommand{\indep}{\perp \!\!\! \perp}
\def\floor#1{\lfloor #1 \rfloor}
\def\1{\bm{1}}
\def\vzero{{\bm{0}}}
\def\vone{{\bm{1}}}
\def\va{{\bm{a}}}
\def\vb{{\bm{b}}}
\def\vc{{\bm{c}}}
\def\ve{{\bm{e}}}
\def\vx{{\bm{x}}}
\def\vy{{\bm{y}}}
\def\mA{{\bm{A}}}
\def\mB{{\bm{B}}}
\def\mC{{\bm{C}}}
\def\mF{{\bm{F}}}
\def\mG{{\bm{G}}}
\def\mI{{\bm{I}}}
\def\mK{{\bm{K}}}
\def\mM{{\bm{M}}}
\def\mN{{\bm{N}}}
\def\mP{{\bm{P}}}
\def\mT{{\bm{T}}}
\def\mU{{\bm{U}}}
\def\mV{{\bm{V}}}
\def\mLambda{{\bm{\Lambda}}}
\def\mSigma{{\bm{\Sigma}}}
\DeclareMathAlphabet{\mathsfit}{\encodingdefault}{\sfdefault}{m}{sl}
\SetMathAlphabet{\mathsfit}{bold}{\encodingdefault}{\sfdefault}{bx}{n}
\def\sA{{\mathcal{A}}}
\def\sB{{\mathcal{B}}}
\def\sC{{\mathcal{C}}}
\def\sR{{\mathcal{R}}}
\def\sS{{\mathcal{S}}}
\def\sW{{\mathcal{W}}}
\DeclareMathOperator*{\argmax}{arg\,max}
\begin{document}

\title{Common Information Dimension} 

\author{\IEEEauthorblockN{Xinlin Li, Osama Hanna, Suhas Diggavi and Christina Fragouli\\ 
University of California, Los Angeles\\
Email: \{xinlinli, ohanna, suhasdiggavi, christina.fragouli\}@ucla.edu}\thanks{This paper was presented in part at the 2023 IEEE International Symposium on Information Theory and the 2024 IEEE International Symposium on Information Theory.}
}

\maketitle
\begin{abstract}
Quantifying the common information between random variables is a fundamental problem with a long history in information theory. Traditionally,  common information is measured in number of bits and thus such measures are mostly informative when the common information is finite. However, the common information between continuous variables can be infinite; in such cases, a real-valued random vector $W$ may be needed to represent the common information, and to be used for instance for distributed simulation. In this paper, we propose the concept of Common Information Dimension (CID) with respect to a given class of functions $\mathcal{F}$, defined as the minimum dimension of a random vector $W$ required to distributively simulate a set of random vectors $X_1, \cdots, X_n$, such that $W$ can be expressed as a function of $X_1, \cdots, X_n$ using a member of $\mathcal{F}$. We compute the common information dimension for jointly Gaussian random vectors in a closed form, with $\mathcal{F}$ being the linear functions class. We also analytically prove, under three different formulations, that the growth rate of common information in the nearly infinite regime is determined by the common information dimension, for the case of two Gaussian vectors. 
\end{abstract}
\begin{IEEEkeywords}
Information Measures, Common Information, Information Dimension, Distributed Simulation, Gaussian Distribution
\end{IEEEkeywords}

\section{Introduction}
Quantifying the common information between random variables is a fundamental problem with a long history in information theory~\cite{watanabe1960information, gacs1973common, wyner1975common, bell2003co, yu2017generalized}, and has found application in diverse areas including source coding \cite{wyner1974source,xu2015lossy,erixhen2022gaussian},  cryptography \cite{ahlswede1993common,ahlswede1998common,maurer1993secret,csiszar2000common} and multimodal learning \cite{li2018survey,sutter2021generalized,ramesh2022hierarchical,kleinman2023gacskorner}. Multiple information theoretical notions have been developed to measure the common randomness, for instance,  G{\'a}cs-K{\"o}rner's common information \cite{gacs1973common}, Wyner's common information\cite{wyner1975common}, common entropy \cite{kumar2014exact}, and coordination capacity \cite{cuff2010coordination} (see also the monograph \cite{yu2022common}); but as far as we know, all of them measure common information in terms of bits. In this paper, we introduce the notion of common information dimension, that uses dimensionality instead of bits to quantify the common randomness for continuous random variables, and establish a connection with the common information.


We will illustrate the necessity of a new notion of common randomness through an example. Let $X=[X_1,V]^\top, Y=[Y_1,V]^\top$ be two Gaussian random vectors with $X_1, Y_1, V$ being independent scalar variables.  The common information between $X$ and $Y$ is captured by $V$, which is a continuous scalar variable with infinite entropy - and thus the common information between $X$ and $Y$, as calculated for instance in \cite{satpathy2015gaussian}, is also infinite. For general Gaussian random vectors, we distinguish two cases: if $\text{rank}(\mSigma_X) + \text{rank}(\mSigma_Y) = \text{rank}(\mSigma)$, where  $\mSigma_X =\mathbb{E}(XX^\top)$,  $\mSigma_Y=\mathbb{E}(YY^\top)$ and $\mSigma$ is the joint covariance matrix of the vector $[X\; Y]$, then the common information can be described using a finite number of bits; while if $\text{rank}(\mSigma_X) + \text{rank}(\mSigma_Y) > \text{rank}(\mSigma)$, the common information measured in bits becomes infinite (in this second case, we say that $X$ and $Y$ are {\em jointly singular}). To the best of our knowledge, there is no proposed metric that distinguishes between different amounts of common information in the infinite bits regime.


Our observation is that,
for the latter case, a real-valued random vector $W$ may be needed to represent the common randomness, and its dimension could provide guidance for practical applications. This is akin to compressed sensing \cite{donoho2006compressed,candes2006stable}, where we seek a low dimensional representation in a high dimensional space. 

Note that if we do not impose structural assumptions (on how $W$ depends on variables $X_1,\ldots,X_n$), the minimum dimension cannot exceed one. This is due to the existence of measurable bijections between $\mathbb{R}$ and $\mathbb{R}^d$ for any $d\geq 1$. However, these functions are not implementable and unstable, as noted in \cite{wu2010renyi}, and hence, are not useful for applications. Therefore, regularity constraints on the common variable need to be considered. In particular, in our definition, we allow for a restriction of the form $W=g(X_1,\cdots,X_n)$ for some $g\in \mathcal{F}$, where $\mathcal{F}$ is a given class of functions. For example, $\mathcal{F}$ may be chosen to be the set of linear, or smooth functions. If $\mathcal{F}$ is chosen to contain all possible functions, then the minimum dimension of $W$ will be upper bounded by one, as previously explained.

In addition to exploring what is the common information dimension of random variables, 
we consider the regime where the common information between random variables can be (or can approach) infinity and ask three questions: (1) How fast does the common information grow, from a finite to an infinite number of bits, as the dependency between variables increases? 
 (2) How well can we ``approximately" simulate a pair of random variables $(X,Y)$ using a finite number of shared bits, even though their common randomness is infinite? and
 (3) If the target continuous random variables are quantized into discrete values, how large is the resulting common information between the quantized variables for a certain quantization resolution?

{\bf Contributions.} Our main contributions in this paper are as follows: 
\begin{itemize}
    \item We propose the concept of Common Information Dimension (CID), defined as the minimum dimension of a random variable, with respect to a class of functions, required to distributively simulate a set of random vectors  $X_1,...,X_n$. We define the R{\'e}nyi common information dimension (RCID) as the minimum R{\'e}nyi dimension of a random variable, with respect to a class of functions, required to distributively simulate $X_1,...,X_n$. We define the G{\'a}cs-K{\"o}rner's common information dimension (GKCID) as the maximum R{\'e}nyi dimension of a common function that can be extracted from each random variable individually.
    \item  We prove that for jointly Gaussian random vectors and $\cal{F}$ being the class of linear functions, CID and RCID coincide, and GKCID is upper bounded by CID. Moreover, we give closed-form solutions for the CID, RCID, and GKCID in such case and {an efficient} method to construct $W$ with the minimum dimension that enables the distributed simulation of $X_1,\cdots,X_n$. These can be computed by examining ranks of covariance matrices.
    \item We characterize the common information in the nearly infinite regime by considering a sequence of nearly singular Gaussian pairs with decreasing distances to a jointly singular target random variables $X,Y$. We prove in closed form that the common information of such sequences grows proportionally to $d(X,Y)$, which establishes a link between common information and the common information dimension.
    \item We define the approximate common information as the minimum amount of common information between random variables that approximate a target distribution within a given error. We prove that, in this case as well, the growth rate of the approximate common information is proportional to the common information dimension. Our result quantifies the number of shared bits needed to distributedly simulate jointly singular distributions within a desired accuracy. We illustrate this through numerical evaluations in Section~\ref{sec:numeric}.
    \item Moreover, we show that the common information between a uniformly quantized pair of Gaussian random variables also grows as a function of $d(X,Y)$. This result provides  guidance on the number of shared bits needed for the distributed simulation under different quantization precisions. 
\end{itemize}

{\bf Paper organization.} We review the related work and results on the common information of Gaussian vectors in Section \ref{sec:related}. We introduce the definitions of common information dimension in Section \ref{sec:def}. We present our results in calculating the common information dimension in Section \ref{sec:calc_CID}. We state our problem formulation and results in the asymptotic behavior of common information in Section \ref{sec:approx}. We present our numerical evaluation in Section \ref{sec:numeric}
and conclude the paper in Section~\ref{sec:concl}. We include proof outlines in the paper and detailed proofs in the Appendix.
\section{Related Work and Preliminaries}\label{sec:related}
\subsection{Common Information}
G{\'a}cs-K{\"o}rner's common information \cite{gacs1973common} and Wyner's common information\cite{wyner1975common} are perhaps the most classical notions of common information. 
G{\'a}cs-K{\"o}rner's common information  is defined as the maximum number of bits per symbol of the information that can be individually extracted from two dependent discrete variables $X, Y$, namely
\begin{equation}
    C_{\text{GK}}(X,Y):= \max_{f,g: f(X)=g(Y)} \quad H(f(X)),
\end{equation}
where $f$ and $g$ are deterministic functions. However, it is known from \cite{gacs1973common,witsenhausen1975sequences} that $C_{\text{GK}}(X,Y)$ equals zero except for one special case where $X=(X',V)$, $Y=(Y',V)$ and $X',Y',V$ are independent variables.

Wyner's common information was originally defined for a pair of discrete sources $(X, Y) \sim \pi_{XY}$ as
\begin{equation}\label{def:wyner}
    C_{\text{Wyner}}(X,Y) := \min_{P_W P_{X|W} P_{Y|W}:P_{XY}=\pi_{XY}} \quad I(X\;Y;W).
\end{equation}
Wyner \cite{wyner1975common} provided two operational interpretations. One is for the \textit{source coding} problem: the minimum common rate for the lossless source coding problem over the Gray-Wyner network, subject to a sum rate constraint. The other is for the \textit{distributed simulation} problem: the minimum amount of shared randomness to simulate a given joint distribution $\pi_{XY}$. Recently, the works in \cite{liu2010common} and \cite{cuff2010coordination} generalized Wyner's common information to $n$ discrete random variables settings;  and the works \cite{viswanatha2014lossy, xu2015lossy} and \cite{li2017distributed,yu2020exact}  generalized its interpretations to continuous sources in lossy source coding  and distributed simulation, respectively. 

 Wyner's distributed simulation  assumes codes of large block length (i.e., multi-shot) and approximate generation: the relative entropy between the generated distribution and the target distribution goes to zero as the block length goes to infinity. Kumar, Li and El Gamal \cite{kumar2014exact} extended Wyner's work to define the exact common information (also called common entropy) which requires a single-shot (i.e., block length 1) and exact generation of $\pi_{XY}$. The common entropy is defined as
\begin{equation}
    G(X,Y):= \min_{P_W P_{X|W} P_{Y|W}:P_{XY}=\pi_{XY}} H(W).
\end{equation}
To generalize this to the multi-shot (asymptotic) setting, they also defined the exact common information rate as
\begin{equation}
    C_{\text{Exact}(X,Y)}:= \lim_{n \to \infty} \frac{1}{n}G(X^n,Y^n).
\end{equation}
The exact common information was extended to $n$ continuous variables in \cite{li2017distributed}, and was shown to provide an upper bound on Wyner's common information in \cite{kumar2014exact}.

The calculation of Wyner's common information and its variants are challenging in general since it involves optimizing a concave function over a non-convex set. Therefore, closed-form solutions are available only for special cases \cite{wyner1975common,witsenhausen1976values}. In particular, for continuous sources, closed-form solution is known only for Gaussian sources. \cite{xu2015lossy} calculated it for bivariate Gaussian and the multivariate one with certain correlation structure, while \cite{satpathy2015gaussian} and \cite{erixhen2022gaussian} extended it to a pair of Gaussian vectors. 
The general formula of Wyner's common information between Gaussian vectors $X \in \mathbb{R}^n$ and $Y \in \mathbb{R}^n$ is
\begin{equation}\label{eq:wyner}
    \begin{aligned}
        C_{\text{Wyner}}(X, Y) = \frac{1}{2} \sum_{i=1}^{n} \log\frac{1+\rho_i}{1-\rho_i},
    \end{aligned}
\end{equation}
where $\rho_i$'s are the singular values of  the normalized cross-covariance matrix $\mSigma_X^{-1/2} \mSigma_{XY} \mSigma_Y^{-1/2}$, and $\mSigma_X^{-1/2}, \mSigma_Y^{-1/2}$ are defined using pseudo-inverse when needed.
Observe that when $X$ and $Y$ are jointly singular (i.e., \eqref{eq:singular} holds and thus $\rho_i = 1$ for some $i$ in \eqref{eq:wyner}), the Wyner's common information $C_{\text{Wyner}}(X,Y)$ is infinite.

Wyner also describes two natural relaxations in \cite{wyner1975common}: (i) one replaces the conditional independence with a bounded conditional mutual information; (ii) the other allows a small distance between the generated and the target distributions, measured by Kullback–Leibler (KL) divergence. However, these were only analyzed in discrete settings. 
Recently, \cite{erixhen2022gaussian} studies the first relaxation in the case of Gaussian random variables. However, this version of the relaxed common information is still infinite when singular distributions are involved. In a separate study, \cite{hanna2022can} explores a  related, but different, problem of exchanging a small number of bits to break/reduce the dependency between distributed source.
On the other hand, this paper considers relaxation (ii) (with a different distance\footnote{Note that the KL divergence between any singular and non-singular distributions is always infinite, it is not suitable for the task of approximating a singular distribution with a non-singular one.}) which allows an approximate generation when the sources can be continuous and the distributions may be singular.

\subsection{Dimension}
In our work, we consider two notions of dimension: the number of elements of a vector and the information dimension (also called R{\'e}nyi dimension). The R{\'e}nyi dimension is an information measure for random vectors in Euclidean space that was proposed by R{\'e}nyi in 1959 \cite{renyi1959dimension}. It characterizes the growth rate of the entropy of successively finer discretizations of random variables. The R{\'e}nyi dimension of a random vector $W \in \mathbb{R}^{d_W}$ is defined as (when the limit exists)
\begin{equation}
    d^{R}(W) = \lim_{m\to \infty} \frac{H(\langle W \rangle _m)}{\log m},
\label{eq:renyi}
\end{equation}
where $\langle W \rangle _m$ is the element-wise discretization of $W$ defined as $\langle W^{(i)} \rangle _m = \frac{\floor{mW^{(i)}}}{m}$ and $H(V)$ is the entropy of $V$. Wu and Verd{\'u} \cite{wu2010renyi} interpreted the R{\'e}nyi dimension as the fundamental limit of almost lossless data compression for analog sources under regularity constraints that include linearity of the compressor and Lipschitz continuity of the decompressor.

\section{Notation and Definitions} \label{sec:def}
{\bf Notation.} We use boldface capital letters to represent matrices and use capital letters to represent (vectors of) random variables, while $P$ is specifically used to denote a probability distribution. For a random (vector) variable $X$, we use $d_X$ to denote the number of its dimensions. For a matrix $\mM$, we use $r(\mM)$ to denote the number of its rows. {We use $X=[X_1,\ldots,X_n]$ when $X_i$'s are column vectors to refer to $X=[X_1^\top\ \dots\ X_n^\top]^{\top}$.} 
We use $[X_1 \indep ... \indep X_n | W]$ to abbreviate that $X_1,\cdots,X_n$ are conditionally independent given $W$. The entropy of a random variable $V$ is denoted as $H(V)$. For a discrete random variable  $H(V)$ is defined as $\sum_{v\in \text{Supp}(V)}-\mathbb{P}(V=v)\log_2 \mathbb{P}(V=v)$, where $\text{Supp}(V)$ is the support of $V$. Since the quantities we are interested in are independent of the choice of mean values, we assume without loss of generality that all variables have zero mean. For a pair of zero-mean random variables ($X, Y$), we use $\mSigma =\left[\begin{matrix}
\mSigma_X & \mSigma_{XY}^\top \\ \mSigma_{XY} & \mSigma_Y
\end{matrix}\right]$ to denote their covariance matrix, where $\mSigma_X =\mathbb{E}(XX^\top)$ and $\mSigma_Y=\mathbb{E}(YY^\top)$ are the marginal covariance matrices, and $\mSigma_{XY} = \mathbb{E}(XY^\top)$ is the cross-covariance matrix. We say that $X$, $Y$ are {\em jointly singular} if \begin{equation}
\text{rank}(\mSigma) < \text{rank}(\mSigma_{X})+\text{rank}(\mSigma_{Y}). \label{eq:singular}
\end{equation}
Our proofs show that we can assume without loss of generality that the marginal covariances are non-singular.

\begin{figure}[t!]
  \centering
  \vspace{-.1in}
  \includegraphics[width=0.7\textwidth]{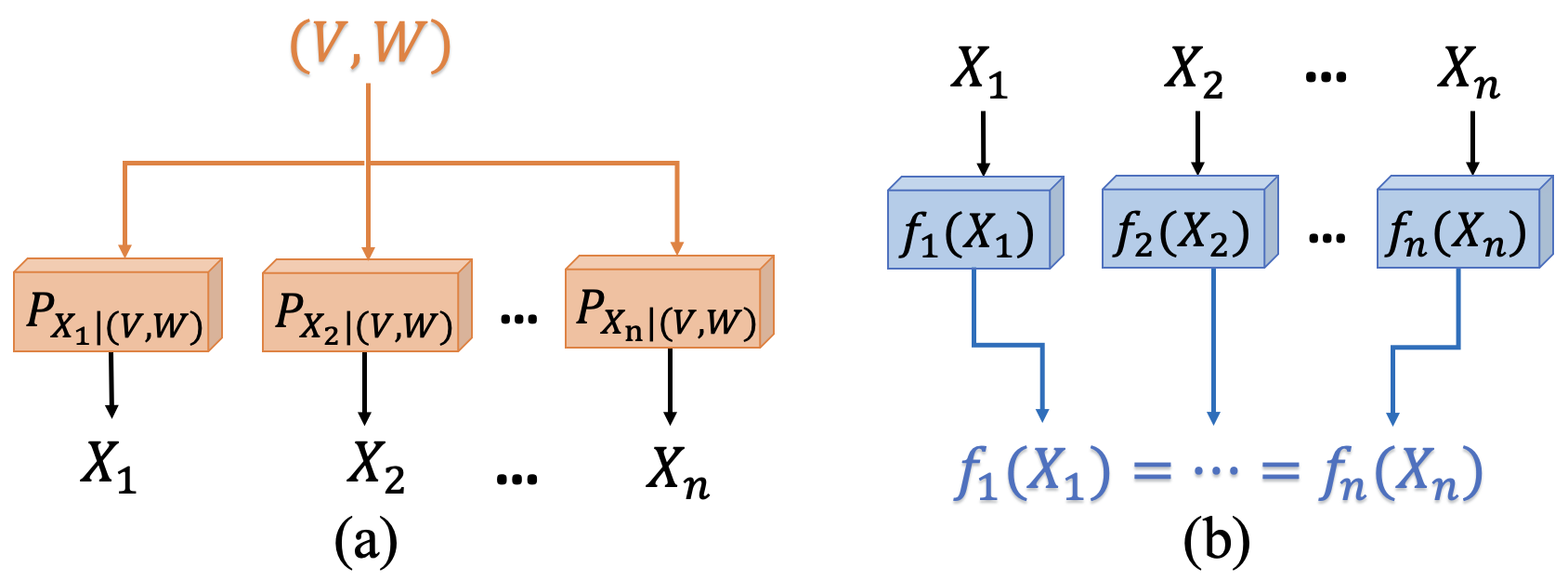}
  \vspace{-.1in}
  \caption{The one-shot exact version of (a) Wyner's distributed simulation problem and
  (b) G{\'a}cs-K{\"o}rner's distributed randomness extraction problem.}
  \vspace{-.15in}
  \label{fig:system}
\end{figure}

\textbf{Common Information Dimension (CID).} 
We consider the one-shot exact version of the distributed simulation problem  in Fig. \ref{fig:system} (a), where $n$ distributed nodes leverage the common randomness $(V,W)$,  in addition to their own local randomness, to simulate random vectors $X_1,\cdots,X_n$ that follow a given joint distribution $\pi_{X_1,\cdots,X_n}$.
We note that the distributed simulation is possible only if $X_1,\cdots,X_n$ are conditionally independent given $(V,W)$. Specifically, each node $i$ generates $X_i$ according to a distribution $P_{X_i|(V,W)}$, and the joint distribution is required to satisfy $\pi_{X_1,\cdots,X_n} = \mathbb{E}_{V,W}\left[\prod_{i=1}^n P_{X_i|(V,W)}\right]$.

We assume that $V$ is a (one-dimensional) random variable with finite entropy $H(V)<\infty$, and $W \in \mathbb{R}^{d_W}$ is a possibly continuous random vector of dimension $d_W$ that can be expressed as $W = g(X_1, \ldots, X_n)$ for some function $g$ in a given class of functions $\mathcal{F}$. 
Our goal is to determine the minimum dimension of $W$ that is necessary to enable the distributed simulation. 
Note that we allow for the finite entropy random variable $V$  to not follow the $\mathcal{F}$ restriction, and thus  the common randomness needs to be expressed using 
a function in $\mathcal{F}$ only up to finite randomness.  This allows the common information dimension to be zero when a finite amount of randomness is sufficient for the simulation, and avoids extra dimensions that may arise when this sufficient finite randomness  cannot be expressed using $\mathcal{F}$.


\begin{definition} \label{def:CID}
    The \textbf{Common Information Dimension (CID)} of random variables $X_1,\cdots,X_n$ with respect to a class of functions $\mathcal{F}$, is defined as
\begin{equation}\label{eq:min-dim}
    d_{\mathcal{F}}(X_1,\cdots,X_n) = \min \{d_W | W\in \mathcal{W}_\mathcal{F}\},
\end{equation}
where $$\mathcal{W}_\mathcal{F}=\{W|~\exists V, g: \mathbb{R}^{\sum_{i}^nd_{X_i}} \to \mathbb{R}^{d_W} \in \mathcal{F},  \mbox{ such that}$$   $$X_1 \indep ... \indep X_n | (V,W),\; H(V)<\infty, \; W=g(X_1,\cdots,X_n)\}.$$
\end{definition}

We next define the concept of the \textbf{R{\'e}nyi Common Information Dimension (RCID)} by replacing the dimension of $W$ with the Renyi dimension described in \eqref{eq:renyi}.

\begin{definition}  \label{def:RCID}
The \textbf{R{\'e}nyi Common Information Dimension (RCID)} of random variables $X_1,\cdots,X_n$ with respect to a class of functions $\mathcal{F}$ is defined as
\begin{equation}
    d_{\mathcal{F}}^{R}(X_1,\cdots,X_n) = \inf \{d^R(W) | W\in \mathcal{W}_{\mathcal{F}}\}.
\end{equation}
\end{definition}

 Finally, we define the {\bf G{\'a}cs-K{\"o}rner's Common Information Dimension (GKCID)} (illustrated in Fig. \ref{fig:system} (b)) for continuous random variables by replacing the entropy with the R{\'e}nyi dimension in the G{\'a}cs-K{\"o}rner's common information definition. This measures the maximum dimension of a vector $W$ that can be extracted from each random variable individually, using a potentially different function $f_i\in\mathcal{F}$.
 \begin{definition} \label{def:GKCID}
 The \textbf{G{\'a}cs-K{\"o}rner's Common Information Dimension (GKCID)} of random variables $X_1,\cdots,X_n$ with respect to a class of functions $\mathcal{F}$ is defined as
 \begin{equation}
     d_{\mathcal{F}}^{GK}(X_1,\cdots,X_n) = \sup_{(\forall i\in [n])W=f_i(X_i), f_i\in \mathcal{F}} d^R(W),
 \end{equation}
 where the optimization is over $W,\{f_i\}_{i=1}^n$.
 \end{definition}

\section{CID, RCID and GKCID for Jointly Gaussian Random Variables} \label{sec:calc_CID}

{We note that although CID, RCID and GKCID are well-defined, it is not clear whether and how they can be computed.
In  this section,  we characterize the CID, RCID, and GKCID for an arbitrary number of jointly Gaussian random variables when $\mathcal{F}$ is the class of linear functions. Our results show that CID and RCID are equal and that CID, and RCID can be computed simply from ranks of covariance matrices, while GKCID is not larger than CID. 



\subsection{Main Results} 
We consider a jointly Gaussian random vector $X=[X_1,\cdots, X_n]$, where $X_i \in \mathbb{R}^{d_{X_i}}$, with covariance matrix $\mSigma_{X}$. We use $\mSigma_{I|J}$ for $I,J\subseteq \{1,\cdots,n\}$ to denote the conditional covariance matrix of $X_{I}$ conditioned on $X_{J}$, where $X_I$ denotes the elements of $X$ with indices in the set $I$. We also use $-I$ to denote the complement of set $I$ in $\{1,\cdots,n\}$. To simplify notation, we drop the parentheses when listing the elements of the sets $I,J$. Also, as we only consider $\mathcal{F}$ being the class of linear functions, we omit it in the subscripts. 


\subsubsection{CID of Jointly Gaussian Random Variables}
Theorem~\ref{thm:main-gauss} derives the CID for two jointly Gaussian random vectors $X,Y$ in a closed form, with $\mathcal{F}$  the class of linear functions.
\begin{theorem}\label{thm:main-gauss}
    Let $[X,Y]$ be a jointly Gaussian random vector. Then, the common information dimension between $X, Y$  with respect to the class of linear functions equals
    $$d(X,Y) = \mbox{rank}(\mSigma_X) + \mbox{rank}(\mSigma_Y) - \mbox{rank}(\mSigma).$$
Moreover, when $\mSigma
_X,\mSigma_Y$ are non-singular\footnote{Note that this can always be achieved by a linear transformation on $X$, and a linear transformation of $Y$.}, an $W$ that satisfies the minimum in \eqref{eq:min-dim} is given by $W=\mN_XX$, where $\mN =\left[\begin{matrix}\mN_X & -\mN_Y\end{matrix}\right]$ is any basis of the row space of the matrix $\mSigma$.
\end{theorem}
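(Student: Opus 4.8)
The plan is to establish both the formula $d(X,Y) = \operatorname{rank}(\mSigma_X) + \operatorname{rank}(\mSigma_Y) - \operatorname{rank}(\mSigma)$ and the construction $W = \mN_X X$ by first reducing to the nonsingular-marginals case and then characterizing exactly which linear functions $W = g(X,Y)$ render $X$ and $Y$ conditionally independent given $(V,W)$. First I would invoke the reduction noted after \eqref{eq:singular}: by applying an invertible linear map to $X$ and to $Y$ we may assume $\mSigma_X, \mSigma_Y$ are identity (or at least nonsingular) without changing the CID, since linear functions are closed under composition with invertible linear maps and the rank expression is invariant. In that regime $\operatorname{rank}(\mSigma_X) = d_X$, $\operatorname{rank}(\mSigma_Y) = d_Y$, so the claimed value is $d_X + d_Y - \operatorname{rank}(\mSigma)$; note $\operatorname{rank}(\mSigma) = d_X + d_Y - (d_X + d_Y - \operatorname{rank}(\mSigma))$, and the deficiency $d_X + d_Y - \operatorname{rank}(\mSigma)$ is precisely the dimension of the left null space of $\mSigma$, equivalently the space of linear relations $\mN_X X = \mN_Y Y$ holding almost surely. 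This already makes $W = \mN_X X$ the natural candidate: it is a deterministic function of $X$ and also (a.s.) of $Y$, of dimension exactly the target value.

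Next, for the achievability direction, I would verify that with this $W$ the vectors $X$ and $Y$ become conditionally independent given $W$ together with a finite-entropy $V$. Since $X,Y$ are jointly Gaussian, conditional independence given $W$ is equivalent to a block-diagonal structure of the conditional cross-covariance $\mSigma_{XY\mid W} = \mSigma_{XY} - \mSigma_{XW}\mSigma_W^{-1}\mSigma_{WY}$ vanishing. One shows that after conditioning on the full set of linear relations, the residual cross-covariance is zero: intuitively, all the "shared" linear structure has been absorbed into $W$, and what remains of $X$ and $Y$ is a jointly nonsingular Gaussian pair whose only coupling was through these relations. Here the finite-entropy variable $V$ can be taken trivial (constant) for the Gaussian linear case, so $V$ plays no role in achievability; its presence in the definition only matters for the lower bound. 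I would carry this out by choosing coordinates adapted to the null space of $\mSigma$ and computing the conditional covariance directly, or by appealing to the known formula for Wyner's common information / conditional structure of singular Gaussians cited around \eqref{eq:wyner}.

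For the converse, suppose $W = g(X,Y) = \mA X + \mB Y \in \R^{d_W}$ is any linear function with $X \indep Y \mid (V,W)$, $H(V) < \infty$. The key point is that the finite-entropy $V$ cannot help reduce dimension in the Gaussian linear setting: conditioning a Gaussian on a finite-entropy variable $V$ cannot make it conditionally independent of another Gaussian unless the purely linear part already does — because the conditional law $X,Y \mid V = v$ is, for each $v$, a (shifted) Gaussian with the same covariance when $V$ is a function of $(X,Y)$ only through... — more carefully, I would argue that conditional independence given $(V,W)$ forces, for the Gaussian components, that $W$ together with the deterministic linear relations already carries all the cross-covariance, so $d_W$ must be at least $\dim(\text{left null space of }\mSigma) = \operatorname{rank}(\mSigma_X) + \operatorname{rank}(\mSigma_Y) - \operatorname{rank}(\mSigma)$. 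Concretely: the a.s. linear relations $\mN_X X = \mN_Y Y$ must be measurable with respect to $(V,W)$ up to finite entropy, but a nonconstant linear function of a Gaussian has infinite entropy, so each independent linear relation must be "captured" by a coordinate of $W$, giving $d_W \ge \dim(\operatorname{row space of } \mN)$.

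The main obstacle I anticipate is making the converse rigorous — precisely, ruling out that a low-dimensional $W$ plus a cleverly chosen finite-entropy $V$ achieves conditional independence. The subtlety is that $V$ need not be a linear (or even continuous) function, so one cannot just linearize; the argument must instead exploit that $H(V) < \infty$ while any nondegenerate linear functional of a Gaussian has a continuous (hence infinite-entropy) distribution, together with the fact that conditional independence of jointly Gaussian vectors is a statement about covariances that "sees through" any finite partition of the sample space. I would formalize this via an entropy/dimension counting argument on the discretizations $\langle\cdot\rangle_m$ (connecting to the RCID definition and the R\'enyi dimension machinery of \cite{wu2010renyi}), showing that the R\'enyi dimension of any valid $W$ is at least the stated rank quantity, which simultaneously yields the RCID lower bound and, since $d^R(W) \le d_W$, the CID lower bound. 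The detailed computation of conditional covariances and the adapted change of basis are routine and I would defer them to the Appendix.
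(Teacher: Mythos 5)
The achievability half of your proposal contains a genuine error: you claim that $V$ can be taken trivial (constant) because ``after conditioning on the full set of linear relations, the residual cross-covariance is zero.'' That is false in general. Conditioning on $W=\mN_X X$ only removes the \emph{deterministic} (rank-one-degenerate) linear coupling between $X$ and $Y$; it does not annihilate the remaining cross-covariance. The simplest counterexample is the case where $\mSigma$ is already full-rank, say $X,Y$ scalar with correlation $\rho=0.5$: then $\mN$ has zero rows and $W$ is degenerate, yet $X$ and $Y$ are certainly not independent. More generally, after conditioning on $\mN_X X$, the vectors $\mN_X' X$ and $\mN_Y' Y$ form a \emph{nonsingular} jointly Gaussian pair that is still correlated, just not perfectly so. Breaking that residual dependence is exactly where the finite-entropy $V$ is indispensable; the paper invokes the Li--El Gamal result \cite{li2017distributed} that a nonsingular jointly Gaussian pair admits a finite-entropy common variable $V$ rendering them conditionally independent. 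Without that ingredient, achievability simply does not go through, and the definition's allowance for a finite-entropy $V$ is precisely what makes CID equal to the \emph{singularity defect} rather than to $\min\{d_X,d_Y\}$.

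Your converse is directionally right but vague where it needs to be sharp, and the route you sketch (R\'enyi-dimension counting on discretizations) is more roundabout than necessary. The paper's Lemma~\ref{lem:main-gauss} instead proceeds in two clean steps that you do not articulate. First, since $\mN_X X = \mN_Y Y$ almost surely and, being functions of $X$ and $Y$ respectively, they are conditionally independent given $(V,W)$, a short measure-theoretic argument shows that $\mN_X X$ must in fact be a \emph{deterministic} function of $(V,W)$ (otherwise one could exhibit an event that violates the conditional independence). Second, $H(\mN_X X \mid W) \le H(V) < \infty$, and because $(\mN_X X, W)$ is jointly Gaussian, the conditional distribution of $\mN_X X$ given $W$ is Gaussian with entropy either $0$ or $\infty$; hence it is $0$, so $\mN_X X = \mB W$ a.s.\ for some matrix $\mB$, and $d_W \ge \mbox{rank}(\mN_X) = \mbox{rank}(\mN)$. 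Your proposal gestures at ``each linear relation must be captured by a coordinate of $W$'' but does not supply the mechanism; identifying $\mN_X X = \mN_Y Y$ as the a.s.\ common function and exploiting the conditional-independence requirement on that pair is the key idea, and the Gaussian 0-or-$\infty$ entropy dichotomy is what converts the finite-entropy hypothesis on $V$ into the exact linear relation $\mN_X X = \mB W$.
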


Our result enables the simple calculation of CID with only the knowledge of covariance matrices. The proof also provides methods to construct the pair $(V, W)$.  This can be achieved by examining the null space of (linear transformations of) the covariance matrix $\mSigma$. Theorem~\ref{thm:gauss-gen} extends the result to arbitrary number of Gaussian random vectors.
\begin{restatable}{theorem}{thmgaussgen}\label{thm:gauss-gen}
    Let $X=[X_1,\cdots,X_n]$ be a jointly Gaussian random vector. The common information dimension between $X_1, \cdots,X_n$ with respect to the class of linear functions is
    $$d(X_1,\cdots,X_n) = \sum_{i=1}^n\mbox{rank}(\mSigma_{-i}) - (n-1)\mbox{rank}(\mSigma),$$
    where $\mSigma$ is the covariance matrix of $X$, and $\mSigma_{-i}$ is the covariance matrix of the random vector $[X_1, \cdots,X_{i-1}, X_{i+1}, \cdots, X_n]$.
    Moreover, a $W$ that satisfies the minimum in \eqref{eq:min-dim} is given by Algorithm~\ref{alg:gauss-gen}.
\end{restatable}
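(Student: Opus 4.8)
The plan is to reduce Theorem~\ref{thm:gauss-gen} to Theorem~\ref{thm:main-gauss} by an inductive/clustering argument on the number of vectors $n$. First, I would establish the lower bound: any valid common variable $(V,W)$ must render $X_1,\dots,X_n$ conditionally independent given $(V,W)$, which in particular forces conditional independence of $X_i$ and $X_{-i}$ given $(V,W)$ for every $i$. Applying the two-variable result (Theorem~\ref{thm:main-gauss}) to the pair $(X_i, X_{-i})$, the part of $W$ needed to decorrelate this pair has dimension at least $\mathrm{rank}(\mSigma_i) + \mathrm{rank}(\mSigma_{-i}) - \mathrm{rank}(\mSigma)$. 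The subtle point is that these "slices" of $W$ for different $i$ may overlap, so I cannot simply add them; instead I would argue via a dimension-counting / inclusion argument on the row spaces. Writing $W = \mN X$ with $\mN$ linear, the row space of $\mN$ must contain, for each $i$, a complement (within an appropriate space) accounting for the singularity between $X_i$ and $X_{-i}$; the total dimension is obtained by summing the "new" contributions, which telescopes to $\sum_i \mathrm{rank}(\mSigma_{-i}) - (n-1)\mathrm{rank}(\mSigma)$ once one notes $\mathrm{rank}(\mSigma_i) + \mathrm{rank}(\mSigma_{-i}) \ge \mathrm{rank}(\mSigma)$ and handles the finite-entropy slack $V$ correctly (the $V$ component absorbs exactly the non-$\mathcal{F}$-expressible finite randomness and contributes $0$ to the dimension).

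For the upper bound (achievability), I would exhibit the explicit $W$ produced by Algorithm~\ref{alg:gauss-gen} and verify two things: (a) $W$ is a linear function of $X$, hence in $\mathcal{F}$; and (b) conditioned on $(V,W)$ the vectors $X_1,\dots,X_n$ become mutually independent. For (b) the natural route is to peel off one vector at a time: show that conditioning on the appropriate linear slice of $W$ makes $X_n$ independent of $(X_1,\dots,X_{n-1})$ (this is exactly the two-variable construction $W = \mN_X X$ from Theorem~\ref{thm:main-gauss} applied to the split $X_n$ versus the rest), then recurse on the conditional distribution of $(X_1,\dots,X_{n-1})$ given that slice, which is again jointly Gaussian. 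Each peeling step adds $\mathrm{rank}(\mSigma^{(k)}_{-k}) + \mathrm{rank}(\mSigma^{(k)}_k) - \mathrm{rank}(\mSigma^{(k)})$ new dimensions where the superscript denotes the residual covariance after previous conditionings; the telescoping sum of these, together with bookkeeping that ranks of the residual marginals stabilize appropriately, must match the closed-form expression. I would need a lemma stating that conditioning a jointly Gaussian vector on a linear functional preserves the relevant rank identities, so that the recursion's running total can be rewritten in terms of the original $\mSigma_{-i}$ and $\mSigma$.

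The main obstacle I anticipate is the combinatorial/linear-algebra bookkeeping in matching the telescoping sum from the sequential construction to the symmetric closed form $\sum_{i=1}^n \mathrm{rank}(\mSigma_{-i}) - (n-1)\mathrm{rank}(\mSigma)$: the per-step increments are expressed via residual (conditional) covariances whose ranks change as we peel, and one must show these conditional ranks relate back to the original marginal ranks in just the right way (essentially, that $\mathrm{rank}(\mSigma) - \mathrm{rank}(\mSigma_{-i})$ counts a fixed "dimension deficit" independent of the order of peeling). A secondary difficulty is making the lower-bound overlap argument airtight — showing that the union of the $n$ required slices of $\mathrm{row}(\mN)$ has dimension at least the claimed quantity and not less due to unexpected coincidences; I expect this follows from a careful application of conditional independence (each $X_i \indep X_{-i} \mid (V,W)$) combined with the fact that $V$ is only one-dimensional with finite entropy and hence contributes nothing to the linear-dimension count, but the precise argument requires tracking which subspaces are "forced" versus "free." Once those rank identities are pinned down, both bounds should close and coincide, and the correctness of Algorithm~\ref{alg:gauss-gen} follows as the constructive witness.
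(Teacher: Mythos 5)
Your high-level plan — reduce to the two-variable theorem via a peeling argument and use a conditional-rank identity to close the telescoping sum — matches the spirit of the paper, and your achievability half is essentially correct: the paper likewise builds $W$ sequentially (accumulating, for each $i$, what $X_i$ reveals about the remaining coordinates that $X_1,\ldots,X_{i-1}$ did not already reveal), and the missing bookkeeping lemma you ask for is exactly Proposition~\ref{prop:cond-rank}, $\mbox{rank}(\mSigma_{X|Y}) = \mbox{rank}(\mSigma) - \mbox{rank}(\mSigma_Y)$, which converts each per-step conditional increment back into ranks of the original marginals.

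However, your lower-bound argument has a genuine gap. Applying Theorem~\ref{thm:main-gauss} to each split $(X_i, X_{-i})$ yields, for each $i$, a space of common linear functionals of dimension $\mbox{rank}(\mSigma_i) + \mbox{rank}(\mSigma_{-i}) - \mbox{rank}(\mSigma)$ that $W$ must determine. Summing these gives $\sum_i \mbox{rank}(\mSigma_i) + \sum_i \mbox{rank}(\mSigma_{-i}) - n\,\mbox{rank}(\mSigma)$, which exceeds the target bound $\sum_i \mbox{rank}(\mSigma_{-i}) - (n-1)\,\mbox{rank}(\mSigma)$ by exactly $\sum_i \mbox{rank}(\mSigma_i) - \mbox{rank}(\mSigma) \ge 0$. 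For example, if $X_1 = X_2 = X_3$ are identical scalars, the three pairwise constraints each have dimension $1$, but the actual CID is $1$, not $3$; the overcount is $2$. Your claim that the contributions ``telescope once one notes $\mbox{rank}(\mSigma_i) + \mbox{rank}(\mSigma_{-i}) \ge \mbox{rank}(\mSigma)$'' does not produce this cancellation — there is no generic inclusion-exclusion formula here, because the overlap between the pairwise common subspaces depends on the specific joint covariance. The paper resolves this by \emph{constructing} a specific collection $Z = [Z_1,\ldots,Z_n]$, where $Z_i$ captures only what is new relative to $X_1,\ldots,X_{i-1}$, then proving two structural facts: (i) each $Z_i$ is a linear function of $X_{-i}$ almost surely, so by Lemma~\ref{lem:main-gauss} any valid $W$ must determine $Z_i$; and (ii) $\mSigma_Z$ is full rank (Lemma~\ref{lem:Z-full}), so there is \emph{no} overlap among the $Z_i$'s, giving $d_W \ge \sum_i d_{Z_i}$. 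Without an analogue of Lemma~\ref{lem:Z-full} — a concrete certificate that the union of your ``forced'' subspaces has exactly the claimed dimension — the lower bound does not close, and ``careful tracking of forced versus free subspaces'' is precisely the content you would have to supply.
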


\begin{algorithm}
\caption{Algorithm to find $(V,W)$ }
\label{alg:gauss-gen}
\begin{algorithmic}[1]
\For{$i=1,\cdots,n$}
    \State Find $\mA_i$, a basis of the row space of $\mSigma_{i|1:i-1}$.
\State Define $U_i=\mA_i X_i$ (remove parts from $X_i$ that can be obtained from previous $X_1,\cdots,X_{i-1}$).
\State Find $\mB_i$, a basis of the row space of $\mSigma_{i+1:n|1:i-1}$.
\State Define $Y_i=\mB_i [X_{i+1},\cdots,X_n]$ (remove parts from $[X_{i+1},\cdots,X_n]$ that can be obtained from previous $X_1,\cdots,X_{i-1}$).
\State Find $\tilde{\mN}_i = [\mN_i\ \bar{\mN}_i]$, the null space of $\mSigma_{U_iY_i}$.
\State Let $Z_i=\mN_i U_i$ (the parts of $X_{i+1},\cdots,X_n$ that can be obtained from $X_i$ but cannot be obtained from $X_1,\cdots,X_{i-1}$).
\EndFor
\item Let $W=[Z_1,\cdots,Z_n]$.
\item Find $\mC_i$: basis for the covariance matrix of $X_i$ conditioned on $W$ (the parts of $X_i$ that cannot be obtained from $W$). Let $T_i=\mC_iX_i$.
\item Use \cite{li2017distributed} to get $V$ that breaks the dependency of $T_1,\cdots,T_n$ conditioned on $W$.
\end{algorithmic}
\end{algorithm}

\subsubsection{RCID of Jointly Gaussian Random Variables}
Our next result shows that for jointly Gaussian random variables RCID and CID are the same. The proof is provided at the end of Appendix~\ref{app:thm-gauss-gen}.
\begin{restatable}{lemma}{lemRCID}\label{lem:RCID}
    Let $[X_1,\cdots ,X_n]$ be a jointly Gaussian random vector. Then, the R{\'e}nyi common information dimension between $X_1,\cdots, X_n$ with respect to the class of  $\mathcal{F}$ of linear functions is given by
    $$d^R(X_1,\cdots,X_n) =d(X_1,\cdots,X_n).$$
\end{restatable}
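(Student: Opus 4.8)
The plan is to prove the two inequalities $d^R(X_1,\cdots,X_n)\le d(X_1,\cdots,X_n)$ and $d^R(X_1,\cdots,X_n)\ge d(X_1,\cdots,X_n)$ separately, using two elementary facts about the R\'enyi dimension. First, for \emph{any} $\mathbb{R}^{d_W}$-valued random vector, $d^R(W)\le d_W$ whenever the limit in \eqref{eq:renyi} exists. Second, and crucially, a zero-mean Gaussian vector $W$ has $d^R(W)=\mathrm{rank}(\mSigma_W)$: it is almost surely supported on the subspace $\mathrm{colspace}(\mSigma_W)$, on which it possesses a non-degenerate Gaussian density, so the argument of R\'enyi \cite{renyi1959dimension} (see also \cite{wu2010renyi}) gives dimension equal to that subspace's dimension. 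Every $W\in\mathcal{W}_{\mathcal{F}}$ is a linear image of the jointly Gaussian vector $X=[X_1,\cdots,X_n]$ and hence itself Gaussian, so this identity is available for all candidates.

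For the upper bound I would take $W^\star$ to be the minimizer produced by Theorem~\ref{thm:gauss-gen}, so that $W^\star\in\mathcal{W}_{\mathcal{F}}$ and $d_{W^\star}=d(X_1,\cdots,X_n)$. Since $W^\star$ is Gaussian, $d^R(W^\star)=\mathrm{rank}(\mSigma_{W^\star})\le d_{W^\star}=d(X_1,\cdots,X_n)$, and because RCID is an infimum over $\mathcal{W}_{\mathcal{F}}$ (Definition~\ref{def:RCID}) this already yields $d^R(X_1,\cdots,X_n)\le d(X_1,\cdots,X_n)$.

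For the lower bound it suffices to show $d^R(W)\ge d(X_1,\cdots,X_n)$ for every $W\in\mathcal{W}_{\mathcal{F}}$ and then take the infimum. Fix such a $W=g(X)$ with $g$ linear, let $r=\mathrm{rank}(\mSigma_W)$, and write $\mSigma_W=\mB\mB^\top$ with $\mB\in\mathbb{R}^{d_W\times r}$ of full column rank (e.g.\ from an eigendecomposition). Using the zero-mean convention, $W\in\mathrm{colspace}(\mB)$ almost surely, hence $W=\mB U$ a.s., where $U:=(\mB^\top\mB)^{-1}\mB^\top W$ is an $r$-dimensional vector with non-singular covariance; moreover $U$ is a linear function of $W$, and hence of $X$. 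Since $W$ and $U$ are almost surely deterministic linear images of one another, $\sigma(V,W)$ and $\sigma(V,U)$ coincide up to null sets, so $X_1\indep\cdots\indep X_n\mid(V,W)$ is equivalent to $X_1\indep\cdots\indep X_n\mid(V,U)$, and the same finite-entropy $V$ works. Thus $U\in\mathcal{W}_{\mathcal{F}}$ with $d_U=r$, forcing $d(X_1,\cdots,X_n)\le r=\mathrm{rank}(\mSigma_W)=d^R(W)$. Taking the infimum over $W\in\mathcal{W}_{\mathcal{F}}$ gives $d(X_1,\cdots,X_n)\le d^R(X_1,\cdots,X_n)$, and combining the two bounds finishes the proof.

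I expect the only delicate point to be this reduction from $W$ to its ``non-degenerate core'' $U$ in the lower bound: one has to verify that it preserves both constraints defining $\mathcal{W}_{\mathcal{F}}$ — linearity of the map $X\mapsto U$ (this is exactly where the zero-mean assumption enters, so that no affine offset appears) and the conditional-independence requirement (which survives because $\sigma(V,W)$ and $\sigma(V,U)$ agree up to null sets). Everything else is bookkeeping around the identities $d^R(\text{Gaussian})=\mathrm{rank}(\text{covariance})$ and $d^R(W)\le d_W$.
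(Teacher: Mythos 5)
Your proof is correct and takes essentially the same route as the paper: both directions hinge on observing that every feasible $W$ is Gaussian, extracting a non-degenerate lower-dimensional core (you via the factorization $\mSigma_W=\mB\mB^\top$ and $U=(\mB^\top\mB)^{-1}\mB^\top W$; the paper via Lemma~\ref{lem:1} selecting a sub-vector $W_I$), and invoking $d^R=\mathrm{rank}(\mSigma_W)$ for Gaussian vectors. The only structural difference is that you prove the lower bound $d^R(W)\ge d(X_1,\cdots,X_n)$ pointwise over all $W\in\mathcal{W}_\mathcal{F}$ and then take the infimum, which neatly bypasses the paper's preliminary observation that the infimum is attained because $d^R(W)\in\mathbb{N}$.
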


\subsubsection{GKCID of Jointly Gaussian Random Variables}\label{sec:gk}
Theorem~\ref{thm:gk} states the closed-form solution for GKCID of jointly Gaussian random variables. The proof of this theorem also gives an efficient method to construct $W$, given in \eqref{eq:gk-construct} in Appendix~\ref{app:thm-gk}, with the maximum information dimension.
\begin{restatable}{theorem}{thmgk}\label{thm:gk}
    Let $X=[X_1,\cdots,X_n]$ be a jointly Gaussian random vector. The GKCID between $X_1,\cdots,X_n$ with respect to the class of linear functions is given by
    \begin{equation}
        d^{GK}(X_1,\cdots,X_n) = r(\tilde{\mSigma})-\mbox{rank}(\tilde{\mSigma}),
    \end{equation}
    where $r(\tilde{\mSigma})$ is the number of rows of $\tilde{\mSigma}$, with
    \begin{equation}
        \tilde{\mSigma}= \begin{bmatrix}
            \mSigma_{X'_1X'_2}& 0  & \cdots & 0 \\
            0 & \mSigma_{X'_2X'_3} & \cdots & 0\\
            &  \cdots & & \\
            0 & 0 &  \cdots & \mSigma_{X'_{n-1}X'_n}\\
            \vzero_1\ \vone_2 & \vone_2\ \vzero_3 &  \cdots & \vzero_{n-1}\ \vzero_n\\
            \vzero_1\ \vzero_2 & \vzero_2\ \vone_3 &   \cdots & \vzero_{n-1}\ \vzero_n\\
            \vzero_1\ \vzero_2 & \vzero_2\ \vzero_3 & \cdots  & \vone_{n-1}\ \vzero_n
        \end{bmatrix},
    \end{equation}  
    $X'_i=\mF_iX_i, \forall i\in [n]$, $\mF_i$ is a basis of the row space of $\mSigma_{X_i}$, $\vzero_i \in \mathbb{R}^{1 \times d_{X_i'}}$ and $\vone_i \in \mathbb{R}^{1 \times d_{X_i'}}$ are all zeros (and ones respectively) row vectors with the same dimension as $X_i'$.
\end{restatable}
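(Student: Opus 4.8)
\textbf{Proof proposal for Theorem~\ref{thm:gk}.}

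The plan is to reduce the G\'acs--K\"orner common information dimension to a purely linear-algebraic rank computation by characterizing exactly which linear functionals can be extracted simultaneously from all of $X_1,\dots,X_n$. First I would observe that we may assume each $\mSigma_{X_i}$ is non-singular: since $\mathcal{F}$ is the class of linear functions, passing from $X_i$ to $X'_i=\mF_iX_i$ (with $\mF_i$ a basis of the row space of $\mSigma_{X_i}$) is an invertible linear reparametrization on the support of $X_i$, so it changes neither the set of extractable common variables nor their R\'enyi dimension. Thus we work with $X'_1,\dots,X'_n$, all having full-rank covariance. Next, a candidate common variable is $W=f_i(X'_i)=\mG_iX'_i$ for each $i$; since $W$ must be \emph{the same} random vector for every $i$, the natural necessary and sufficient condition for equality in the jointly Gaussian setting is that the pairwise cross-covariances match up in a way that forces $\mG_iX'_i=\mG_{i+1}X'_{i+1}$ almost surely. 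Concretely, $\mG_iX'_i = \mG_{i+1}X'_{i+1}$ a.s.\ iff $\mathbb{E}[(\mG_iX'_i-\mG_{i+1}X'_{i+1})(\mG_iX'_i-\mG_{i+1}X'_{i+1})^\top]=0$, which expands into conditions relating $\mG_i\mSigma_{X'_i}\mG_i^\top$, $\mG_{i+1}\mSigma_{X'_{i+1}}\mG_{i+1}^\top$, and $\mG_i\mSigma_{X'_iX'_{i+1}}\mG_{i+1}^\top$. By a change of basis making the rows of each $\mG_i$ orthonormal one can normalize these so the binding constraint becomes that the rows of $[\mG_1\ \dots\ \mG_n]$ lie in the left null space of a block matrix assembled from the $\mSigma_{X'_iX'_{i+1}}$ together with bookkeeping rows (the $\vzero_i,\vone_i$ blocks) that enforce that each ``copy'' of the extracted coordinate has matched variance/scale across the chain $1\to 2\to\cdots\to n$ — this is precisely the matrix $\tilde{\mSigma}$ displayed in the statement.

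The second main step is to count. The maximum number of independent linear functionals that can be extracted as a common variable equals the dimension of the left null space of $\tilde{\mSigma}$, i.e.\ $r(\tilde{\mSigma})-\mathrm{rank}(\tilde{\mSigma})$. Here I would argue both inequalities: (i) any valid tuple $(\mG_1,\dots,\mG_n)$ with the a.s.-equality property yields, after the normalization above, rows in $\ker\tilde{\mSigma}^\top$, so the number of linearly independent extractable coordinates is at most $r(\tilde{\mSigma})-\mathrm{rank}(\tilde{\mSigma})$; and (ii) conversely, given a basis of that left null space, the explicit construction in \eqref{eq:gk-construct} produces functions $f_i\in\mathcal{F}$ and a common $W$ of exactly that dimension — one reads off the $\mG_i$ blocks from the null-space vectors and checks, using that the bookkeeping rows are satisfied, that $f_i(X'_i)=f_j(X'_j)$ a.s.\ for all $i,j$. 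Finally, because $W$ so constructed is a nondegenerate linear (hence Gaussian) function of the $X_i$'s, Lemma~\ref{lem:RCID}-type reasoning (equivalently R\'enyi's result that a nondegenerate $\mathbb{R}^k$-valued Gaussian has information dimension $k$, and a degenerate one has dimension equal to the rank of its covariance) gives $d^R(W)$ equal to the number of independent coordinates; maximizing over all valid $(W,\{f_i\})$ then yields $d^{GK}(X_1,\dots,X_n)=r(\tilde{\mSigma})-\mathrm{rank}(\tilde{\mSigma})$.

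The step I expect to be the main obstacle is establishing that the ``chain'' structure in $\tilde{\mSigma}$ — matching only consecutive pairs $X'_i,X'_{i+1}$ plus the scalar bookkeeping rows — is genuinely equivalent to the global requirement that a single $W$ be extractable from \emph{all} $X_i$ simultaneously. A priori one might fear that pairwise consistency along a chain is weaker than joint consistency, or that one needs all $\binom{n}{2}$ pairs rather than $n-1$ of them; the resolution is that if $\mG_iX'_i=\mG_{i+1}X'_{i+1}$ a.s.\ for each $i=1,\dots,n-1$ then by transitivity of almost-sure equality all the $\mG_iX'_i$ coincide, so the chain suffices — but making this rigorous requires carefully tracking that the normalization/basis choices are compatible across the whole chain, which is exactly the role of the $\vone_i/\vzero_i$ rows (they prevent a scaling mismatch from accumulating along the chain). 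A secondary technical point is handling the reduction to non-singular marginals cleanly and checking that the R\'enyi dimension of the constructed $W$ really equals its ambient dimension (i.e.\ that the construction avoids producing a $W$ with degenerate covariance), which I would address by building $W$ from a basis of the null space so that its covariance is automatically full rank.
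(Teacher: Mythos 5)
Your proposal follows essentially the same route as the paper: pass to non-singular marginals $X'_i=\mF_iX_i$; characterize the extractable common $W=\mK_iX'_i$ through the chain of pairwise a.s.\ equalities (sufficient by transitivity), which via Lemma~\ref{lem:1} becomes membership of $[\mK_1\ {-\mK_2}\ \mK_2\ {-\mK_3}\ \cdots\ \mK_{n-1}\ {-\mK_n}]$ in the left null space of $\tilde{\mSigma}$; and then count using R\'enyi's result that a nondegenerate Gaussian vector has information dimension equal to its ambient dimension.

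There is one genuine gap worth flagging. Your closing remark that building $W$ from a basis of the null space makes its covariance ``automatically full rank'' is not correct as stated: full row rank of the basis matrix $\mN_{\tilde{\mSigma}}$ does \emph{not} by itself imply full row rank of its first sub-block $\mN_1$, and it is $\mathrm{rank}(\mN_1)$ — not $r(\mN_{\tilde{\mSigma}})$ — that controls $d^R(W)$ for $W=\mN_1X'_1$. The paper closes this with a short propagation argument that you would need to reproduce: if $\va^\top\mN_1=\vzero$ then $\va^\top\mN_1X'_1=0$ a.s., hence by the chain of a.s.\ equalities $\va^\top\mN_iX'_i=0$ a.s.\ for every $i$, hence $\va^\top\mN_i=\vzero$ because each $\mSigma_{X'_i}$ is non-singular, hence $\va^\top\mN_{\tilde{\mSigma}}=\vzero$, and so $\va=\vzero$; therefore $\mN_1$ is full row rank and $W$ is nondegenerate of dimension $r(\tilde{\mSigma})-\mathrm{rank}(\tilde{\mSigma})$. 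Also, the ``orthonormalize the rows of $\mG_i$'' device is unnecessary and not used by the paper — Lemma~\ref{lem:1} converts a.s.\ equality directly into a null-space condition with no normalization — so you can drop it.
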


The following corollary follows from Theorems~\ref{thm:gauss-gen} and~\ref{thm:gk}.
\begin{corollary}
    For two jointly Gaussian random variables $X_1,X_2$ we have that
     $   d(X_1,X_2) = d^{GK}(X_1,X_2).$
\end{corollary}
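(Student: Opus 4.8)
The plan is to derive the corollary purely by computing both closed-form expressions from Theorems~\ref{thm:gauss-gen} and~\ref{thm:gk} in the two-variable case $n=2$ and checking that they agree. By Theorem~\ref{thm:gauss-gen} with $n=2$, we have $d(X_1,X_2)=\mbox{rank}(\mSigma_{-1})+\mbox{rank}(\mSigma_{-2})-\mbox{rank}(\mSigma)=\mbox{rank}(\mSigma_{X_2})+\mbox{rank}(\mSigma_{X_1})-\mbox{rank}(\mSigma)$, since $\mSigma_{-1}$ is the covariance of $X_2$ alone and $\mSigma_{-2}$ is the covariance of $X_1$ alone. So the first step is just to specialize that formula.

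Next I would specialize Theorem~\ref{thm:gk} to $n=2$. The matrix $\tilde{\mSigma}$ collapses: there is a single cross-covariance block $\mSigma_{X'_1X'_2}$ and the additional indicator rows (the $\vzero,\vone$ rows that enforce the GKCID coupling constraints across consecutive variables) are vacuous for $n=2$ because there are no "interior" indices — one should verify from the displayed block structure that for $n=2$ the bottom rows contribute nothing beyond possibly a single constraint row, or more carefully, recount $r(\tilde{\mSigma})$ and $\mbox{rank}(\tilde{\mSigma})$ in this degenerate case. The cleanest approach is: $r(\tilde{\mSigma})$ equals the number of rows, which for $n=2$ is $r(\mSigma_{X'_1X'_2})$ (plus whatever indicator rows survive), and $\mbox{rank}(\tilde{\mSigma})$ is the rank of that block matrix. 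Since $X'_i=\mF_iX_i$ with $\mF_i$ a basis of the row space of $\mSigma_{X_i}$, the dimension of $X'_i$ is $\mbox{rank}(\mSigma_{X_i})$, so $r(\mSigma_{X'_1X'_2})=\mbox{rank}(\mSigma_{X_1})$ (it has as many rows as $X'_1$ has components). Then I would argue $\mbox{rank}(\tilde{\mSigma})=\mbox{rank}(\mSigma)-\mbox{rank}(\mSigma_{X_2})$ or an equivalent identity relating the rank of the whitened cross-covariance block to the joint rank, which after substitution yields $d^{GK}(X_1,X_2)=\mbox{rank}(\mSigma_{X_1})+\mbox{rank}(\mSigma_{X_2})-\mbox{rank}(\mSigma)$, matching $d(X_1,X_2)$.

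The main obstacle is getting the bookkeeping for $\tilde{\mSigma}$ exactly right in the degenerate $n=2$ case: the displayed matrix in Theorem~\ref{thm:gk} is written for general $n$ with a visibly $n\geq 3$ pattern of indicator rows, so one must be careful about how many indicator rows (if any) remain when $n=2$ and whether they affect $r(\tilde{\mSigma})-\mbox{rank}(\tilde{\mSigma})$. My expectation is that each indicator row, if present, is in the row span of the blocks above it (or is identically handled), so that it increments $r$ and $\mbox{rank}$ by the same amount and leaves the difference unchanged — but this needs to be checked against the construction in the proof of Theorem~\ref{thm:gk} in Appendix~\ref{app:thm-gk} rather than guessed. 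Once that is pinned down, the corollary is immediate: both quantities equal $\mbox{rank}(\mSigma_{X_1})+\mbox{rank}(\mSigma_{X_2})-\mbox{rank}(\mSigma)$, so $d(X_1,X_2)=d^{GK}(X_1,X_2)$.
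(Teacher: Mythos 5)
Your overall strategy---specialize the closed-form expressions of Theorems~\ref{thm:gauss-gen} and~\ref{thm:gk} to $n=2$ and observe that they coincide---is exactly what the paper intends by ``follows from Theorems~\ref{thm:gauss-gen} and~\ref{thm:gk}.'' The CID side is handled correctly. However, the bookkeeping for $\tilde{\mSigma}$ contains a genuine error, and the identity you invoke to patch it is false.

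You read the block $\mSigma_{X'_1X'_2}$ in Theorem~\ref{thm:gk} as the cross-covariance $\mathbb{E}[X'_1{X'_2}^\top]$, giving $r(\tilde{\mSigma})=d_{X'_1}=\mbox{rank}(\mSigma_{X_1})$, and then propose $\mbox{rank}(\tilde{\mSigma})=\mbox{rank}(\mSigma)-\mbox{rank}(\mSigma_{X_2})$. That rank identity does not hold: take $X_1\perp X_2$ scalar nondegenerate, so $\mbox{rank}(\mSigma_{X'_1X'_2})=0$ while $\mbox{rank}(\mSigma)-\mbox{rank}(\mSigma_{X_2})=2-1=1$; or take $X_1=X_2$ scalar, so $\mbox{rank}(\mSigma_{X'_1X'_2})=1$ while $\mbox{rank}(\mSigma)-\mbox{rank}(\mSigma_{X_2})=1-1=0$. (The identity you were reaching for is the \emph{conditional} one, Proposition~\ref{prop:cond-rank}: $\mbox{rank}(\mSigma_{X'_1|X'_2})=\mbox{rank}(\mSigma)-\mbox{rank}(\mSigma_{X_2})$, which is a different matrix.) So the cross-covariance reading cannot be made to work.

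The resolution is that $\mSigma_{X'_iX'_{i+1}}$ in the displayed $\tilde{\mSigma}$ denotes the \emph{joint} covariance matrix of the concatenated vector $[X'_i,X'_{i+1}]$, not the cross-covariance. This is a notational collision with the paper's earlier use of $\mSigma_{XY}$ for $\mathbb{E}[XY^\top]$, but it is forced by the structure: in the proof of Theorem~\ref{thm:gk}, $\mN_W=[\mK_1\ -\mK_2]$ has $d_{X'_1}+d_{X'_2}$ columns and is asserted to lie in the null space of $\tilde{\mSigma}$ via Lemma~\ref{lem:1}, which only makes sense if $\tilde{\mSigma}$ (for $n=2$) is the $(d_{X'_1}+d_{X'_2})\times(d_{X'_1}+d_{X'_2})$ joint covariance of $(X'_1,X'_2)$; and the column-group widths implied by the indicator rows $\vzero_i\ \vone_{i+1}$ are $d_{X'_i}+d_{X'_{i+1}}$, not $d_{X'_{i+1}}$. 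For $n=2$ there are no indicator rows (they exist only for the $n-2$ interior indices), so $\tilde{\mSigma}$ is precisely the joint covariance $\mSigma'$ of $(X'_1,X'_2)$. Since $X'_i=\mF_iX_i$ with $\mF_i$ a basis of the row space of $\mSigma_{X_i}$, we get $r(\tilde{\mSigma})=\mbox{rank}(\mSigma_{X_1})+\mbox{rank}(\mSigma_{X_2})$ and $\mbox{rank}(\tilde{\mSigma})=\mbox{rank}(\mSigma')=\mbox{rank}(\mSigma)$, whence $d^{GK}(X_1,X_2)=\mbox{rank}(\mSigma_{X_1})+\mbox{rank}(\mSigma_{X_2})-\mbox{rank}(\mSigma)=d(X_1,X_2)$ with no extra rank identity needed. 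You flagged the uncertainty yourself (``needs to be checked \dots rather than guessed''), and that flagged step is indeed where the argument as written fails.
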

This result does not extend to more than two variables: as the following example shows, GKCID can be strictly less than CID. We consider three random vectors $X_1,X_2,X_3$ with non-zero variance, $X_1=X_2$ a.s., and $X_3,[X_1,X_2]$ independent. A $W$ of dimension  equal to $d_{X_1}$ is required to break the dependency, hence, $d(X_1,X_2,X_3)=d_{X_1}$. However, as $X_3$ is independent of $X_1$, all functions with $f_1(X_1)=f_3(X_3)$ have zero entropy \cite{gacs1973common,witsenhausen1975sequences}, and zero information dimension.
Corollary~\ref{cor3} follows from the proof of Theorem~\ref{thm:gk} and Lemma~\ref{lem:main-gauss}.
\begin{corollary}\label{cor3}
    Let $X=[X_1,\cdots,X_n]$ be jointly Gaussian vectors, then
     $   d^{GK}(X_1,\cdots,X_n)\leq d(X_1,\cdots,X_n).$
\end{corollary}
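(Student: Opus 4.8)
The plan is to prove the inequality $d^{GK}(X_1,\dots,X_n)\le d(X_1,\dots,X_n)$ (for $n\ge 2$) by a direct structural argument rather than by comparing the closed-form rank expressions of Theorems~\ref{thm:gauss-gen} and~\ref{thm:gk}: I will show that any variable $W$ that can be written simultaneously as a \emph{linear} function of each $X_i$ is necessarily a deterministic linear function of any variable $W^\star$ that, together with a finite-entropy $V$, breaks the dependency, and that its R\'enyi dimension is therefore controlled by $d_{W^\star}$. Concretely, fix any $W$ with $W=f_i(X_i)$, $f_i\in\mathcal F$, for every $i$, and any $W^\star\in\mathcal W_{\mathcal F}$ with witness $(V,g)$, so that $H(V)<\infty$, $W^\star=g(X_1,\dots,X_n)$ is linear, and $X_1\indep\cdots\indep X_n\mid(V,W^\star)$; all variables are taken zero-mean. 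Since $W$ and $W^\star$ are both linear images of the jointly Gaussian vector $X$ (with no external randomness), the pair $(W,W^\star)$ is jointly Gaussian.

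\textbf{Step 1.} Conditioning on $(V,W^\star)$ makes $X_1$ and $X_2$ independent, hence the a.s.\ equal variables $f_1(X_1)$ and $f_2(X_2)$ are conditionally independent given $(V,W^\star)$, i.e.\ $W$ is conditionally independent of itself; a random variable conditionally independent of itself is conditionally a.s.\ constant, so by disintegration $W=h(V,W^\star)$ a.s.\ for some measurable $h$. \textbf{Step 2.} Because $H(V)<\infty$, $V$ is discrete with at most countable support, so conditionally on $W^\star=w^\star$ the variable $W$ takes values in the countable set $\{h(v,w^\star):v\in\mathrm{supp}(V)\}$; but joint Gaussianity of $(W,W^\star)$ forces $W\mid W^\star=w^\star$ to be Gaussian with covariance $\mSigma_{W}-\mSigma_{WW^\star}\mSigma_{W^\star}^{+}\mSigma_{W^\star W}$, independent of $w^\star$, and a Gaussian with nonzero covariance has uncountable support, so this covariance must vanish and $W=\mM W^\star$ a.s.\ with $\mM=\mSigma_{WW^\star}\mSigma_{W^\star}^{+}$. \textbf{Step 3.} Then $W$ is Gaussian with covariance $\mM\mSigma_{W^\star}\mM^\top$, so $d^R(W)=\text{rank}(\mM\mSigma_{W^\star}\mM^\top)\le\text{rank}(\mSigma_{W^\star})\le d_{W^\star}$, where the first equality is the fact that the R\'enyi dimension of a Gaussian equals the rank of its covariance \cite{renyi1959dimension}. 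As $W^\star$ ranges over $\mathcal W_{\mathcal F}$ this gives $d^R(W)\le\inf_{W^\star\in\mathcal W_{\mathcal F}}d_{W^\star}=d(X_1,\dots,X_n)$, and taking the supremum over all admissible $W$ yields the claim.

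The main obstacle is Steps 1--2: recognizing that the defining conditional-independence property of a dependency-breaking $(V,W^\star)$, together with $W$ being a common function of all the $X_i$, already pins $W$ down as a function of $(V,W^\star)$, and then using joint Gaussianity plus finiteness of $H(V)$ to eliminate the dependence on $V$ entirely. Once $W=\mM W^\star$ is established everything else is routine. An alternative, less conceptual route is to compare the rank formula of Theorem~\ref{thm:gauss-gen} with $r(\tilde{\mSigma})-\mbox{rank}(\tilde{\mSigma})$ from Theorem~\ref{thm:gk} directly, using the lemmas underlying Theorem~\ref{thm:main-gauss}; the argument above, however, is self-contained modulo the R\'enyi-dimension formula for Gaussians and works for any number of variables.
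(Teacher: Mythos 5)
Your proof is correct, and it is essentially the route the paper takes: the paper's remark that Corollary~\ref{cor3} ``follows from the proof of Theorem~\ref{thm:gk} and Lemma~\ref{lem:main-gauss}'' amounts to exactly your Steps~1--3. Your Step~1 is the first half of the proof of Lemma~\ref{lem:main-gauss} (conditional independence of $f_1(X_1)$ and $f_2(X_2)$ given $(V,W^\star)$, hence $W$ is conditionally degenerate and thus a function of $(V,W^\star)$). Your Step~3 is the way the proof of Theorem~\ref{thm:gk} converts the dimension bound into a bound on $d^R$. The only place you genuinely diverge is Step~2: the paper argues via entropy --- $H(W\mid W^\star)\le H(V)<\infty$, while a nondegenerate conditional Gaussian has infinite entropy --- whereas you argue via cardinality of supports --- $W\mid W^\star$ is a.s.\ valued in a countable set, while a nondegenerate conditional Gaussian has uncountable support. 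The two arguments both exploit finiteness of $H(V)$ (through countability of $\mathrm{supp}(V)$) against Gaussianity of $(W,W^\star)$, and they conclude identically that the conditional covariance vanishes, so $W=\mSigma_{WW^\star}\mSigma_{W^\star}^{+}W^\star$ a.s. Your support-cardinality version is, if anything, a bit more elementary since it avoids the dichotomy ``$h$ either $0$ or $\infty$'' and the chain-rule manipulation in \eqref{eq:H1}. One small presentational caveat: the ``conditionally independent of itself $\Rightarrow$ conditionally a.s.\ constant'' step and the passage to a measurable selector $h$ are stated somewhat compressed; the paper's proof of Lemma~\ref{lem:main-gauss} carries this out carefully via a nested hyper-rectangle argument, and you would want to at least gesture at such a construction in a final write-up. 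Overall this is a faithful and correct reconstruction of the intended argument, with a sensible local variation.
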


\subsection{Proofs of Theorem~\ref{thm:main-gauss} and \ref{thm:gauss-gen}}

\begin{table}[t!]
  \centering
  \caption{Table of notation for Theorem~\ref{thm:main-gauss}}
  \label{tab:maim-gauss}
  \begin{IEEEeqnarraybox}[\IEEEeqnarraystrutmode%
    ]{l'l}
    \toprule
    \textbf{Notation} & \textbf{Definition} \\
    \midrule
    X, Y & \text{jointly Gaussian random vectors} \\  
    \midrule
    \mSigma,\mSigma_{X}, \mSigma_{Y} & \text{covariance matrices of $[X,Y],X,Y$ respectively} \\
    \midrule
    \mN &  \text{basis of the null space of } \mSigma\\
    \midrule
    \mN_X, \mN_Y & \mN = \begin{bmatrix}\mN_X & -\mN_Y\end{bmatrix}   \hfill \hfill\refstepcounter{equation}\textup{(\theequation)} \label{eq:NX1}\\
    \midrule
    \mN_X', \mN_Y' & \text{basis of the complementary space of } \mN_X, \mN_Y \refstepcounter{equation}\textup{(\theequation)} \label{eq:NX2}\\
    \midrule
    \mM_X, \mM_Y & \mM_X = \begin{bmatrix} \mN_X \\ \mN_X' \end{bmatrix}, \mM_Y = \begin{bmatrix} \mN_Y \\ \mN_Y' \end{bmatrix}
    \hfill\refstepcounter{equation}\textup{(\theequation)} \label{eq:M} \\
    \bottomrule
  \end{IEEEeqnarraybox}
  \vspace{-.25in}
\end{table}

\noindent$\bullet$ \begin{proof}[ of Theorem~\ref{thm:main-gauss}]
We start by stating the following lemma that enables to discover deterministic relations between $X,Y$ by just examining the joint covariance matrix $\mSigma$. The proofs of all lemmas are in Appendix~\ref{app:thm-gauss-2}. 
\begin{restatable}{lemma}{lemONE}\label{lem:1}
Let $X = [X_1, X_2,...,X_n]$ be a {$d_X$}-dimensional random vector, $d_X=\sum_{i=1}^nd_{X_i}$, with zero mean and covariance matrix $\mSigma$. For any vectors $\va, \vb\in \mathbb{R}^{d_X}$, we have that 
$$
\va^\top X =  \vb^\top X \text{ almost surely, if and only if  }\va^\top \mSigma =  \vb^\top \mSigma.
$$
\end{restatable}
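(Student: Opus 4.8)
The plan is to prove both directions of the equivalence by reducing the almost-sure equality of the two linear functionals to an equality in the $L^2$ norm induced by $\mSigma$. First I would set $\vc = \va - \vb$, so that the claim becomes: $\vc^\top X = 0$ almost surely if and only if $\vc^\top \mSigma = \vzero^\top$. This is the natural normalization because the map $\vc \mapsto \vc^\top X$ is linear, and zero-mean-ness (assumed throughout the paper) means we do not have to track any additive constants.

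For the ``if'' direction, suppose $\vc^\top \mSigma = \vzero^\top$. Then the variance of the scalar random variable $\vc^\top X$ is exactly $\mathbb{E}[(\vc^\top X)^2] = \vc^\top \mSigma \vc = 0$ (using that $X$ has zero mean). A zero-mean random variable with zero variance is zero almost surely, so $\vc^\top X = 0$ a.s., i.e. $\va^\top X = \vb^\top X$ a.s. For the ``only if'' direction, suppose $\va^\top X = \vb^\top X$ a.s., i.e. $\vc^\top X = 0$ a.s. Then in particular $\mathbb{E}[(\vc^\top X)^2] = 0$, which is $\vc^\top \mSigma \vc = 0$. This alone gives only one scalar equation, so to conclude $\vc^\top \mSigma = \vzero^\top$ I would use positive semidefiniteness of $\mSigma$: write $\mSigma = \mSigma^{1/2}\mSigma^{1/2}$, so $\vc^\top \mSigma \vc = \|\mSigma^{1/2}\vc\|_2^2 = 0$ forces $\mSigma^{1/2}\vc = \vzero$, and hence $\mSigma \vc = \mSigma^{1/2}(\mSigma^{1/2}\vc) = \vzero$, equivalently $\vc^\top\mSigma = \vzero^\top$. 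Alternatively, and perhaps cleaner, note that $\vc^\top X = 0$ a.s. implies $\mathbb{E}[(\vc^\top X)(\ve_k^\top X)] = 0$ for every standard basis vector $\ve_k$, and the left side is $\vc^\top \mSigma \ve_k = (\vc^\top\mSigma)_k$; ranging over $k$ gives $\vc^\top\mSigma = \vzero^\top$ directly.

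There is no real obstacle here; the only subtlety worth a sentence is the use of the square root of $\mSigma$ (or, equivalently, the Cauchy--Schwarz-type argument via covariances), which requires $\mSigma \succeq 0$ — true for any covariance matrix. I would present the ``only if'' direction using the basis-vector covariance argument since it avoids introducing $\mSigma^{1/2}$ at all and makes the equivalence completely elementary. One remark: the statement as used later (e.g. in the proof of Theorem~\ref{thm:main-gauss}, where one identifies the row space of $\mSigma$ with the linear relations among the coordinates of $[X,Y]$) does not actually need Gaussianity — only zero mean and finite second moments — so the lemma is stated for a general random vector, and the proof should reflect that (no appeal to the Gaussian density or to independence).

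\begin{proof}[ of Lemma~\ref{lem:1}]
Set $\vc = \va - \vb \in \mathbb{R}^{d_X}$. Since the statement is linear, it suffices to show that $\vc^\top X = 0$ almost surely if and only if $\vc^\top \mSigma = \vzero^\top$.

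Suppose first that $\vc^\top\mSigma = \vzero^\top$. Then, using that $X$ has zero mean,
\[
\mathbb{E}\big[(\vc^\top X)^2\big] = \vc^\top \mathbb{E}[X X^\top]\, \vc = \vc^\top \mSigma\, \vc = 0,
\]
so the zero-mean random variable $\vc^\top X$ has zero variance and is therefore equal to $0$ almost surely.

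Conversely, suppose $\vc^\top X = 0$ almost surely. Then for every coordinate $k \in \{1,\ldots,d_X\}$, letting $\ve_k$ denote the $k$-th standard basis vector, we have $(\vc^\top X)(\ve_k^\top X) = 0$ almost surely, hence
\[
0 = \mathbb{E}\big[(\vc^\top X)(\ve_k^\top X)\big] = \vc^\top \mathbb{E}[X X^\top]\, \ve_k = \vc^\top \mSigma\, \ve_k = (\vc^\top \mSigma)_k.
\]
Since this holds for all $k$, we conclude $\vc^\top\mSigma = \vzero^\top$. Substituting $\vc = \va - \vb$ gives the claimed equivalence.
\end{proof}
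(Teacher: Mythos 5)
Your proof is correct and follows essentially the same two-step argument as the paper: expanding $\mathbb{E}[((\va-\vb)^\top X)^2]=(\va-\vb)^\top\mSigma(\va-\vb)$ for one direction, and taking covariances with the coordinates of $X$ (the paper phrases this as multiplying by $X^\top$ and taking expectations) for the other. The normalization $\vc=\va-\vb$ and the coordinate-wise presentation are cosmetic; the substance matches the paper's proof.
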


\begin{restatable}{corollary}{corone}\label{cor:1}
There is a subset $I \subseteq \{1,...,d_X\}$ such that $|I|=\mbox{rank}(\mSigma_X)$, and $X_I \indep Y | (V,W)$ if and only if $ X \indep Y | (V,W)$.
\end{restatable}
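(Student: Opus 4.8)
The goal is to prove Corollary~\ref{cor:1}: that one can replace $X$ by a sub-vector $X_I$ with $|I| = \mathrm{rank}(\mSigma_X)$ while preserving the conditional independence statement $X \indep Y \mid (V,W)$. I will focus on the case of jointly Gaussian variables (the setting of Theorem~\ref{thm:main-gauss}), where $X_I$ can be chosen so that $X = \mathA X_I$ almost surely for some deterministic matrix $\mathA$.

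\textbf{Plan.} First I would pick $I$ by a greedy rank argument: since $\mathrm{rank}(\mSigma_X) = r$, there is a size-$r$ subset $I$ of the coordinates of $X$ such that the principal submatrix $\mSigma_{X_I}$ (the covariance of $X_I$) is nonsingular and has the same rank $r$ as $\mSigma_X$. The key linear-algebra fact I would then invoke is that every coordinate $X_j$ of $X$ lies almost surely in the span of $\{X_i : i \in I\}$; concretely, there is a matrix $\mathA$ with $X = \mathA X_I$ almost surely. This follows from Lemma~\ref{lem:1}: because the rows of $\mSigma$ corresponding to coordinates outside $I$ are linear combinations of the rows of $\mSigma$ corresponding to $I$ (this is where I need that $\mathrm{rank}(\mSigma_X) = \mathrm{rank}(\mSigma_{X_I})$ forces the $X$-rows to be spanned by the $I$-rows, and the cross-covariance rows follow because $\mSigma \succeq 0$ so its row space is determined by its $X$-block together with consistency), Lemma~\ref{lem:1} upgrades these row identities to almost-sure identities $X_j = (\mathA)_j X_I$.

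\textbf{The two directions.} With $X = \mathA X_I$ a.s. in hand, the forward direction is immediate: if $X \indep Y \mid (V,W)$, then since $X_I$ is a (coordinate-selection) deterministic function of $X$, conditional independence is preserved under taking functions of one side, so $X_I \indep Y \mid (V,W)$. For the converse, suppose $X_I \indep Y \mid (V,W)$. I would argue that conditioned on $(V,W)$ the pair $(X,Y)$ has the same joint law as the pair $(\mathA X_I, Y)$ conditioned on $(V,W)$ — but this requires that $X = \mathA X_I$ holds a.s. \emph{also conditionally on $(V,W)$}, which is fine because an a.s. equality of random variables remains a.s. under any conditioning (the event $\{X = \mathA X_I\}$ has probability one, hence conditional probability one for almost every value of $(V,W)$). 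Then $X$ is a deterministic function of $(X_I, V, W)$, in fact of $X_I$ alone, so the conditional independence of $X_I$ and $Y$ given $(V,W)$ transfers directly: for any test functions, $\E[\phi(X)\psi(Y)\mid V,W] = \E[\phi(\mathA X_I)\psi(Y)\mid V,W] = \E[\phi(\mathA X_I)\mid V,W]\,\E[\psi(Y)\mid V,W] = \E[\phi(X)\mid V,W]\,\E[\psi(Y)\mid V,W]$.

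\textbf{Main obstacle.} The only real subtlety is establishing $X = \mathA X_I$ almost surely — i.e., that dropping down to a maximal-rank coordinate subset loses no information about $X$ even after conditioning. The clean way is to first note it unconditionally via Lemma~\ref{lem:1} (purely a statement about the covariance $\mSigma_X$ and the Gaussianity of $X$), and then observe that an a.s. relation is preserved under conditioning. One should be a little careful that $I$ is chosen with respect to $\mSigma_X$ (the marginal), not $\mSigma$, and check that this choice is compatible with the construction of $\mN, \mN_X$ used later in the proof of Theorem~\ref{thm:main-gauss}; I would state the corollary's $I$ as "any size-$\mathrm{rank}(\mSigma_X)$ subset making $\mSigma_{X_I}$ nonsingular" and verify such a subset exists by the standard fact that a symmetric PSD matrix of rank $r$ has an $r\times r$ nonsingular principal submatrix.
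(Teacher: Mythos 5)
Your proposal is correct and follows essentially the same route as the paper's (very terse) proof: pick $I$ so that $\mSigma_{X_I}$ is a full-rank $r \times r$ principal submatrix, invoke Lemma~\ref{lem:1} to get $X = \mB X_I$ almost surely, and then conclude the equivalence of the two conditional independence statements because each of $X$ and $X_I$ is (a.s.) a deterministic function of the other. The extra detail you add about the cross-covariance rows and $\mSigma \succeq 0$ is unnecessary — it suffices to apply Lemma~\ref{lem:1} directly to $X$ with its marginal covariance $\mSigma_X$ — but it does no harm.
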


Therefore, without loss of generality, we assume that $X$ and $Y$ are non-singular, which implies that $d_X = \mbox{rank}(\mSigma_X)$, and $d_Y = \mbox{rank}(\mSigma_Y)$.

Let $\mN \in \mathbb{R}^{r(\mN) \times (d_X+d_Y)}$ be a basis of the left null space of $\mSigma$. We next show some properties for the matrix $\mN$ that will help us to prove the theorem. By definition of $\mN$, we have the following facts
\begin{align}
    \label{eq:N1}
    & \textbf{Fact 1.} ~ \mN \mSigma = \vzero. \\ 
    \label{eq:N2}
    & 
    \begin{aligned}
    \textbf{Fact 2.} ~ & \mbox{rank}(\mN) = d_X + d_Y - \mbox{rank}(\mSigma)
    = \mbox{rank}(\mSigma_X) + \mbox{rank}(\mSigma_Y) - \mbox{rank}(\mSigma).
    \end{aligned}
\end{align}

Using Lemma \ref{lem:1} and \eqref{eq:N1}, we have that $\mN \begin{bmatrix} X^\top & Y^\top\end{bmatrix}^\top = \vzero$ almost surely. We partition $\mN$ as $\begin{bmatrix}\mN_X & -\mN_Y\end{bmatrix}$, where $\mN_X \in \mathbb{R}^{r(\mN)\times d_X}$ and $\mN_Y \in \mathbb{R}^{r(\mN)\times d_Y}$. Then we have that
\begin{equation}\label{eq:NX}
    \mN_X X = \mN_Y Y.
\end{equation}
$\mN$ is full-row-rank by definition; in the following, we show that $\mN_X$ (and similarly $\mN_Y$) are also full-row-rank.
\begin{restatable}{lemma}{lemrankNx}\label{lem:rankNx} 
Let $[X,Y]$ be a random vector with covariance matrix $\mSigma$ and $\mN=\begin{bmatrix}
    \mN_X & -\mN_Y\end{bmatrix}$ be a basis for the null space of $\mSigma$, where $\mN\in \mathbb{R}^{r(\mN)\times (d_X+d_Y)}$, $\mN_X\in \mathbb{R}^{r(\mN)\times d_X}$, and $ \mN_Y\in \mathbb{R}^{r(\mN)\times d_Y}$. If $X$ and $Y$ are non-singular (i.e., $\mSigma_X,\mSigma_Y$ are full-rank),  we have that 
    \begin{equation}\label{eq:rankNx}
    \mbox{rank}(\mN_X) = \mbox{rank}(\mN_Y) = \mbox{rank}(\mN).
    \end{equation}
\end{restatable}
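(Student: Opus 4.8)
# Proof Proposal for Lemma~\ref{lem:rankNx}

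The plan is to show $\mbox{rank}(\mN_X) = \mbox{rank}(\mN)$; the claim for $\mN_Y$ follows by the symmetric argument, and then the chain of equalities in \eqref{eq:rankNx} is immediate. Since $\mN_X$ is obtained from $\mN$ by deleting columns, we automatically have $\mbox{rank}(\mN_X)\le \mbox{rank}(\mN)$, so the work is entirely in the reverse inequality. Equivalently, since $\mN$ has $r(\mN)$ rows and is full row rank by construction, it suffices to show that $\mN_X$ has trivial left null space: if $\vu^\top \mN_X = \vzero$ for some $\vu \in \mathbb{R}^{r(\mN)}$, then $\vu = \vzero$.

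First I would fix such a $\vu$ with $\vu^\top \mN_X = \vzero$. Multiplying the almost-sure identity \eqref{eq:NX}, namely $\mN_X X = \mN_Y Y$, on the left by $\vu^\top$ gives $\vzero = \vu^\top \mN_X X = \vu^\top \mN_Y Y$ almost surely. Hence $\vu^\top \mN_Y Y = 0$ a.s., which by taking second moments (or directly by Lemma~\ref{lem:1} applied to the vector $Y$ with $\va = \mN_Y^\top \vu$ and $\vb = \vzero$) forces $\vu^\top \mN_Y \mSigma_Y = \vzero$. Now I invoke the non-singularity hypothesis: $\mSigma_Y$ is full rank, hence invertible, so $\vu^\top \mN_Y = \vzero$. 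Combined with $\vu^\top \mN_X = \vzero$, this yields $\vu^\top \mN = \vu^\top \begin{bmatrix}\mN_X & -\mN_Y\end{bmatrix} = \vzero$. But $\mN$ is a basis for the null space of $\mSigma$ and therefore full row rank, so $\vu = \vzero$, as desired. This establishes $\mbox{rank}(\mN_X) = \mbox{rank}(\mN)$.

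By interchanging the roles of $X$ and $Y$ (equivalently, replacing $\mN_X$-analysis with $\mN_Y$-analysis and using that $\mSigma_X$ is full rank), the identical argument gives $\mbox{rank}(\mN_Y) = \mbox{rank}(\mN)$, completing the proof. The only place the hypothesis is used is the inversion of $\mSigma_Y$ (resp.\ $\mSigma_X$), and indeed the conclusion can fail without it—e.g., if $X$ has a degenerate coordinate, a nonzero combination of rows of $\mN$ can vanish on the $X$-block while being nonzero on the $Y$-block. I do not anticipate a real obstacle here; the one point requiring a little care is the clean reduction to showing the left null space of $\mN_X$ is trivial, and making sure Lemma~\ref{lem:1} is applied to the marginal vector $Y$ (with its own covariance $\mSigma_Y$) rather than to the joint vector.
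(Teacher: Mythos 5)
Your proof is correct and follows essentially the same route as the paper's: both reduce to showing the left null space of $\mN_X$ is trivial, then pass a putative null vector $\vu$ through the identity $\mN_X X = \mN_Y Y$, apply Lemma~\ref{lem:1} to conclude $\vu^\top \mN_Y \mSigma_Y = \vzero$, and use full-rankness of $\mSigma_Y$ and of $\mN$ to force $\vu = \vzero$. The only superficial difference is that the paper frames it as a proof by contradiction while you argue directly.
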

Next, we define two square non-singular matrices $\mM_X$ and $\mM_Y$ as $
    \mM_X = \begin{bmatrix} \mN_X^\top & \mN_X'^\top \end{bmatrix}^\top \in \mathbb{R}^{d_X \times d_X}$ and $\mM_Y = \begin{bmatrix} \mN_Y^\top & \mN_Y'^\top \end{bmatrix}^\top \in \mathbb{R}^{d_Y \times d_Y}$,
where $\mN_X', \mN_Y'$ are a basis for the complementary space of $\mN_X$ and $\mN_Y$, respectively. {Lemma~\ref{lem:M1}, and Lemma~\ref{lem:cond} show two properties of the quantities $\mM_X,\mM_Y,\mN_X', \mN_Y'$ which we prove in the Appendix~\ref{app:thm-gauss-2}.} 

\begin{restatable}{lemma}{lemMone}\label{lem:M1}
Let $\mM_X\in \mathbb{R}^{d_X\times d_X}, \mM_Y\in \mathbb{R}^{d_Y\times d_Y}$ be full-rank matrices, and $X,Y,V,W$ be random vectors of dimension $d_X$, $d_Y$, $d_V$, and $d_W$ respectively.
We have that
    $$
    \mM_X X \indep \mM_Y Y | (V,W) \text{ if and only if } X \indep Y |(V,W).
    $$
\end{restatable}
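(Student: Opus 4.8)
The plan is to exploit the fact that conditional independence is invariant under applying an invertible, bimeasurable map to either argument, and that a full-rank matrix induces exactly such a map. Concretely, I would first recall the standard characterization: for random vectors $A,B$ and conditioning variable $Z$, we have $A \indep B \mid Z$ if and only if for all bounded measurable functions $h,k$,
\[
\mathbb{E}\big[h(A)\,k(B)\mid Z\big] = \mathbb{E}\big[h(A)\mid Z\big]\,\mathbb{E}\big[k(B)\mid Z\big] \qquad \text{a.s.}
\]
(Regular conditional distributions exist here since all spaces are Euclidean, hence standard Borel, so there is no issue making this precise.)

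Next I would use that $\mM_X$ is full-rank, hence invertible, so $x \mapsto \mM_X x$ is a bijection of $\mathbb{R}^{d_X}$ with continuous — in particular measurable — inverse, and similarly for $\mM_Y$. The key observation is that the class $\{\,h \circ \mM_X : h \text{ bounded measurable}\,\}$ coincides with the class of all bounded measurable functions on $\mathbb{R}^{d_X}$, since precomposition with the bijection $\mM_X$ (and with $\mM_X^{-1}$) maps this class onto itself; the same holds with $\mM_Y$.

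For the forward direction, assume $X \indep Y \mid (V,W)$. For arbitrary bounded measurable $h,k$, write $h(\mM_X X) = (h \circ \mM_X)(X)$ and $k(\mM_Y Y) = (k \circ \mM_Y)(Y)$, and apply the characterization above with $A = X$, $B = Y$, $Z = (V,W)$ to the functions $h \circ \mM_X$ and $k \circ \mM_Y$. This yields $\mathbb{E}[h(\mM_X X)\,k(\mM_Y Y)\mid V,W] = \mathbb{E}[h(\mM_X X)\mid V,W]\,\mathbb{E}[k(\mM_Y Y)\mid V,W]$ a.s., i.e.\ $\mM_X X \indep \mM_Y Y \mid (V,W)$. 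The converse is obtained by running the identical argument with the full-rank matrices $\mM_X^{-1}$ and $\mM_Y^{-1}$ applied to $\mM_X X$ and $\mM_Y Y$, using $\mM_X^{-1}(\mM_X X) = X$ and $\mM_Y^{-1}(\mM_Y Y) = Y$.

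I do not anticipate a genuine obstacle: the only point needing a little care is the measure-theoretic bookkeeping with conditional expectations, which is routine. As a remark one could note that in the jointly Gaussian setting used elsewhere in the paper an even shorter proof is available via conditional characteristic functions — conditional independence of jointly Gaussian $X,Y$ given $(V,W)$ is equivalent to block-diagonality of the relevant conditional covariance, which is manifestly preserved under the block-diagonal transformation $\mathrm{diag}(\mM_X,\mM_Y)$ — but I would keep the general measurable-bijection argument as the main proof since Lemma~\ref{lem:M1} is stated for arbitrary random vectors.
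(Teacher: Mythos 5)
Your proof is correct and uses essentially the same idea as the paper: both exploit that a full-rank linear map is an invertible bimeasurable bijection, so conditional independence is preserved by applying it (and its inverse) to each argument. The paper phrases this with conditional probabilities of preimages of Borel sets while you phrase it with bounded measurable test functions; these are equivalent characterizations of conditional independence, so the arguments are the same in substance.
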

\begin{restatable}{lemma}{lemcond}\label{lem:cond}
    Let $\mN_X, \mN_Y, \mN_X', \mN_Y'$ be as defined in \eqref{eq:NX1} and \eqref{eq:NX2}. Conditioned on $\mN_X X$, we have that $\begin{bmatrix}(\mN_X'X)^\top & (\mN_Y'Y)^\top\end{bmatrix}^\top$ has full-rank covariance matrix.
\end{restatable}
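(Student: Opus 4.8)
The plan is to show that conditioning on $\mN_X X$ removes no randomness from the combined vector $\begin{bmatrix}(\mN_X'X)^\top & (\mN_Y'Y)^\top\end{bmatrix}^\top$ beyond what is already forced, so that its conditional covariance stays full rank. First I would use the fact that, since $\mM_X = \begin{bmatrix}\mN_X^\top & \mN_X'^\top\end{bmatrix}^\top$ is non-singular, the vector $\mM_X X = \begin{bmatrix}(\mN_X X)^\top & (\mN_X'X)^\top\end{bmatrix}^\top$ is a non-degenerate Gaussian vector, so conditioned on $\mN_X X$ the component $\mN_X' X$ is Gaussian with a covariance matrix equal to the Schur complement of the $\mN_X X$ block, which is full rank precisely because $\mM_X \mSigma_X \mM_X^\top$ is full rank (here I use that $\mSigma_X$ is non-singular, which we reduced to w.l.o.g.). So the obstacle is not $\mN_X' X$ alone; it is the joint vector that also includes $\mN_Y' Y$.

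The key step is to argue that no nonzero linear combination of $\begin{bmatrix}(\mN_X'X)^\top & (\mN_Y'Y)^\top\end{bmatrix}^\top$ becomes deterministic (i.e., constant) after conditioning on $\mN_X X$. Suppose $\va^\top \mN_X' X + \vb^\top \mN_Y' Y$ has zero conditional variance given $\mN_X X$; equivalently, $\va^\top \mN_X' X + \vb^\top \mN_Y' Y = \vc^\top \mN_X X$ almost surely for some $\vc$, i.e. the vector $\begin{bmatrix}(\mN_X'^\top\va - \mN_X^\top\vc)^\top & \mN_Y'^\top\vb^\top\end{bmatrix}$ (appropriately transposed) lies in the left null space of $\mSigma$ by Lemma~\ref{lem:1}. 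Since $\mN = \begin{bmatrix}\mN_X & -\mN_Y\end{bmatrix}$ is a basis of that null space, this row must be a linear combination $\vd^\top \mN = \begin{bmatrix}\vd^\top\mN_X & -\vd^\top\mN_Y\end{bmatrix}$. Matching the two blocks: the $X$-block gives $\mN_X'^\top\va - \mN_X^\top\vc = \mN_X^\top\vd$, and because $\mN_X'$ spans the complementary space of the row space of $\mN_X$, the only way a combination of rows of $\mN_X'$ equals a combination of rows of $\mN_X$ is if $\va = \vzero$; the $Y$-block gives $\mN_Y'^\top\vb = -\mN_Y^\top\vd$, and the same complementarity argument for $\mN_Y, \mN_Y'$ forces $\vb = \vzero$. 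Hence no nonzero deterministic combination exists, so the conditional covariance of $\begin{bmatrix}(\mN_X'X)^\top & (\mN_Y'Y)^\top\end{bmatrix}^\top$ given $\mN_X X$ is full rank.

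One has to be a little careful about the direction of the complementarity argument: $\mN_X'$ being ``a basis for the complementary space of $\mN_X$'' means the row spaces of $\mN_X$ and $\mN_X'$ are independent and together span $\mathbb{R}^{d_X}$, so the sum decomposition of any row of $\mathbb{R}^{1\times d_X}$ into an $\mN_X$-part and an $\mN_X'$-part is unique; that uniqueness is exactly what kills $\va$ and $\vb$. I would also note that, by Lemma~\ref{lem:rankNx}, $\mN_X$ and $\mN_Y$ are full-row-rank, so $\mN_X'$ and $\mN_Y'$ are genuinely nonempty complements and the matrices $\mM_X, \mM_Y$ are indeed square and non-singular as claimed; without non-singularity of $\mSigma_X,\mSigma_Y$ these steps break, which is why the w.l.o.g. reduction (Corollary~\ref{cor:1}) is needed first.

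I expect the main obstacle to be bookkeeping the block-matching cleanly — translating ``zero conditional variance'' into ``membership in the left null space of $\mSigma$'' via Lemma~\ref{lem:1}, and then invoking the uniqueness of the complementary decomposition for $\mN_X$ and $\mN_Y$ simultaneously. Everything else (Schur complement, Gaussianity of linear images) is routine given the earlier lemmas.
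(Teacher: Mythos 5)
Your proof is correct, and it attacks the same underlying fact as the paper — uniqueness of the null-space basis $\mN$ plus complementarity of $\mN_X/\mN_X'$ and $\mN_Y/\mN_Y'$ — but it is organized differently. The paper first uses the identity $\mN_X X = \mN_Y Y$ to fold the conditional-mean term into the $Y$ side, arriving at $\va_X^\top\mN_X'X = \vb^\top\mM_Y Y$, and then invokes the separately stated Lemma~\ref{lem:M2}, whose proof contains the null-space/complementarity argument. You instead keep $\vc^\top\mN_X X$ on the $X$ side, observe directly that the zero-variance coefficient row lies in the left null space of $\mSigma$, expand it in the basis $\mN$, and block-match to kill $\va$ and $\vb$ via the two complementary decompositions. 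Your route has a small advantage: it is uniform in all cases, whereas the paper's final appeal to Lemma~\ref{lem:M2} literally requires the $X$-block coefficient (denoted $\va_X$ there) to be nonzero; the case $\va_X = \vzero, \va_Y \neq \vzero$ is handled implicitly (it follows easily from $\mM_Y Y$ having non-singular covariance, but the paper does not spell it out). Note also a small notational slip on your part: in the block-matching you write equations like $\mN_X'^\top\va - \mN_X^\top\vc = \mN_X^\top\vd$, which are the transposes of the row-vector identities; the reasoning is fine, but a reader would want these stated consistently as row relations $\va^\top\mN_X' = (\vc+\vd)^\top\mN_X$ and $\vb^\top\mN_Y' = -\vd^\top\mN_Y$ before invoking complementarity.
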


We are now ready to prove Theorem~\ref{thm:main-gauss}. We first show that the common information dimension is upper bounded as $d(X,Y) \leq \mbox{rank}(\mSigma_X) + \mbox{rank}(\mSigma_Y) - \mbox{rank}(\mSigma)$. 
Consider $\mN_XX$ as a possible $W$. Conditioned on $\mN_XX$, both $\mN_XX$ and $\mN_YY$ are deterministic, since $\mN_XX = \mN_YY$ from \eqref{eq:NX}. Hence, conditioned on $\mN_XX$, breaking the dependency between $\mM_XX$ and $\mM_YY$ reduces to breaking the dependency between $\mN_X' X$ and $\mN_Y' Y$ conditioned on $\mN_XX$. 

From Lemma~\ref{lem:cond}, we have that conditioned on $\mN_X X$, $\begin{bmatrix}(\mN_X'X)^\top & (\mN_Y'Y)^\top\end{bmatrix}^\top$ is jointly Gaussian with full-rank covariance matrix. Hence, by the result in \cite{li2017distributed}, there is a $V_W$ with $H(V_W)<\infty$ such that $\mN'_XX \indep \mN'_YY | (W, V_W)$, where $W=\mN_XX$. Since the covariance matrix of $\begin{bmatrix}(\mN_X'X)^\top & (\mN_Y'Y)^\top\end{bmatrix}^\top$ conditioned on $\mN_XX$ does not depend on the value of $\mN_XX$ and is only a function of the covariance matrix of $\mN_XX$,  $V_W$ can be  the same for all $W$, and we can refer to $V_W$ as $V$. This shows that $\mM_XX \indep \mM_YY | (\mN_XX, V)$. By Lemma~\ref{lem:M1}, $X \indep Y | (\mN_XX, V)$. Thus,
\begin{equation}
\begin{aligned}
    d(X,Y) & \leq d_{\mN_XX} \stackrel{(i)}{=} ~ \mbox{rank}(\mN_X) \stackrel{\eqref{eq:rankNx}}{=} ~ \mbox{rank}(\mN) \stackrel{\eqref{eq:N1}}{=} ~ \mbox{rank}(\mSigma_X)+\mbox{rank}(\mSigma_Y)-\mbox{rank}(\mSigma),
    \label{eq:result1}
\end{aligned}
\end{equation}
where $(i)$ follows since $\mN_X$ is full rank by Lemma~\ref{lem:rankNx}.

Next, we prove in Lemma~\ref{lem:main-gauss} the other direction, that $d(X,Y) \geq \mbox{rank}(\mN)$.  At the heart of the lemma, we prove that if there is a common function that can be extracted from both $X,Y$, namely, $f_X(X)=f_Y(Y)$ a.s. for some $f_X$, $f_Y$, then for $(V,W)$ to break the $X,Y$ dependency, $f_X(X)$ (and hence $f_Y(Y)$) is a deterministic function of $(V,W)$. We also show that if $f_X,f_Y$ are linear and $W=\mA\begin{bmatrix} X^\top & Y^\top\end{bmatrix}^\top$ for some $\mA$, then $d_W\geq d_{f_X(X)}$.
\begin{lemma}\label{lem:main-gauss}
    Let $[X,Y]$ be a jointly Gaussian random vector and  $V,W$ be  random variables such that $W=\mA\begin{bmatrix} X^\top & Y^\top\end{bmatrix}^\top$ for some matrix $\mA$, $H(V)\leq \infty$ and $X\indep Y| (V,W)$. Let matrix $\mN_X$ be such that $\mN_XX$ is non-singular. If there exists matrix $\mN_Y$ such that $\mN_XX=\mN_YY$ a.s., then $\mN_XX=\mA' W$ a.s. for some matrix $\mA'$ and $d_W\geq d_{\mN_XX}$.
\end{lemma}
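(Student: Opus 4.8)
The plan is to isolate the common linear statistic $C:=\mN_XX=\mN_YY$, which holds almost surely by hypothesis, to note that $C$ is (by the non-singularity assumption on $\mN_XX$) a Gaussian vector with full-rank covariance, so that its dimension $d_{\mN_XX}$ equals $\mathrm{rank}(\mSigma_C)$, and then to prove the two assertions of the lemma in the order: (i) $C$ is a deterministic function of $(V,W)$; (ii) $C$ is in fact a linear function of $W$ alone; (iii) the dimension bound $d_W\ge d_{\mN_XX}$ follows from a rank inequality for covariance matrices.

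\emph{Step (i).} Since $C$ is measurable with respect to $\sigma(X)$ and also with respect to $\sigma(Y)$, and $X\indep Y\mid(V,W)$, the conditional cross-covariance given $(V,W)$ of $C$ viewed as a function of $X$ with $C$ viewed as a function of $Y$ vanishes; that is, $\Cov(C\mid V,W)$ is the zero matrix. Hence $\mathbb{E}\bigl[\,\|C-\mathbb{E}[C\mid V,W]\|^2\,\bigr]=0$, so $C=h(V,W)$ almost surely for some measurable $h$. This step uses only conditional independence and finiteness of the second moment of $C$ (a linear image of a Gaussian vector), not Gaussianity of $(C,W)$ or $H(V)<\infty$.

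\emph{Step (ii).} Since $W=\mA[X^\top\ Y^\top]^\top$ and $C=\mN_XX$ are both linear images of the jointly Gaussian vector $[X^\top\ Y^\top]^\top$, the pair $(C,W)$ is jointly Gaussian; write $C=\mA'W+\tilde C$ with $\mA'W:=\mathbb{E}[C\mid W]$ the (linear, using the pseudo-inverse of $\mSigma_W$ when needed) conditional mean, and $\tilde C$ a Gaussian vector independent of $W$ with covariance $\mSigma_{C\mid W}$. Suppose for contradiction that $\mSigma_{C\mid W}\neq\vzero$ and pick $\vu$ with $\vu^\top\mSigma_{C\mid W}\vu>0$. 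Because $H(V)<\infty$, $V$ is a discrete random variable with countable support. Passing to the regular conditional law given $W=w$: on one hand, independence of $\tilde C$ and $W$ forces $\vu^\top\tilde C$ to have a non-degenerate Gaussian, hence non-atomic, law under this conditional measure for almost every $w$; on the other hand, the identity $\tilde C=h(V,W)-\mA'W$ gives $\vu^\top\tilde C=g_w(V)$ almost surely under the same conditional measure, where $g_w(v):=\vu^\top h(v,w)-\vu^\top\mA'w$, so that $\vu^\top\tilde C$ has countable support there and its law is purely atomic --- a contradiction. Therefore $\mSigma_{C\mid W}=\vzero$, i.e. $C=\mA'W$ almost surely, which is the first conclusion.

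\emph{Step (iii) and the main obstacle.} From $C=\mA'W$ almost surely we get $\mSigma_C=\mA'\mSigma_W\mA'^\top$, hence $d_{\mN_XX}=\mathrm{rank}(\mSigma_C)\le\mathrm{rank}(\mSigma_W)\le d_W$, using that $\mN_XX$ is non-singular for the first equality. The main obstacle is Step (ii): ruling out that the finite-entropy side variable $V$ supplies a Gaussian component of $C$ that is not already a linear function of $W$. The disintegration-plus-atomicity argument above is the crux, and the only property of $V$ it uses is that finiteness of its entropy makes it discrete.
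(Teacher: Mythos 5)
Your proof is correct, and it reaches the paper's conclusion by a route that differs in the two technical steps while sharing the same high-level structure (first show $\mN_XX$ is determined by $(V,W)$, then eliminate $V$ via the Gaussian structure of $(\mN_XX,W)$, then a rank count). For the first step, the paper argues by contradiction that $\mathbb{P}[\mN_XX\in\sS|(V,W)]\in\{0,1\}$ for all $\sS$ and then constructs the deterministic function $g(V,W)$ explicitly via a nested sequence of shrinking hyper-rectangles and continuity from above; your conditional-covariance identity $\Var(C\mid V,W)=\Cov(f(X),g(Y)\mid V,W)=\vzero$ gets to $C=\E[C\mid V,W]$ almost surely in one line and is cleaner, at the cost of invoking second moments (which are of course available here). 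For the second step, the paper writes $H(\mN_XX|W)\le H(V|W)\le H(V)<\infty$ and then uses the dichotomy that a Gaussian conditional law has entropy $0$ or $\infty$; you instead disintegrate along $W=w$ and derive a contradiction between the atomless Gaussian law of $\tilde C$ (forced by independence from $W$) and the countably supported law of $g_w(V)$ (forced by discreteness of $V$). These are two ways of exploiting the same dichotomy for Gaussians (degenerate vs. atomless), and both hinge on $H(V)<\infty$ forcing $V$ to be discrete. Your Step (iii) is the same rank inequality as the paper's, just written via $\mSigma_C=\mA'\mSigma_W\mA'^\top$ rather than via $\mN_X\mSigma_X=\mB\,\mathbb{E}[WX^\top]$; both use the full-rank assumption on $\mSigma_{\mN_XX}$ in the same way. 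No gaps.
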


\begin{proof}[ of Lemma~\ref{lem:main-gauss}]
First, we show that $\mN_XX$ is a deterministic function of $(V,W)$. Suppose towards a contradiction that there is a set $\sS\subseteq \mathbb{R}^{r(\mN)}$ such that $0<\mathbb{P}[\mN_XX\in\sS |(V,W)]<1$. Since $\mN_XX=\mN_YY$ a.s.,  we have that
\begin{equation}
    \mathbb{P}[\mN_XX\in \sS, \mN_YY\in \sS^C| (V,W)] = 0,
\end{equation}
where $\sS^C$ is the complement of $\sS$. However, from $0<\mathbb{P}[\mN_XX\in \sS|(V,W)]<1$ we get that 
\begin{equation}
\begin{aligned}
    & \mathbb{P}[\mN_XX\in \sS| (V,W)] \mathbb{P}[\mN_YY\in \sS^C| (V,W)] \\
    = \ & \mathbb{P}[\mN_XX\in \sS| (V,W)] \mathbb{P}[\mN_XX\in \sS^C| (V,W)] \neq 0.
\end{aligned}
\end{equation}
This implies that
\begin{equation}\label{eq:indep-contr}
\begin{aligned}
    & \mathbb{P}[\mN_XX\in \sS, \mN_YY\in \sS^C| (V,W)] \\
    \neq \ & \mathbb{P}[\mN_XX\in \sS| (V,W)] \mathbb{P}[\mN_YY\in \sS^C| (V,W)].
\end{aligned}
\end{equation}
However, as functions of independent random variables are independent, we have that $\mN_XX$ and $\mN_YY$ are conditionally independent given $(V,W)$; being projections of $\mM_XX, \mM_YY$. This contradicts \eqref{eq:indep-contr}. 

This implies that for any $\sS\subseteq \mathbb{R}^{r(\mN)}$ we either have $\mathbb{P}[\mN_XX\in \sS| (V,W)]=1$ or $\mathbb{P}[\mN_XX\in \sS| (V,W)]=0$. We show next that this implies that $\mN_XX$ is a deterministic function of $(V,W)$.

As the interval from $(-\infty, \infty)$ can be partitioned into countably many sets of the form $(0+m,1+m]$, by countable additivity of measures we get that there is a cube of the form $\sS=\prod_{i=1}^{r(\mN)}(0+m_i,1+m_i]$ that has $\mathbb{P}[\mN_XX\in \sS| (V,W)]=1$. If we repeatedly halve one of the largest dimensions of the cube we get a sequence of hyper-rectangles $...\subseteq \sR_2 \subseteq \sR_1$ such that $\mathbb{P}[\mN_XX\in \sR_i| (V,W)]=1, \forall i =1,2,...$ and $\cap_{i\in \mathbb{N}}\sR_i$ contains exactly one member. The last fact is proved in the following. We notice that $\cap_{i\in \mathbb{N}}\sR_i$ contains at most one member because for any two points $x_1,x_2\in \mathbb{R}^{r(\mN)}$, there is some $i$ such that the largest dimension of $\sR_i$ is less than $\|x_1-x_2\|_2$ which implies that at most one point of $x_1,x_2$ can be in $R_i$. It is also not possible that $\cap_{i\in \mathbb{N}}\sR_i$ is empty as by the continuity from above of finite measures, we have that
\begin{equation}
    \mathbb{P}[\mN_XX \in \cap_{i\in \mathbb{N}}\sR_i | (V,W)] = 1.
\end{equation}
Therefore, $\cap_{i\in \mathbb{N}}\sR_i$ must contain a single member. Let us denote the unique point in $\cap_{i\in \mathbb{N}}\sR_i$ by $g(V,W)$, where $g$ is a deterministic function. Then, we have that $\mathbb{P}[\mN_XX = g(V,W) | (V,W)] = 1$.
Hence, we have that
\begin{align}\label{eq:H1}
    H(\mN_XX|W) &\leq H(\mN_XX, V|W)\nonumber\\
    & = H(\mN_XX|(V,W)) + H(V|W)\nonumber \\
    & = H(V|W) \leq H(V) < \infty.
\end{align}
Since $W = \mA\begin{bmatrix} X^\top & Y^\top\end{bmatrix}^\top$, we have that $(\mN_XX, W)$ follows a jointly Gaussian distribution. As a result, conditioned on $W$, we have that $\mN_XX$ is also jointly Gaussian, whose entropy is either 0 (for zero variance) or $\infty$. Based on \eqref{eq:H1}, it must be that $H(\mN_XX|W)=0$. Hence, we have that $\mN_XX = \mB W$,
for some $\mB \in \mathbb{R}^{\mbox{rank}(\mN)\times d_W}$. And as a result,
    $\mN_X\mSigma_X = \mB\mathbb{E}(W X^\top)$.
Then we have that
\begin{equation}
\begin{aligned}
    \mbox{rank}(\mN_X\mSigma_X) & \stackrel{(i)}{=} \mbox{rank}(\mN_X) \stackrel{\eqref{eq:rankNx}}{=} \mbox{rank}(\mN) \leq \mbox{rank}(\mB) \leq d_W,
\end{aligned}    
\label{eq:result2}
\end{equation}
where $(i)$ follows from  the fact that $\mN_XX$ has full rank covariance matrix.
\end{proof}
Combining \eqref{eq:result1} and \eqref{eq:result2}, we conclude that
\begin{equation}
\begin{aligned}
    d(X,Y) = \mbox{rank} (\mN) \nonumber = \mbox{rank}(\mSigma_X) + \mbox{rank}(\mSigma_Y) - \mbox{rank}(\mSigma). 
\end{aligned}
\end{equation} 
\end{proof}

\noindent $\bullet$ \begin{proof}[ Outline of Theorem~\ref{thm:gauss-gen}]
We here give a proof outline, and provide the complete proof in App.~\ref{app:thm-gauss-gen}. 
The main part of the proof, illustrated in Algorithm~\ref{alg:gauss-gen}, constructs variables $Z=[Z_1,\cdots,Z_n]$ that satisfy:\\
\textbf{$(i)$}  Conditioned on $Z$, the dependency between $X_1,\cdots,X_n$ can be broken using finite randomness (see Lemma~\ref{lem:T-full} in App.~\ref{app:thm-gauss-gen}
). This is proved by showing that after eliminating from $X$ the parts that can be almost surely determined by $Z$, the remaining part is jointly non-singular Gaussian. This shows that CID is upper bounded by the total number of dimensions of $Z$.\\
\textbf{$(ii)$} For any $V,W$ that break the dependency between $X_1,\cdots,X_n$, we have that $Z$ is a linear function of $W$ (see Lemma~\ref{lem:Z-full} in App.~\ref{app:thm-gauss-gen}
). By showing that $Z$ is a jointly non-singular Gaussian vector, we prove that the dimension of $W$ is lower bounded by the dimension of $Z$, hence, CID is lower bounded by the number of dimensions of $Z$.

 
We build $Z$ sequentially as follows. $Z_1$ represents the information that $X_1$ contains about $[X_2,\cdots,X_n]$; namely, the linearly independent dimensions of $X_2,\cdots,X_n$ that can be determined from $X_1$. Then, $Z_2$ contains the amount of information that $X_2$ contains about $X_3,\cdots,X_n$ that $X_1$ does not contain. Generally, $Z_i$ contains the information that $X_i$ contains about $X_{i+1},\cdots,X_n$ which is not contained in any of the previous $X_1, \cdots, X_{i-1}$.
\end{proof}
\section{The relation between CID and Wyner's Common Information} \label{sec:approx}
 In this section,  we restrict our attention to the case of two Gaussian random vectors, and formulate three approximation problems,  to investigate the asymptotic behavior of Wyner's common information in the (nearly) infinity regime.  Our results indicate that the growth rates of the approximate common information in all these scenarios are determined by the common information dimension.

 \begin{boldremark}\label{rm:diff}
 We make two observations related to our formulations:\\
 1) We restrict our attention to two  Gaussian random vectors because, as mentioned before, the calculation of Wyner's common information is challenging, and the closed-form solution for continuous random vectors (of arbitrary dimension) is only known for two Gaussian sources, while a closed-form expression for multiple Gaussian sources remains open.\\
 2) We do not consider the GK-version of the approximation problem since the GK common information has an inherent discontinuity. In particular, it is easy to see that if the Gaussian sources are singular, then the GK common information is infinite (as is the case for the Wyner as well), however, if they are approximated by any non-singular Gaussian distribution, then the GK common information of the approximate distribution is zero. Hence, the GK version of the common information is not suitable for such approximations. Thus in this section, we exclusively focus on the Wyner version. For simplicity, we drop the subscripts in $C_{\text{Wyner}}(X,Y)$ and use $C(X,Y)$ instead to denote the Wyner's common information.

\end{boldremark}
\subsection{Problem Statements} \label{sec:ps}
\subsubsection{Common information of nearly singular sources}
This formulation aims to study the growth rate of the common information for a sequence of pairs of random variables that approach joint singularity. In particular, let $X, Y$ be Gaussian random variables with $\text{rank}(\mSigma) < \text{rank}(\mSigma_{X})+\text{rank}(\mSigma_{Y})$, and hence, $d(X,Y)\geq 1$ and  $C(X,Y)=\infty$.  Let $\{(X_\epsilon,Y_\epsilon)\}_{\epsilon>0}$ be a sequence of Gaussian  random variables satisfying
\begin{equation}\label{def: seq}
    \mSigma_{X_\epsilon} = \mSigma_X, \mSigma_{Y_\epsilon} = \mSigma_Y, \text{ and } \forall\; i, \;\; |\rho_i(\epsilon) - \sigma_i| = \epsilon, \
\end{equation}
where $\{\sigma_i\}$ and $\{\rho_i(\epsilon)\}$ are the singular values of $\mSigma_{X}^{-1/2}\mSigma_{X Y}\mSigma_{Y}^{-1/2}$ and $\mSigma_{X_\epsilon}^{-1/2}\mSigma_{X_\epsilon Y_\epsilon}\mSigma_{Y_\epsilon}^{-1/2}$ respectively, in a decreasing order. 
These requirements ensure that $(X_\epsilon,Y_\epsilon)$ remain non-singular (and thus have finite common information), while the joint distribution of $(X_\epsilon,Y_\epsilon)$  converges to that of $X,Y$ as $\epsilon \downarrow 0$.

\begin{boldremark} 
    The conditions in \eqref{def: seq} force each singular value of $\mSigma_{X_\epsilon}^{-1/2}\mSigma_{X_\epsilon Y_\epsilon}\mSigma_{Y_\epsilon}^{-1/2}$ to go to the corresponding singular value of $\mSigma_{X}^{-1/2}\mSigma_{X Y}\mSigma_{Y}^{-1/2}$ at an identical rate $\epsilon$. This enables us to study how the common information increases as a function of  $\epsilon$. 
    There exist however other cases where the same results apply, and we next give two examples. 

    One case is when we have different convergence rates for each singular value. The second condition in \eqref{def: seq} can be replaced with $\forall i, \; |\rho_i(\epsilon) - \sigma_i| = a_i\epsilon$, where $a_i$'s are constants with different values.
    provided that these rates are of the same order, meaning that they differ only by a multiplicative constant. 

    A second case is when we only approximate the singular values that equal to $1$. Recall that from \eqref{eq:wyner}, the common information is infinite when $\sigma_i=1$ for some $i$. Consider a sequence of covariance matrices that have singular values satisfying \eqref{def: seq} when $\sigma_i=1$ and share the same singular values with the target distribution for all other indices, i.e., $\rho_i=\sigma_i$ when $\sigma_i \neq 1$. It is easy to see that the same results we establish assuming the condition in \eqref{def: seq} holds, also extend for the described sequence as well.
\end{boldremark}

\subsubsection{Approximate simulation}
This formulation looks at approximating a pair of  Gaussian random variables $X, Y$ that are jointly singular ($C(X,Y)=\infty$) with Gaussian random variables $\hat{X}, \hat{Y}$ that (i) are non-singular ($C(\hat{X}, \hat{Y})$ is finite)  and (ii) have a distribution close to the distribution of $X, Y$. In other words, we ask,  if we are restricted to using a finite number of bits as common information, how well can we (approximately) simulate $X, Y$.

We use the Frobenius-norm between covariance matrices to measure how close two Gaussian distributions are.
For some $\epsilon > 0$, we define the {\em $\epsilon$-approximation common information} as
\begin{equation}\label{def: opt}
        C_\epsilon(X,Y) := \min_{\|\mSigma - \hat{\mSigma}\|_F\le \epsilon} C(\hat{X},\hat{Y}),
\end{equation}
where the optimization is over all pairs $(\hat{X},\hat{Y})$ with covariance matrix $\hat{\mSigma}$ and $\|\cdot\|_F$ is the Frobenius norm of a matrix. 
\begin{boldremark}
    The results on $C_\epsilon(X,Y)$ extend if we replace the Frobenius norm with any distribution distance $\text{dist}(XY,\hat{X}\hat{Y})$ that satisfies $a\|\mSigma - \hat{\mSigma}\|_F \le \text{dist}(XY,\hat{X}\hat{Y}) \le b\|\mSigma - \hat{\mSigma}\|_F$ for all Gaussian variables $XY,\hat{X}\hat{Y}$ and some constants $a$ and $b$.
\end{boldremark} 
\begin{boldremark}\label{rm:diff}
    Note that formulation 1 in \eqref{def: seq} studies a more restricted set of sequences than the sequences included in the feasible set of the optimization problem in \eqref{def: opt}. However, the result we show for formulation 1 is stronger as it holds for {\em all sequences} that satisfy the condition in \eqref{def: seq}. In contrast, the results in formulation 2 only hold for the sequence with the minimum common information (that achieves the optimal value of the minimization problem). It can be easily shown that there exist sequences in the feasible set of formulation~2 that have different asymptotics. For example, if some singular values of the approximation matrix take the value $1$ or approach $1$ at a rate different from $\Theta(\epsilon)$(e.g., $\epsilon^2$ or $2^\epsilon$).
\end{boldremark}

\subsubsection{Common information between quantized variables}
In this formulation, we study Wyner's common information between quantized continuous random vectors. Let $X$ and $Y$ be a pair of jointly singular Gaussian random vectors with $C(X,Y) = \infty$. For $m>0$, we use $\langle X \rangle_m = [\langle X_1 \rangle_m, \cdots, \langle X_{d_X} \rangle_m]^\top$ and $\langle Y \rangle_m = [\langle Y_1 \rangle_m, \cdots, \langle Y_{d_Y} \rangle_m]^\top$ to denote the quantized $X$ and $Y$, where the quantization operator on each element is defined as $\langle X_i \rangle_m = \frac{\floor{mX_i}}{m}$ and $\langle Y_i \rangle_m = \frac{\floor{mY_i}}{m}$. Note that $C(\langle X \rangle_m, \langle Y \rangle_m) < \infty$ for all $0 < m < \infty$, and as $m$ approaches $\infty$, $\langle X \rangle_m$ and $\langle Y \rangle_m$ converges to $X$ and $Y$ respectively. We are interested in the growth rate of their common information $C(\langle X \rangle_m, \langle Y \rangle_m)$.


\subsection{Asymptotic Behavior of Wyner's Common Information}\label{sec:main}

In this section, we present our main results and proof outlines for the three formulations described in Section~\ref{sec:ps}. The detailed proofs are provided in Appendices~\ref{sec:pf-thm1} and ~\ref{sec:pf-thm2}.

Before stating our main results, we present two properties of covariance matrices and the common information dimension, which are important to the proofs of Theorems \ref{thm:seq} and \ref{thm:appr}. As stated in \eqref{eq:wyner} the common information is determined by the singular values of the normalized cross-covariance matrix $\mSigma_X^{-1/2}\mSigma_{XY}\mSigma_Y^{-1/2}$. Lemma \ref{lm:range} proves a bound on these singular values.

\begin{restatable}{lemma}{lemrange}\label{lm:range}
    Let $X \in \mathbb{R}^{d_X}$ and $Y \in \mathbb{R}^{d_Y}$ be jointly Gaussian variables with covariance matrix $\mSigma = \left[\begin{matrix} \mSigma_{X} & \mSigma_{XY}^\top \\ \mSigma_{XY} & \mSigma_{Y}\end{matrix}\right]$, and $d = \min\{d_X, d_Y\}$. Then the singular values of $\mSigma_{X}^{-1/2}\mSigma_{XY}\mSigma_{Y}^{-1/2}$, denoted as $\{\sigma_i\}_{i=1}^d$, satisfy 
    \begin{equation}
        0 \le \sigma_i \le 1, \forall i \in [d].
    \end{equation}
\end{restatable}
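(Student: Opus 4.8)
The plan is to prove that the singular values of the normalized cross-covariance matrix $\mSigma_X^{-1/2}\mSigma_{XY}\mSigma_Y^{-1/2}$ all lie in $[0,1]$ by exploiting positive semidefiniteness of the full joint covariance matrix $\mSigma$. Nonnegativity is immediate since singular values are by definition nonnegative, so the content is the upper bound $\sigma_i\le 1$.

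First I would reduce to the nonsingular case. By Lemma~\ref{lem:1} (and Corollary~\ref{cor:1}) we may assume $\mSigma_X$ and $\mSigma_Y$ are invertible, replacing $X,Y$ by suitable linear images that discard redundant coordinates; this does not change the nonzero singular values of the normalized cross-covariance matrix, and the zero singular values trivially satisfy the claimed bound. With $\mSigma_X,\mSigma_Y\succ 0$, define $\tilde\mSigma_{XY}=\mSigma_X^{-1/2}\mSigma_{XY}\mSigma_Y^{-1/2}$. The key step is to apply the congruence transformation $\mathrm{diag}(\mSigma_X^{-1/2},\mSigma_Y^{-1/2})$ to $\mSigma$, which yields
\begin{equation}
\begin{bmatrix} \mI & \tilde\mSigma_{XY}^\top \\ \tilde\mSigma_{XY} & \mI \end{bmatrix}\succeq 0,
\end{equation}
since congruence preserves positive semidefiniteness.

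Next I would extract the bound from this $2\times 2$ block positive-semidefinite matrix. Let $\sigma$ be any singular value of $\tilde\mSigma_{XY}$ with left/right singular vectors $\vu,\vv$, i.e. $\tilde\mSigma_{XY}\vv=\sigma\vu$ and $\tilde\mSigma_{XY}^\top\vu=\sigma\vv$, with $\|\vu\|=\|\vv\|=1$. Testing the block matrix against the vector $[\,\vu^\top\ \ -\vv^\top\,]^\top$ gives
\begin{equation}
\begin{bmatrix}\vu^\top & -\vv^\top\end{bmatrix}\begin{bmatrix} \mI & \tilde\mSigma_{XY}^\top \\ \tilde\mSigma_{XY} & \mI \end{bmatrix}\begin{bmatrix}\vu \\ -\vv\end{bmatrix} = \|\vu\|^2 + \|\vv\|^2 - 2\vu^\top\tilde\mSigma_{XY}\vv = 2 - 2\sigma \ge 0,
\end{equation}
hence $\sigma\le 1$. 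Applying this to each singular value, indexed in decreasing order as $\{\sigma_i\}_{i=1}^d$ with $d=\min\{d_X,d_Y\}$, finishes the argument.

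I expect the only real subtlety — and hence the ``main obstacle'' such as it is — to be the bookkeeping around the singular/pseudoinverse case: one must justify that passing to nonsingular marginals via Lemma~\ref{lem:1} leaves the relevant singular values unchanged (or only adds zeros), consistent with the paper's convention that $\mSigma_X^{-1/2},\mSigma_Y^{-1/2}$ are read as pseudoinverses when needed. Everything else is a short congruence-plus-test-vector computation. An alternative to the test-vector step, if one prefers, is to diagonalize via the SVD $\tilde\mSigma_{XY}=\mU\mathrm{diag}(\sigma_i)\mV^\top$ and apply the congruence $\mathrm{diag}(\mU^\top,\mV^\top)$ to the block matrix, reducing it to a direct sum of $2\times 2$ blocks $\left[\begin{smallmatrix}1 & \sigma_i\\ \sigma_i & 1\end{smallmatrix}\right]$, each of which is PSD iff $\sigma_i\le 1$.
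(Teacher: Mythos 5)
Your proof is correct and is essentially the paper's own argument stated in linear-algebraic rather than probabilistic language: the paper transforms to $X'=\mU^\top\mSigma_X^{-1/2}X$, $Y'=\mV^\top\mSigma_Y^{-1/2}Y$ (which is exactly your composite congruence $\mathrm{diag}(\mU^\top\mSigma_X^{-1/2},\mV^\top\mSigma_Y^{-1/2})$) and then applies Cauchy--Schwarz to each coordinate pair, $\sigma_i^2=\mathrm{cov}(X_i',Y_i')^2\le\mathrm{var}(X_i')\,\mathrm{var}(Y_i')=1$. That Cauchy--Schwarz step is precisely the positive semidefiniteness of the $2\times2$ block $\left[\begin{smallmatrix}1&\sigma_i\\ \sigma_i&1\end{smallmatrix}\right]$ that your test-vector (or block-diagonal) argument isolates, so the two proofs coincide up to phrasing.
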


The following lemma shows the relationship between the common information dimension and the singular values of $\mSigma_X^{-1/2}\mSigma_{XY}\mSigma_Y^{-1/2}$, which will enable us to connect the quantities $C(X_\epsilon, Y_\epsilon)$ and $C_\epsilon(X,Y)$ with the common information dimension $d(X,Y)$.

\begin{restatable}{lemma}{lemone}\label{lm:one}
    Assume $X \in \mathbb{R}^{d_X}$, $Y \in \mathbb{R}^{d_Y}$ are jointly Gaussian variables with covariance matrix $\mSigma = \left[\begin{matrix} \mSigma_{X} & \mSigma_{XY}^\top \\ \mSigma_{XY} & \mSigma_{Y}\end{matrix}\right]$, and $\{\sigma_i\}$ are the singular values of $\mSigma_{X}^{-1/2}\mSigma_{XY}\mSigma_{Y}^{-1/2}$. Then the common information dimension between $X$ and $Y$, with respect to linear functions, satisfies
    \begin{equation}
        d(X,Y) = \sum_{i=1}^{\min\{d_X, d_Y\}} \mathbbm{1}\{\sigma_i=1\}.
    \end{equation}
\end{restatable}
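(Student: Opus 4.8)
\textbf{Proof proposal for Lemma~\ref{lm:one}.}
The plan is to connect the rank-based formula for $d(X,Y)$ from Theorem~\ref{thm:main-gauss} with the singular value decomposition of the normalized cross-covariance matrix. By Theorem~\ref{thm:main-gauss} (and the reduction in Corollary~\ref{cor:1}), we may assume $\mSigma_X,\mSigma_Y$ are non-singular, so that $d(X,Y) = d_X + d_Y - \mbox{rank}(\mSigma)$. The task is therefore to show that $\mbox{rank}(\mSigma) = d_X + d_Y - \#\{i: \sigma_i = 1\}$, equivalently that the dimension of the null space of $\mSigma$ equals the number of singular values of $R := \mSigma_X^{-1/2}\mSigma_{XY}\mSigma_Y^{-1/2}$ that are equal to $1$.

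First I would perform the standard whitening change of variables: write $\tilde X = \mSigma_X^{-1/2}X$, $\tilde Y = \mSigma_Y^{-1/2}Y$, so the covariance matrix of $[\tilde X, \tilde Y]$ is $\tilde{\mSigma} = \left[\begin{matrix} \mI & R^\top \\ R & \mI\end{matrix}\right]$. Since $\mSigma$ and $\tilde{\mSigma}$ differ by pre- and post-multiplication by the invertible block-diagonal matrix $\mbox{diag}(\mSigma_X^{-1/2},\mSigma_Y^{-1/2})$, they have the same rank, so it suffices to compute $\mbox{rank}(\tilde{\mSigma})$. Next I would take the singular value decomposition $R = \mU \mathrm{diag}(\sigma_1,\dots,\sigma_d)\mV^\top$ (padded appropriately when $d_X \neq d_Y$), and apply the orthogonal change of basis $\mathrm{diag}(\mU,\mV)$ to $\tilde{\mSigma}$; this reduces $\tilde{\mSigma}$ to a matrix that, after a simple permutation of coordinates, is block-diagonal with $2\times 2$ blocks $\left[\begin{matrix} 1 & \sigma_i \\ \sigma_i & 1\end{matrix}\right]$ (one for each paired index $i \le d$) together with $1\times 1$ blocks equal to $1$ (for the unpaired coordinates when $d_X \neq d_Y$). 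The determinant of the $i$-th $2\times 2$ block is $1-\sigma_i^2$, which is zero exactly when $\sigma_i = 1$ (using Lemma~\ref{lm:range} that $\sigma_i \in [0,1]$), and nonzero otherwise; the $1\times 1$ blocks are always full rank. Hence the corank of $\tilde{\mSigma}$ is exactly $\#\{i : \sigma_i = 1\}$, giving $\mbox{rank}(\mSigma) = \mbox{rank}(\tilde{\mSigma}) = d_X + d_Y - \#\{i:\sigma_i=1\}$, and substituting into the formula from Theorem~\ref{thm:main-gauss} yields $d(X,Y) = \sum_{i=1}^{\min\{d_X,d_Y\}}\mathbbm{1}\{\sigma_i = 1\}$.

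The main thing to be careful about is the bookkeeping when $d_X \neq d_Y$: the SVD of the rectangular matrix $R$ has only $d = \min\{d_X,d_Y\}$ singular values, and one must track the leftover coordinates correctly to confirm they contribute full-rank $1\times 1$ blocks and do not spuriously add to the corank. Beyond that, the argument is essentially a diagonalization computation; the substantive input is that $\mSigma_X,\mSigma_Y$ can be taken non-singular (Corollary~\ref{cor:1}) so that $\mSigma_X^{-1/2},\mSigma_Y^{-1/2}$ are genuine invertible square roots, and the bound $\sigma_i \le 1$ from Lemma~\ref{lm:range} which guarantees $1-\sigma_i^2 = 0$ if and only if $\sigma_i = 1$.
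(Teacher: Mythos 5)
Your proposal is correct and follows essentially the same route as the paper: whiten via $\mSigma_X^{-1/2},\mSigma_Y^{-1/2}$, rotate by the SVD factors $\mU,\mV$ of the normalized cross-covariance, read off $\mathrm{rank}(\mSigma)$ from the resulting diagonalized covariance, and plug into Theorem~\ref{thm:main-gauss}. The only (minor) difference is presentational: the paper simply asserts $\mathrm{rank}(\mSigma') = r_X + r_Y - \sum_i \mathbbm{1}\{\sigma_i = 1\}$, whereas you justify it by permuting to $2\times 2$ blocks $\left[\begin{smallmatrix}1 & \sigma_i\\ \sigma_i & 1\end{smallmatrix}\right]$ (plus full-rank $1\times 1$ blocks for leftover coordinates when $d_X \neq d_Y$) and using $1-\sigma_i^2 = 0 \iff \sigma_i = 1$ via Lemma~\ref{lm:range}; this is a helpful spelling-out of a step the paper leaves implicit.
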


\subsubsection{Common information of nearly singular sources}
We consider a sequence of pairs of Gaussian random variables $\{(X_\epsilon,Y_\epsilon)\}_{\epsilon>0}$ satisfying \eqref{def: seq}. The following result shows that the growth rate of the common information $C(X_\epsilon,Y_\epsilon)$ is determined by the common information dimension $d(X, Y)$ with respect to linear functions.
\begin{restatable}{theorem}{thmseq} \label{thm:seq}
    Let $X\in\mathbb{R}^{d_X}$ and $Y\in\mathbb{R}^{d_Y}$ be a pair of jointly singular Gaussian variables, and 
    $\{(X_\epsilon, Y_\epsilon)\}_{\epsilon>0}$ be a sequence as defined in \eqref{def: seq}. Then the common information $C(X_\epsilon, Y_\epsilon)$ satisfies
    \begin{equation}
        \lim_{\epsilon \downarrow 0} \frac{C(X_\epsilon, Y_\epsilon)}{\frac{1}{2}\log(\frac{1}{\epsilon})} = d(X,Y).
    \end{equation}
\end{restatable}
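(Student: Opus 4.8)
\textbf{Proof proposal for Theorem~\ref{thm:seq}.}
The plan is to directly substitute the Wyner formula \eqref{eq:wyner} for $C(X_\epsilon, Y_\epsilon)$ and analyze the asymptotics term by term, using Lemma~\ref{lm:range} and Lemma~\ref{lm:one} to identify which terms blow up. Recall that $C(X_\epsilon, Y_\epsilon) = \frac{1}{2}\sum_{i=1}^{d}\log\frac{1+\rho_i(\epsilon)}{1-\rho_i(\epsilon)}$, where $d = \min\{d_X, d_Y\}$ and $\rho_i(\epsilon)$ are the singular values of the normalized cross-covariance of $(X_\epsilon, Y_\epsilon)$. By Lemma~\ref{lm:one}, the common information dimension $d(X,Y)$ equals the number of indices $i$ with $\sigma_i = 1$; call this set $S$, so $|S| = d(X,Y)$. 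For $i\in S$, the condition \eqref{def: seq} gives $\rho_i(\epsilon) = 1-\epsilon$ (using Lemma~\ref{lm:range}, which forces $\rho_i(\epsilon)\le 1$, so $|\rho_i(\epsilon)-\sigma_i| = \epsilon$ means $\rho_i(\epsilon) = 1-\epsilon$). For $i\notin S$, $\sigma_i < 1$ and $\rho_i(\epsilon) = \sigma_i \pm \epsilon$ stays bounded away from $1$ for small $\epsilon$.

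First I would split the sum into the "singular" indices $i\in S$ and the "regular" indices $i\notin S$. For the regular indices, $\log\frac{1+\rho_i(\epsilon)}{1-\rho_i(\epsilon)}$ converges to the finite constant $\log\frac{1+\sigma_i}{1-\sigma_i}$ as $\epsilon\downarrow 0$, so this part of the sum is $O(1)$ and contributes nothing after dividing by $\frac{1}{2}\log(1/\epsilon)\to\infty$. For each singular index $i\in S$, substituting $\rho_i(\epsilon) = 1-\epsilon$ gives
\begin{equation}
    \frac{1}{2}\log\frac{1+\rho_i(\epsilon)}{1-\rho_i(\epsilon)} = \frac{1}{2}\log\frac{2-\epsilon}{\epsilon} = \frac{1}{2}\log\frac{1}{\epsilon} + \frac{1}{2}\log(2-\epsilon),
\end{equation}
and the second term is $O(1)$. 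Summing over $i\in S$ yields $C(X_\epsilon, Y_\epsilon) = d(X,Y)\cdot\frac{1}{2}\log\frac{1}{\epsilon} + O(1)$. Dividing by $\frac{1}{2}\log(1/\epsilon)$ and letting $\epsilon\downarrow 0$ gives the claimed limit $d(X,Y)$.

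The main subtlety — and the place where some care is needed — is justifying that the rank/singularity structure is exactly captured by Lemma~\ref{lm:one}, i.e., that the number of singular values equal to $1$ of the target $\mSigma_X^{-1/2}\mSigma_{XY}\mSigma_Y^{-1/2}$ coincides with $d(X,Y) = \mathrm{rank}(\mSigma_X)+\mathrm{rank}(\mSigma_Y)-\mathrm{rank}(\mSigma)$; but this is precisely the content of Lemma~\ref{lm:one}, which I may assume, together with the normalization (noted after \eqref{eq:singular}) that lets us take $\mSigma_X,\mSigma_Y$ non-singular so that $d = d_X = d_Y$ if needed and all pseudo-inverses are genuine inverses. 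A second point to handle cleanly is that \eqref{def: seq} fixes the marginals $\mSigma_{X_\epsilon} = \mSigma_X$ and $\mSigma_{Y_\epsilon} = \mSigma_Y$ but only constrains the singular values of the normalized cross-covariance, not $\mSigma_{X_\epsilon Y_\epsilon}$ itself; since Wyner's common information \eqref{eq:wyner} depends only on those singular values, this is enough, and no further structure of $\mSigma_{X_\epsilon Y_\epsilon}$ enters the computation. The remaining steps are the routine Taylor/asymptotic bookkeeping sketched above, so I expect no serious obstacle beyond carefully invoking Lemmas~\ref{lm:range} and~\ref{lm:one}.
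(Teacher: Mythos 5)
Your proposal is correct and follows essentially the same route as the paper: split the Wyner sum by whether $\sigma_i=1$, use Lemma~\ref{lm:range} to force $\rho_i(\epsilon)=1-\epsilon$ on the singular indices, observe the regular indices contribute only $O(1)$, and finish via Lemma~\ref{lm:one}. The paper phrases the finite-index bookkeeping as explicit two-sided bounds (a squeeze in $\epsilon$, valid for $\epsilon \le (1-\sigma_l)/2$) rather than a term-by-term limit, but that is a stylistic difference, not a mathematical one.
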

\begin{proof}[ Outline of Theorem~\ref{thm:seq}]
The main technical challenge in proving Theorem~\ref{thm:seq} is the fact that there exist multiple sequences of random variables $X_\epsilon,Y_\epsilon$, with different values of $C(X_\epsilon,Y_\epsilon)$, that satisfy the constraints in \eqref{def: seq}. To address this issue, we prove the result by deriving an upper and a lower bound on $C(X_\epsilon,Y_\epsilon)$ that have the same asymptotic behavior.

The proof focuses on showing that $\lim_{\epsilon \downarrow 0} \frac{C(X_\epsilon, Y_\epsilon)}{\frac{1}{2}\log(\frac{1}{\epsilon})} = \sum_i^{\min\{d_X, d_Y\}} \mathbbm{1}\{\sigma_i=1\}$, where $\{\sigma_i\}$ are the singular values of the matrix $\mSigma_X^{-1/2}\mSigma_{XY}\mSigma_Y^{-1/2}$. We prove this by providing an upper and lower bound on $\frac{C(X_\epsilon, Y_\epsilon)}{\frac{1}{2}\log(\frac{1}{\epsilon})}$ that have the same limit when $\epsilon \downarrow 0$. Then we relate $\sum_i^{\min\{d_X, d_Y\}} \mathbbm{1}\{\sigma_i=1\}$ to the common information dimension $d(X,Y)$ using Lemma \ref{lm:one}. 
\end{proof}

\subsubsection{Approximate simulation}

The following result shows that the $\epsilon$-approximation common information $C_\epsilon(X,Y)$, defined in \eqref{def: opt}, for Gaussian variables grows at a rate determined by the common information dimension $d(X,Y)$ with respect to linear functions.

\begin{restatable}{theorem}{thmappr}\label{thm:appr}
Let $X \in \mathbb{R}^{d_X}$ and $Y \in \mathbb{R}^{d_Y}$ be a pair of jointly Gaussian random variables, then
\begin{equation}
    \begin{aligned}
        \lim_{\epsilon \downarrow 0} \frac{C_\epsilon(X,Y)}{\frac{1}{2}\log(\frac{1}{\epsilon})} = d(X,Y).
    \end{aligned}
\end{equation}
\end{restatable}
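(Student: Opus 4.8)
The plan is to prove Theorem~\ref{thm:appr} by establishing matching upper and lower bounds on $C_\epsilon(X,Y)$, reusing the machinery already developed for Theorem~\ref{thm:seq} and Lemmas~\ref{lm:range} and~\ref{lm:one}. By Lemma~\ref{lm:one} it suffices to show that $\lim_{\epsilon \downarrow 0} C_\epsilon(X,Y) / (\tfrac{1}{2}\log\tfrac{1}{\epsilon}) = \sum_{i} \mathbbm{1}\{\sigma_i = 1\}$, where $\{\sigma_i\}$ are the singular values of $\mSigma_X^{-1/2}\mSigma_{XY}\mSigma_Y^{-1/2}$ for the target $(X,Y)$.

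For the \textbf{upper bound}, I would exhibit a concrete feasible approximation $(\hat X, \hat Y)$ with $\|\mSigma - \hat\mSigma\|_F \le \epsilon$. The natural choice is to perturb only the cross-covariance: keep $\mSigma_{\hat X} = \mSigma_X$, $\mSigma_{\hat Y} = \mSigma_Y$, and shift each singular value $\sigma_i = 1$ down to $1 - c\epsilon$ for a suitable constant $c$ (depending on $d_X, d_Y$ and on $\mSigma_X, \mSigma_Y$ so that the Frobenius-norm budget is respected), leaving $\sigma_i < 1$ unchanged. Plugging into \eqref{eq:wyner} gives $C(\hat X, \hat Y) = \tfrac12 \sum_{i:\sigma_i=1}\log\frac{2 - c\epsilon}{c\epsilon} + O(1)$, which divided by $\tfrac12\log\tfrac1\epsilon$ tends to $\sum_i \mathbbm{1}\{\sigma_i=1\}$. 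This essentially reuses the construction behind the upper bound in Theorem~\ref{thm:seq}.

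The \textbf{lower bound} is the main obstacle, because here we must bound $C(\hat X,\hat Y)$ from below over \emph{all} feasible $(\hat X,\hat Y)$, including those whose marginals $\mSigma_{\hat X}, \mSigma_{\hat Y}$ differ from $\mSigma_X,\mSigma_Y$ — so the clean relationship "$\|\mSigma-\hat\mSigma\|_F = \epsilon \Rightarrow |\sigma_i - \hat\sigma_i| = \Theta(\epsilon)$" of formulation~1 no longer holds verbatim. The key step is a perturbation estimate: if $\|\mSigma - \hat\mSigma\|_F \le \epsilon$ then each singular value $\hat\sigma_i$ of $\mSigma_{\hat X}^{-1/2}\mSigma_{\hat X\hat Y}\mSigma_{\hat Y}^{-1/2}$ satisfies $|\hat\sigma_i - \sigma_i| \le K\epsilon$ for a constant $K$ depending only on the (fixed, non-singular, by the WLOG reduction) target marginals — this follows from Lipschitz continuity of the map $\mSigma \mapsto (\mSigma_X^{-1/2}\mSigma_{XY}\mSigma_Y^{-1/2})$ on a neighborhood of $\mSigma$ (where $\mSigma_X, \mSigma_Y$ stay bounded away from singularity) combined with Weyl's inequality for singular values. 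Since Lemma~\ref{lm:range} gives $\hat\sigma_i \le 1$, for every index with $\sigma_i = 1$ we get $1 - K\epsilon \le \hat\sigma_i \le 1$, hence $\tfrac12\log\frac{1+\hat\sigma_i}{1-\hat\sigma_i} \ge \tfrac12\log\frac{1}{K\epsilon}$; summing over such indices and using that the remaining terms are nonnegative yields $C(\hat X,\hat Y) \ge \tfrac12 \big(\sum_i \mathbbm{1}\{\sigma_i=1\}\big)\log\frac{1}{K\epsilon}$, and dividing by $\tfrac12\log\tfrac1\epsilon$ and letting $\epsilon\downarrow 0$ gives the matching lower bound.

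Combining the two bounds with a squeeze argument gives $\lim_{\epsilon\downarrow0} C_\epsilon(X,Y)/(\tfrac12\log\tfrac1\epsilon) = \sum_i\mathbbm{1}\{\sigma_i=1\} = d(X,Y)$ by Lemma~\ref{lm:one}, completing the proof. I expect the only delicate point to be making the perturbation constant $K$ rigorous — in particular justifying that one may restrict attention to $\hat\mSigma$ with $\mSigma_{\hat X},\mSigma_{\hat Y}$ bounded away from singularity (any $\hat\mSigma$ with a near-singular marginal either violates feasibility for small $\epsilon$ or can only increase $C$), so that the $\mSigma\mapsto \mSigma_X^{-1/2}\mSigma_{XY}\mSigma_Y^{-1/2}$ map is Lipschitz on the relevant region; the rest is bookkeeping with \eqref{eq:wyner}.
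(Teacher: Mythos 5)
Your proposal is correct and follows essentially the same strategy as the paper: matching upper and lower bounds built on a singular-value perturbation estimate, with the upper bound from an explicit feasible $\hat{\mSigma}$ that keeps the marginals and shifts only the $\sigma_i=1$ singular values by $\Theta(\epsilon)$, and the lower bound from the Lipschitz-type bound $|\hat\sigma_i-\sigma_i|\le K\epsilon$ (which is exactly the content of the paper's Lemma~\ref{lm:relax}, proved via Von Neumann/Mirsky plus a triangle-inequality expansion). The only cosmetic difference is that you lower-bound each $\log\frac{1}{1-\hat\sigma_i}$ term directly from the per-component perturbation bound, whereas the paper first relaxes to an auxiliary optimization $S_{c\epsilon}$ and then invokes Jensen; both yield the same asymptotics.
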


\begin{proof}[ Outline of Theorem~\ref{thm:appr}]
The main technical challenge in proving Theorem~\ref{thm:appr} is the difficulty in finding a closed form solution of the optimization problem defining $C_\epsilon(X,Y)$. To address this issue, we follow a similar approach as in Theorem~\ref{thm:seq} by deriving an upper and a lower bound on $C_\epsilon$ that have the same asymptotics. However, it turns out that finding upper and lower bounds that have the same asymptotics is more involved than in the case of Theorem~\ref{thm:seq}.

The proof uses a pair of upper and lower bounds, derived as described next, to show that $\lim_{\epsilon \downarrow 0} \frac{C(X_\epsilon, Y_\epsilon)}{\frac{1}{2}\log(\frac{1}{\epsilon})} = \sum_i^{\min\{d_X, d_Y\}} \mathbbm{1}\{\sigma_i=1\}$, where $\{\sigma_i\}$ are the singular values of the matrix $\mSigma_X^{-1/2}\mSigma_{XY}\mSigma_Y^{-1/2}$.
From Lemma \ref{lm:one},  this concludes the proof of Theorem~\ref{thm:appr}.

{\bf Upper Bound.} As $C_\epsilon(X,Y)$ is the optimal value of a minimization problem, any feasible solution provides an upper bound. To find a feasible solution we use $\mSigma_{\hat{X}}=\mSigma_{X}, \mSigma_{\hat{Y}}=\mSigma_{Y}$. Then, we design the singular values of $\mSigma_{\hat{X}}^{-1/2}\mSigma_{\hat{X}\hat{Y}}\mSigma_{\hat{Y}}^{-1/2}$, denoted as $\{\rho_i\}$, as follows.  We set $
\rho_i = \sigma_i$ when $\sigma_i \ne 1$. Recall that choosing a singular value to be $1$ results in an infinite value for the common information. Hence, when $\sigma_i=1$ we choose $\rho_i=1-\delta$ where $\delta$ is the largest value that does not violate the constraint $\|\mSigma-\hat{\mSigma}\|_F\leq \epsilon$.

{\bf Lower Bound.} To find a lower bound, we relax the constraints set $\|\mSigma-\hat{\mSigma}\|_F\leq \epsilon$, resulting in a smaller optimal value, to make it possible to find a closed form solution of the problem. The proof of the lower bound hinges on showing that $\|\mSigma-\hat{\mSigma}\|_F\leq \epsilon$ implies
\begin{equation}
    \|\Lambda-\hat{\Lambda}\|_F\leq c\epsilon,
\end{equation}
where {$\Lambda = \text{diag}(\sigma_i), \hat{\Lambda} = \text{diag}(\rho_i)$ }are matrices containing the singular values of $\mSigma_{X}^{-1/2}\mSigma_{XY}\mSigma_{Y}^{-1/2}$, $\mSigma_{\hat{X}}^{-1/2}\mSigma_{\hat{X}\hat{Y}}\mSigma_{\hat{Y}}^{-1/2}$ respectively, and $c$ is a constant that may depend on $\mSigma_X,\mSigma_Y$. To further simplify the problem, we remove from the objective function the terms corresponding to $\sigma_i< 1$, and also remove the value $\log(1+\rho_i)$ from each term (recall the common information in \eqref{eq:wyner}). We note that each term in the objective function is non-negative,  and hence, removing terms will not increase the optimal solution value. Furthermore, we expect the asymptotics of the common information to be influenced by the singular values corresponding to $\sigma_i = 1$. This results in the following optimization problem
\begin{equation*}
    \begin{aligned}
        \min_{\rho} \quad & \frac{1}{2} \sum_{i: \sigma_i=1} \log\frac{1}{1-\rho_i} \\ 
        \text{s.t.} \quad & \sum_{i: \sigma_i=1} (\sigma_i -\rho_i)^2 \le \epsilon^2, 0 \le \rho_i \le 1,
    \end{aligned}
\end{equation*}
which can be solved in a closed form using symmetry and concavity of the $\log$ function.
\end{proof}

\begin{boldremark}
    We note that we can efficiently construct random variables for each $\epsilon$ with common information that has the asymptotic behavior in Theorem 2 (and thus can be used to approximate the target singular distribution with (nearly) the smallest common information). A possible choice is $\mSigma_{\hat{X}}=\mSigma_X, \mSigma_{\hat{Y}}=\mSigma_Y, \mSigma_{\hat{X}}^{-1/2} \mSigma_{\hat{X}\hat{Y}} \mSigma_{\hat{Y}}^{-1/2}= U \hat{\Lambda} V$, where $U,V$ are orthonormal matrices of the singular value decomposition of $\mSigma_{X}^{-1/2} \mSigma_{XY} \mSigma_{Y}^{-1/2}$, and $\hat{\Lambda}$ is given in \eqref{eq:rho_achievable} (Appendix ~\ref{sec:pf-thm2}). 
\end{boldremark}

\begin{boldremark}
    {\bf Why do these two theorems have the same bound?} {It may seem at first surprising that even though $C(X_\epsilon,Y_\epsilon)$ and $C_\epsilon(X,Y)$ have different definitions, they both grow (nearly) as $\frac{1}{2}d(X,Y)\log(1/\epsilon)$. {Indeed, as we observed in Remark~\ref{rm:diff}} the feasible set defining $C_\epsilon(X,Y)$ in \eqref{def: opt} contains different sequences of random variables than those satisfying the conditions in \eqref{def: seq}. However, the proof of Theorem~\ref{thm:appr} shows that the random variables which minimize the common information satisfy a constraint similar to \eqref{def: seq}; namely, the singular values $\rho_i$ corresponding to $\sigma_i=1$ have the same distance to $1$, {where $\sigma_i$ and $\rho_i$ are the singular value of $\mSigma_{X}^{-1/2}\mSigma_{XY}\mSigma_{Y}^{-1/2}$ and $\mSigma_{\hat{X}}^{-1/2}\mSigma_{\hat{X}\hat{Y}}\mSigma_{\hat{Y}}^{-1/2}$ respectively}. Intuitively, to minimize the common information in \eqref{eq:wyner}, we need $\rho_i$ to be as far as possible from the value $1$, however, the distance constraint in \eqref{def: opt} restricts us from choosing $\rho_i$ too far from $1$ whenever $\sigma_i=1$. If one $\rho_i$ is very close to $1$, it will dominate the summation in \eqref{eq:wyner} resulting in large common information. Hence, a good solution to \eqref{def: opt} distributes the distance budget $\epsilon$ evenly across the $\rho_i$'s corresponding to $\sigma_i=1$.}
\end{boldremark}

\begin{boldremark}
Consider a machine that stores each real value in $B$ bits using standard floating-point representation. Such a type of machine can store numbers up to a precision\footnote{Note that for any quantization scheme, there exists at least one input value such that the quantization error is greater or equal to $2^{-B}$.} of $2^{-B}$. Suppose that we are interested in the distributed simulation of two random variables $X,Y$ up to the maximum machine precision. Our result indicates that the minimum amount of shared randomness between the quantized $X,Y$ is proportional to $\frac{1}{2} \log(\frac{1}{2^{-B}})d(X,Y) = \frac{1}{2}Bd(X,Y)$ bits. As each variable stores at most $B$ bits, $\frac{1}{2}d(X,Y)$ variables (dimensions) are required to store the shared randomness on such machines in order to perform the distributed simulation up to the required $2^{-B}$ accuracy.
\end{boldremark}

\subsubsection{Common information between quantized variables} The following result shows that the Wyner's common information between uniformly quantized Gaussian random vectors also grows in proportion to the common information dimension, as the quantization precision increases.
\begin{restatable}{theorem}{thmquant} \label{thm:quant}
    Let $X\in\mathbb{R}^{d_X}$ and $Y\in\mathbb{R}^{d_Y}$ be a pair of jointly singular Gaussian random vectors. Then the common information between the quantized $\langle X\rangle_m$ and $\langle Y \rangle_m$ satisfies
    \begin{equation}
        \lim_{m \to \infty} \frac{C(\langle X\rangle_m, \langle Y \rangle_m)}{\log m} = d(X,Y),
    \end{equation}
    where $d(X,Y)$ is the common information dimension of $X$ and $Y$ with respect to the class of linear functions.
\end{restatable}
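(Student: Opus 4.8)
The plan is to sandwich $C(\langle X\rangle_m,\langle Y\rangle_m)$ between a mutual‑information lower bound and a Wyner‑type upper bound that both grow like $d(X,Y)\log m$, doing all the bookkeeping in terms of R\'enyi (information) dimensions, which for a (possibly degenerate) Gaussian vector equal the rank of its covariance. I would record two elementary facts about quantization entropy: (a) for a \emph{non-degenerate} Gaussian $Z\in\mathbb{R}^k$ one has $H(\langle Z\rangle_m)<\infty$ and $H(\langle Z\rangle_m)/\log m\to k$ (R\'enyi's theorem, using regularity of the Gaussian density \cite{renyi1959dimension,wu2010renyi}); and (b) if $Z'=\mA Z+\vb$ then $H(\langle Z'\rangle_m)\le H(\langle Z\rangle_{cm})+c'$ for constants $c,c'$ depending only on $\mA$, since knowing $\langle Z\rangle_{cm}$ pins $Z$ to precision $1/(cm)$, hence $Z'$ to precision $\le 1/m$, leaving $O(1)$ candidate cells for $\langle Z'\rangle_m$. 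Applying (b) in both directions whenever $\mA$ is injective on the support of $Z$ upgrades (a) to: for any Gaussian $Z$ whose covariance has rank $k$, $H(\langle Z\rangle_m)/\log m\to k$. Since $X$, $Y$, $(X,Y)$ are Gaussian with covariance ranks $\mbox{rank}(\mSigma_X)$, $\mbox{rank}(\mSigma_Y)$, $\mbox{rank}(\mSigma)$ and $\langle (X,Y)\rangle_m=(\langle X\rangle_m,\langle Y\rangle_m)$, all the entropies below are controlled this way.

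For the lower bound I would use that Wyner's common information dominates mutual information: for any $W$ with $\langle X\rangle_m-W-\langle Y\rangle_m$, $I(\langle X\rangle_m;\langle Y\rangle_m)\le I(\langle X\rangle_m,\langle Y\rangle_m;W)$, hence $C(\langle X\rangle_m,\langle Y\rangle_m)\ge I(\langle X\rangle_m;\langle Y\rangle_m)=H(\langle X\rangle_m)+H(\langle Y\rangle_m)-H(\langle X\rangle_m,\langle Y\rangle_m)$. Dividing by $\log m$ and letting $m\to\infty$ gives, by fact (a), $\mbox{rank}(\mSigma_X)+\mbox{rank}(\mSigma_Y)-\mbox{rank}(\mSigma)=d(X,Y)$ by Theorem~\ref{thm:main-gauss}. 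Thus $\liminf_m C(\langle X\rangle_m,\langle Y\rangle_m)/\log m\ge d(X,Y)$; notice this direction needs no non-singularity assumption.

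For the upper bound, let $W_0$ and $V$ achieve the CID: $W_0$ is a linear function of $(X,Y)$ with $d_{W_0}=d(X,Y)$, $H(V)<\infty$, and $X\indep Y\mid(W_0,V)$ (Theorem~\ref{thm:main-gauss}, Definition~\ref{def:CID}). By minimality of the CID, $W_0$ must be non-degenerate (otherwise replacing it by a maximal a.s.-independent subvector would give a feasible variable of strictly smaller dimension), so $W_0$ is a non-degenerate $d(X,Y)$-dimensional Gaussian, and conditioned on $W_0=w$ the pair $(X,Y)$ is a non-degenerate Gaussian supported on an affine subspace of dimension $\mbox{rank}(\mSigma)-d(X,Y)$ (Schur-complement rank identity). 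Since $\langle X\rangle_m$, $\langle Y\rangle_m$ are functions of $X$, $Y$, we still have $\langle X\rangle_m\indep\langle Y\rangle_m\mid(W_0,V)$, so $(W_0,V)$ is feasible for Wyner's problem (continuous auxiliaries are allowed, as they only give upper bounds), and
\begin{equation*}
C(\langle X\rangle_m,\langle Y\rangle_m)\le I(\langle X\rangle_m,\langle Y\rangle_m;W_0,V)\le I(\langle X\rangle_m,\langle Y\rangle_m;W_0)+H(V).
\end{equation*}
Writing $I(\langle X\rangle_m,\langle Y\rangle_m;W_0)=H(\langle X\rangle_m,\langle Y\rangle_m)-H(\langle X\rangle_m,\langle Y\rangle_m\mid W_0)$, the first term over $\log m$ tends to $\mbox{rank}(\mSigma)$ by fact (a). For the conditional term, $H(\langle X\rangle_m,\langle Y\rangle_m\mid W_0)=\mathbb{E}_{W_0}\!\big[H(\langle (X,Y)\rangle_m\mid W_0=w)\big]$, and fact (a) applied to the $(\mbox{rank}(\mSigma)-d(X,Y))$-dimensional conditional Gaussian gives $\liminf_m H(\langle (X,Y)\rangle_m\mid W_0=w)/\log m\ge \mbox{rank}(\mSigma)-d(X,Y)$ for every $w$; as the integrands are non-negative, Fatou's lemma yields $\liminf_m H(\langle X\rangle_m,\langle Y\rangle_m\mid W_0)/\log m\ge \mbox{rank}(\mSigma)-d(X,Y)$. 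Hence $\limsup_m I(\langle X\rangle_m,\langle Y\rangle_m;W_0)/\log m\le d(X,Y)$, and since $H(V)/\log m\to 0$ we get $\limsup_m C(\langle X\rangle_m,\langle Y\rangle_m)/\log m\le d(X,Y)$, which together with the lower bound proves the theorem.

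The step I expect to be the main obstacle is making fact (a) and its consequences fully rigorous for \emph{degenerate} Gaussians and for Gaussians supported on affine subspaces in "irrational" position relative to the quantization grid: one must show that in all such cases the quantization entropy is $k\log m+O(1)$ (or at least that its normalized limit is $k$) and that linear images cost only an $O(1)$ additive term. Using Fatou's lemma rather than dominated convergence in the upper bound is the device that lets me avoid any uniform-in-$w$ estimate on the conditional quantization entropies and rely only on the pointwise lower bound on their R\'enyi dimension.
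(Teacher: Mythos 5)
Your proposal is correct but takes a genuinely different route from the paper. The lower bound is identical (mutual information lower-bounds Wyner's common information, then R\'enyi dimensions give $\text{rank}(\mSigma_X)+\text{rank}(\mSigma_Y)-\text{rank}(\mSigma)=d(X,Y)$). For the upper bound, the paper first whitens and diagonalizes via $X=\mT_XX'$, $Y=\mT_YY'$ so that the pairs $(X_i',Y_i')$ become independent, then tensorizes the bound over these pairs (Proposition~\ref{prop:quant}), and finally transfers back to the original coordinates through a conditioning argument on the event $Z_m=1$ that the two quantization grids agree, quantified via Lemma~\ref{lm:z0}. You instead bypass the coordinate change entirely: you plug the CID-achieving auxiliary $(W_0,V)$ from Theorem~\ref{thm:main-gauss} directly into the Wyner minimization, giving $C(\langle X\rangle_m,\langle Y\rangle_m)\le I(\langle X\rangle_m,\langle Y\rangle_m;W_0)+H(V)$, and evaluate the mutual information asymptotics via the R\'enyi dimension of the conditional Gaussian $(X,Y)\mid W_0=w$ (which has covariance rank $\text{rank}(\mSigma)-d(X,Y)$ by the Schur-complement identity, i.e.\ Proposition~\ref{prop:cond-rank}) together with Fatou's lemma. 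Your route is more direct and makes the link between $C(\langle X\rangle_m,\langle Y\rangle_m)$ and the CID construction explicit, avoiding the grid-mismatch machinery; what it needs in exchange is the quantization-entropy asymptotic for degenerate Gaussians supported on affine subspaces not aligned with the lattice, which you supply via the bi-Lipschitz/cell-counting estimate in your fact (b). That estimate is correct (a cell of side $1/(cm)$ maps under $\mA$ into a region meeting $O(1)$ cells of side $1/m$ once $c>\|\mA\|_{\mathrm{row}}$, and a left inverse of $\mA$ gives the matching lower bound), and the Fatou step is legitimate because conditional entropies are non-negative and the conditional covariance is independent of $w$, so the pointwise liminf is constant. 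The one point worth making explicit in a full write-up is the step justifying that a minimal-dimensional $W_0$ can be taken with non-degenerate covariance; this follows from Lemma~\ref{lem:1} exactly as in the proof of Lemma~\ref{lem:RCID}. Overall both proofs are sound; the paper's is more elementary in its prerequisites, while yours is shorter and more structural.
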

\begin{proof}[ Outline of Theorem~\ref{thm:quant}]
Since it is hard in general to directly solve Wyner's common information even for discrete variables \cite{yu2022common}, we prove the result through matching upper and lower bounds. First, we show that $\lim_{m \to \infty} \frac{C(\langle X\rangle_m, \langle Y \rangle_m)}{\log m} \ge d(X,Y)$ using the inequality that mutual information is not larger than the Wyner's common information \cite{wyner1975common}.

Next, we prove $\lim_{m \to \infty} \frac{C(\langle X\rangle_m, \langle Y \rangle_m)}{\log m} \le d(X,Y)$ as follows. Recall (from the proofs of Lemma~\ref{lm:range} and \ref{lm:one}) that there exist invertible linear transformations $\mT_X$ and $\mT_Y$ such that $X = \mT_X X'$ and $Y = \mT_Y Y'$, where $X' \in \mathbb{R}^{d_X}$ and $Y' \in \mathbb{R}^{d_Y}$ are another pair of Gaussian random vectors with independent elements (i.e. their covariance matrices $\mSigma_{X'}$, $\mSigma_{Y'}$, and $\mSigma_{X'Y'}$ are all diagonal). We first show that $\lim_{m' \to \infty}\frac{C(\langle X'\rangle_{m'}, \langle Y' \rangle_{m'})}{\log m'} \le d(X',Y')$. Note that $d(X',Y')=d(X,Y)$ since $\mT_X, \mT_Y$ are invertible transformations.

However, $C(\langle X\rangle_m,\langle Y \rangle_m), C(\langle X'\rangle_{m}, \langle Y' \rangle_{m})$ may not be equal in general (note that\\ $\langle \mT_X \langle X' \rangle_{m} \rangle_m \ne \langle \mT_X X' \rangle_m = \langle X \rangle_m$) (similarly for $Y'$ and $Y$). Our proof proceeds by showing that $\lim_{m \to \infty}\frac{C(\langle X\rangle_m,\langle Y \rangle_m)}{\log m} \le \lim_{m \to \infty}\frac{C(\langle X'\rangle_{m}, \langle Y' \rangle_{m})}{\log m'}$. This is proved using the following ideas: (i) $\lim_{m \to \infty}\frac{C(\langle X'\rangle_m,\langle Y' \rangle_m)}{\log m} = \lim_{m \to \infty}\frac{C(\langle X'\rangle_{\alpha m},\langle Y' \rangle_{\alpha m})}{\log m}$ for any fixed $\alpha \in [0,\infty]$ (ii) $\langle \mT_X \langle X' \rangle_{\alpha m} \rangle_m = \langle X \rangle_m$ and $\langle \mT_Y \langle Y' \rangle_{\alpha m} \rangle_m = \langle Y \rangle_m$ can be achieved with high probability by choosing a large $\alpha$.
\end{proof}

\section{Numerical Evaluation} \label{sec:numeric}
In this section, we numerically verify the asymptotic behaviors of $C_\epsilon(X, Y)$, $C(X_\epsilon, Y_\epsilon)$ and $C(\langle X\rangle_m, \langle Y \rangle_m)$ defined in Section \ref{sec:approx} through two setups. In the first setup, we examine the growth rate of the approximate common information as a function of the approximation error $\epsilon$ (or the quantization size $m$). We use the second setup to illustrate the linear relationship between the approximate common information and common information dimension $d(X,Y)$.

In all simulation results below, we obtain the value of approximate common information as follows:
\begin{itemize}
    \item To measure $C(X_\epsilon,Y_\epsilon)$ (note that there exist multiple sequences\footnote{Note that there are at most $2^{\min\{d_X,d_Y\}}$ $(X_\epsilon, Y_\epsilon)$ that satisfy \eqref{def: seq}, for each $\epsilon$.} $\{(X_\epsilon,Y_\epsilon)\}_{\epsilon>0}$ that satisfy the requirements in \eqref{def: seq}), we choose two representative sequences and plot the results for both: $\{(\underline{X}_\epsilon,\underline{Y}_\epsilon)\}$ which has the minimum common information among such sequences for all $\epsilon > 0$, and $\{(\overline{X}_\epsilon,\overline{Y}_\epsilon)\}$ which has the maximum common information. We calculate the $C((\underline{X}_\epsilon,\underline{Y}_\epsilon))$ and $C(\overline{X}_\epsilon,\overline{Y}_\epsilon)$ using the closed-form solution in \eqref{eq:wyner}\cite{satpathy2015gaussian}.
    \item To calculate the $\epsilon$-approximation common information $C_\epsilon(X,Y)$, we solve the optimization problem in \eqref{def: opt} numerically using SciPy \cite{2020SciPy-NMeth}.
    \item For the $C(\langle X\rangle_m, \langle Y \rangle_m)$, we use a pair of upper and lower bounds from the proof of Theorem~\ref{thm:quant}:
$$I(\langle X \rangle_m ; \langle Y \rangle_m) \le C(\langle X \rangle_m , \langle Y \rangle_m) \le \sum_{X=Y} H(\langle X_i \rangle_m) + \sum_{X\ne Y} C(X_i, Y_i).$$
    In Figure~\ref{fig:numeric} and \ref{fig:error}, the exact value of $C(\langle X\rangle_m, \langle Y \rangle_m)$ lies in the colored area in between.
\end{itemize} 

\subsection{Setup 1} We let $X \in \mathbb{R}^4$ and $Y \in \mathbb{R}^4$ be jointly Gaussian vectors with zero means and covariance matrices $$\mSigma_X,\mSigma_Y = \left[\begin{matrix}1 & 0.5 & 0 & 0\\ 0.5 & 1 & 0 & 0\\ 0 & 0 & 1 & 0 \\ 0 & 0 & 0 & 1  \end{matrix}\right], 
\mSigma_{XY} = \left[\begin{matrix}1 & 0.5 & 0 & 0\\ 0.5 & 1 & 0 & 0\\ 0 & 0 & 1 & 0 \\ 0 & 0 & 0 & 0.3 \end{matrix}\right].$$ It is evident that $X_1 = Y_1, X_2 = Y_2, X_3 = Y_3$ almost surely, and $\text{rank}(\mSigma) = 5 < \text{rank}(\mSigma_X) + \text{rank}(\mSigma_Y)$, thus, $X$ and $Y$ are jointly singular.

\begin{figure}[!th]
  \centering
  \includegraphics[width=0.53\textwidth]{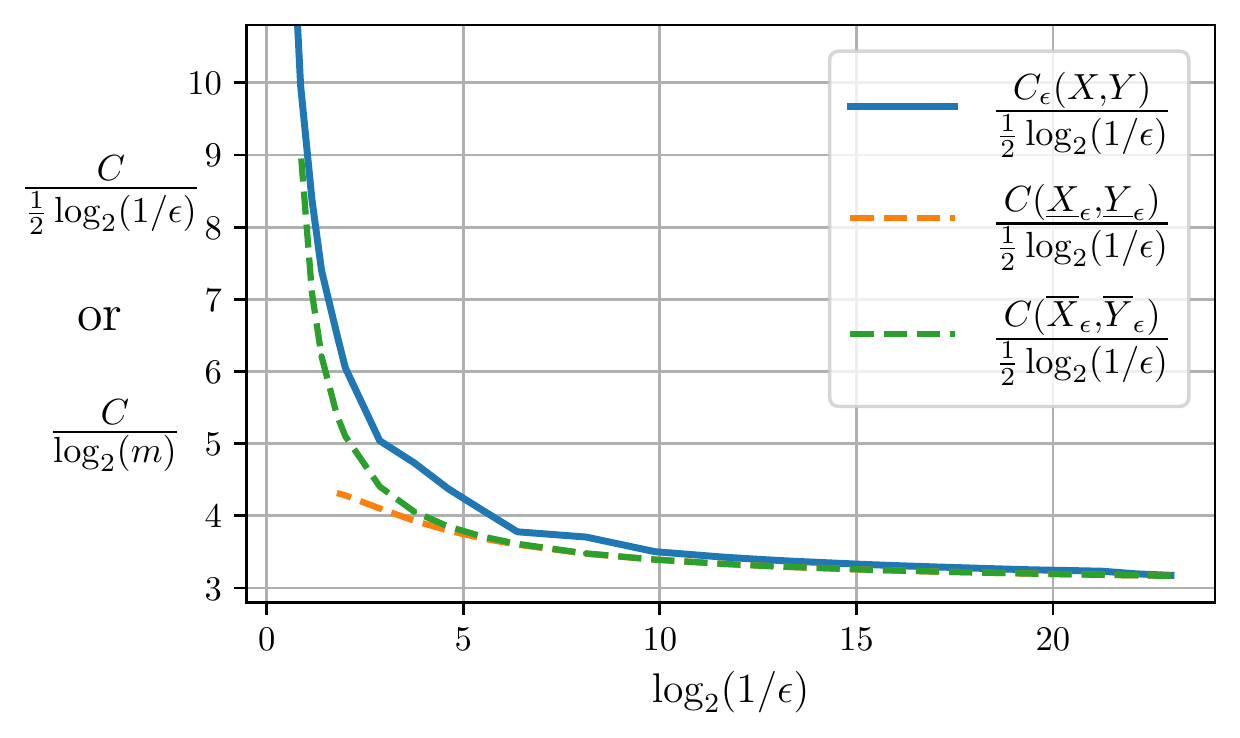}
  \includegraphics[width=0.45\textwidth]{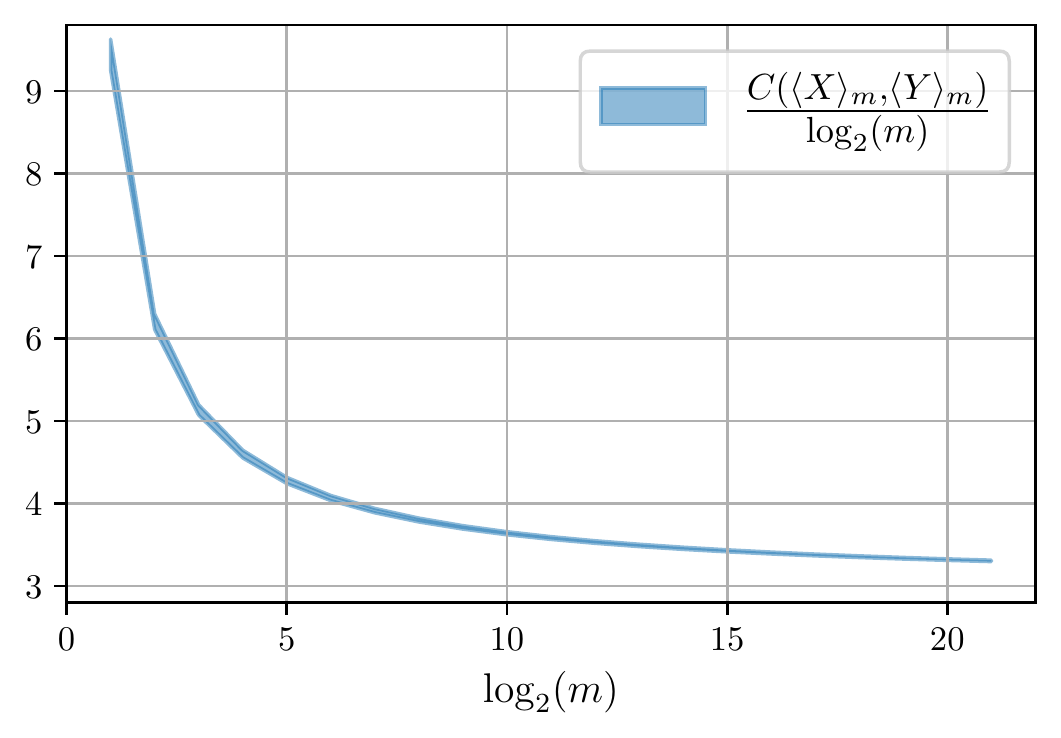}
  \vspace{-.2in}
  \caption{(a) the growth rate of $C_\epsilon(X,Y)$ and $C(X_\epsilon,Y_\epsilon)$; (b) the growth rate of $C(\langle X\rangle_m, \langle Y \rangle_m)$.}
  \vspace{-.2in}
  \label{fig:numeric}
\end{figure}

Figure~\ref{fig:numeric} illustrates the normalized common information $\frac{C_\epsilon(X,Y)}{\frac{1}{2}\log(1/\epsilon)}$, $\frac{C(X_\epsilon,Y_\epsilon)}{\frac{1}{2}\log(1/\epsilon)}$ and $\frac{C(\langle X\rangle_m, \langle Y \rangle_m)}{\log(m)}$ for the three formulations introduced in Section \ref{sec:ps}. They are plotted against the approximation error $\epsilon$ (or the quantization size $m$). We observe that both $\frac{C_\epsilon(X,Y)}{\frac{1}{2}\log(1/\epsilon)}$ and $\frac{C(X_\epsilon,Y_\epsilon)}{\frac{1}{2}\log(1/\epsilon)}$ converge to $d(X,Y) = \text{rank}(\mSigma_X) + \text{rank}(\mSigma_Y) - \text{rank}(\mSigma) = 3$ as $\epsilon$ approaches $0$. Similarly, $\frac{C(\langle X\rangle_m, \langle Y \rangle_m)}{\log(m)}$ converges to $d(X,Y)$ as $m$ approaches $\infty$. These verify our results in Section \ref{sec:main}. Moreover, they reach a value that is close to $d(X,Y)$ (e.g., a value $<4$) quickly, even when $\epsilon$ is relatively large (or $m$ is relatively small). The trade-off between $\epsilon$-approximate common information $C_\epsilon(X,Y)$ and error $\epsilon$ indicates that the common information dimension $d(X,Y)$ provides a theoretical limit of distributed simulation, on the maximum achievable accuracy given a finite number of bits to represent the common randomness; or equivalently, on the minimum number of bits required for the shared randomness to achieve a target simulation accuracy. Similarly, the trade-off between $C(\langle X\rangle_m, \langle Y \rangle_m)$ and the approximation precision $m$ reflects the special case when uniform quantization is applied to the target random variables.


\subsection{Setup 2} In this example, we use $X \in \mathbb{R}^7$ and $Y \in \mathbb{R}^7$ with $\mSigma_X = \mSigma_Y = \mI_7$, while we choose cross-covariance matrices $\mSigma_{XY}$ to be a diagonal matrix with $d(X,Y)$ number of diagonal elements set to be $1$ and the rest to be $0.5$.

\begin{figure}[!th]
  \centering
  \includegraphics[width=0.46\textwidth]{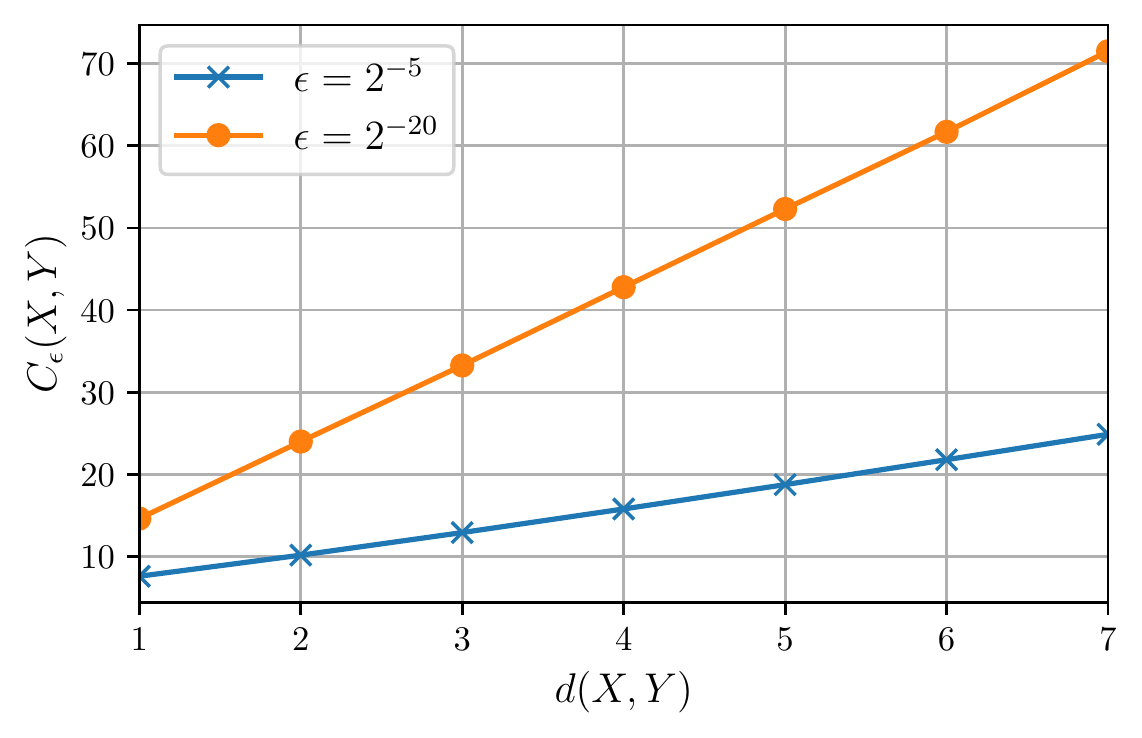}
  \includegraphics[width=0.47\textwidth]{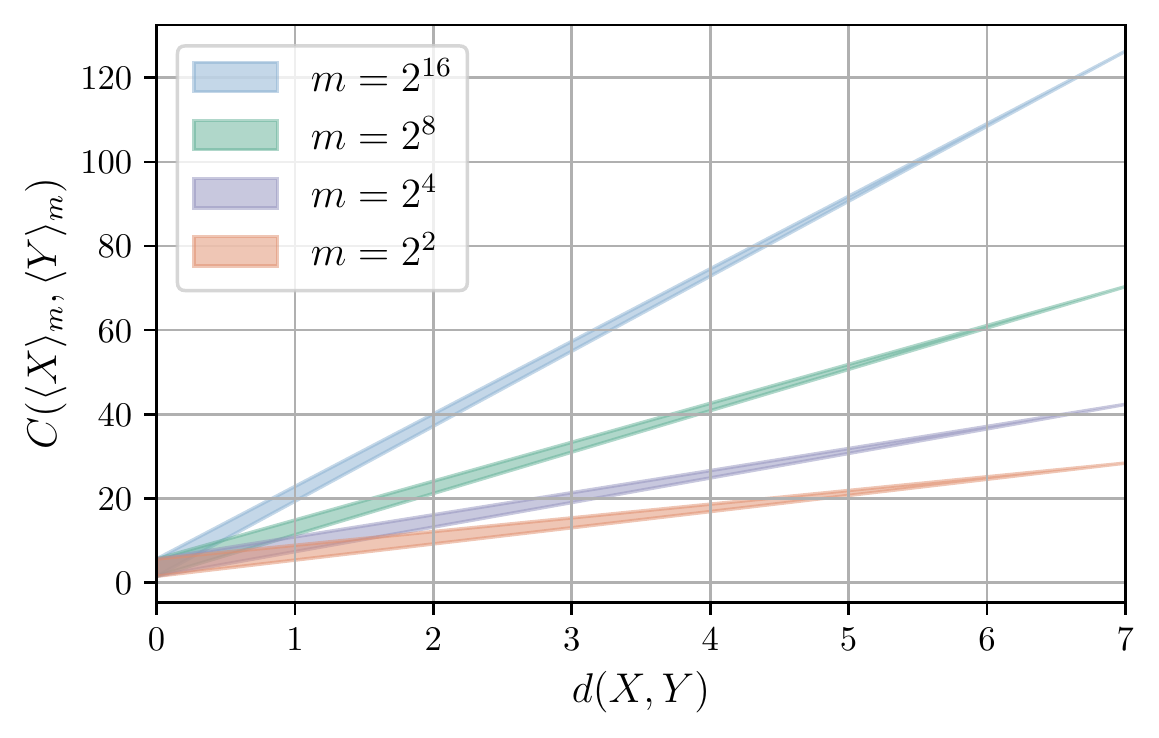}
  \vspace{-.2in}
  \caption{common information dimension $d(X,Y)$ vs (a) the $\epsilon$-approximate common information $C_\epsilon(X,Y)$; (b) the common information of quantized variables $C(\langle X\rangle_m,\langle Y\rangle_m)$.}
  \vspace{-.2in}
  \label{fig:error}
\end{figure}

In Figure \ref{fig:error}(a), we plot $\epsilon$-approximate common information (i.e. the minimum number of bits required to approximate $X,Y$) versus different choices common information dimensions $d(X,Y)$. We use two different levels of accuracy: $\epsilon=2^{-5}$ and $\epsilon=2^{-20}$. We observe that the approximate common information grows linearly with the common information dimension $d(X,Y)$, where the slope is given by $\frac{1}{2}\log(1/\epsilon)$, as we also proved in Theorem~\ref{thm:appr}. In addition, this plot provides guidance on the minimum number of bits that need to be shared to perform the distribution simulation within a given error. For instance, to simulate a target distribution with $d(X,Y)=5$, we need to share $19$ bits to achieve a relatively low $2^{-5}$ accuracy, or $53$ bits to achieve a relatively high $2^{-20}$ accuracy.

In Figure \ref{fig:error}(b), we plot the common information between quantized random variables $\langle X \rangle_m$, $\langle Y \rangle_m$ versus different values of common information dimension $d(X,Y)$. We consider 4 different levels of quantization: $m=2^{k}$, where $k\in \{1,2,4,8\}$. We observe that the approximate common information grows linearly with the common information dimension $d(X,Y)$, and the slope is roughly $\log(m)$ for large enough $m$, as we also proved in Theorem~\ref{thm:quant}. Moreover, this plot specifies the minimum number of bits needed to distributedly simulate target random variables that are uniformed quantized into discrete values, with a known quantization precision.
\section{Conclusions and Open Questions}\label{sec:concl}

In this paper, we introduced (three variants of) the notion of common information dimension, which is suitable for general continuous random variables. We calculated these in closed form for an arbitrary number of Gaussian random vectors and the linear function class, and provided an efficient algorithm to construct a random variable with the minimum dimension that captures the common randomness. Our solutions use the covariance matrices of the target distribution.
We further explored scenarios where the common information is (nearly) infinity in bits, and therefore the asymptotic behavior becomes important. In particular, we studied three different approximation problems in the context of distributed simulation. We proved, for the case of two Gaussian sources, that the approximate common information in all cases grows at a rate proportional to the common information dimension. This gives interpretations to the common information dimension and establishes a link to Wyner’s common information.
Several open questions remain. These include whether our solutions of common information dimension (in terms of both the calculation and construction) can be extended to other function classes and beyond the Gaussian distribution; whether the common information dimension can be approximately calculated in a data-driven approach, without the knowledge of the distribution; and whether the study of common information dimension can provide insights into solving Wyner’s common information involving more than two sources.

\newpage
\newpage
\appendix
\subsection{Proofs of lemmas in Theorem~\ref{thm:main-gauss}}\label{app:thm-gauss-2}
\lemONE*
\begin{proof}[ of Lemma~\ref{lem:1}]
    If $\va^\top X =  \vb^\top X$ almost surely, the multiplying both sides by $X^\top$ and taking expectation gives $\va^\top \mSigma =  \vb^\top \mSigma$. It remains to show the other direction; namely, if $\va^\top \mSigma =  \vb^\top \mSigma$, then $\va^\top X =  \vb^\top X$ almost surely. We have that the second moment of $(\va^\top-\vb^\top) X$ is given by
    \begin{align}
        \mathbb{E}[((\va-\vb)^\top X)^2] &= \mathbb{E}[(\va-\vb)^\top XX^\top (\va-\vb)]\nonumber\\
        &=(\va-\vb)^\top \mSigma (\va-\vb) = 0.
    \end{align}
    It follows that $(\va-\vb)^\top X=0$ almost surely.
\end{proof}

\corone*
\begin{proof}[ of Corollary~\ref{cor:1}]
    By Lemma~\ref{lem:1}, there is a subset $I \subseteq \{1,...,d_X\}$ such that $|I|=\mbox{rank}(\mSigma_X)$, and $X=\mB X_I$ for some $\mB \in \mathbb{R}^{d_x\times |I|}$. Then we have that $X_I \indep Y | (V,W)$ if and only if $ X \indep Y | (V,W)$.
\end{proof}

\lemrankNx*
\begin{proof}[ of Lemma~\ref{lem:rankNx}]
Suppose towards a contradiction that $\mN_X$ is not full-row-rank. Hence, there exists $\vb \neq \vzero$ such that $\vb^\top \mN_X = \vzero$. As $\mN$ is full-row-rank, 
we have that $\vb^\top \mN_Y \neq \vzero$. From equation \eqref{eq:NX} we have that $\vb^\top \mN_Y Y = \vb^\top \mN_X X$. Hence, we have that $\vb^\top \mN_Y Y = \vzero$, and $\vb^\top \mN_Y \neq \vzero$. As a result, by Lemma~\ref{lem:1}, we have $\vb^\top \mN_Y \mSigma_Y = \vzero, \vb^\top \mN_Y \neq \vzero$, which contradicts our assumption that $\mSigma_Y$ is full-rank. Therefore, it has to hold that $\mbox{rank}(\mN_X) = \mbox{rank}(\mN_Y) = \mbox{rank}(\mN)$.
\end{proof}

\lemMone*
\begin{proof}[ of Lemma~\ref{lem:M1}]
Since $\mM_X$ and $\mM_Y$ are full-rank, the corresponding linear mappings are one-to-one. If $X,Y$ are conditionally independent given $(V,W)$, then for any sets $\sS_X, \sS_Y$, we have that
\begin{align}
    & \mathbb{P}[\mM_XX\in \sS_X, \mM_YY\in \sS_Y|(V,W)] \\
    = \ & \mathbb{P}[X\in \mM_X^{-1}\sS_X, Y\in \mM_Y^{-1}\sS_Y|(V,W)]\nonumber \\
    = \ & \mathbb{P}[X\in \mM_X^{-1}\sS_X|(V,W)] \mathbb{P}[Y\in \mM_Y^{-1}\sS_Y|(V,W)]\nonumber \\
    = \ & \mathbb{P}[\mM_XX\in \sS_X|(V,W)] \mathbb{P}[\mM_YY\in \sS_Y|(V,W)],
\end{align}
where $\mM_X^{-1}\sS_X = \{\mM_X^{-1}x|x\in \sS_X\}$ and $\mM_Y^{-1}\sS_Y$ is defined similarly. This shows that $\mM_XX$ and  $\mM_YY$ are conditionally independent given $(V,W)$. The proof of the other direction that $\mM_XX \indep \mM_YY|(V,W)$ implies $X\indep Y|(V,W)$ is similar.  
\end{proof}

The following lemma is utilized in the proof of Lemma~\ref{lem:cond}.
\begin{restatable}{lemma}{lemMtwo}\label{lem:M2}
Define $\mN'_X$, $\mM_Y$  as in \eqref{eq:NX2} and \eqref{eq:M}. Then 
    \begin{align*}
     \nexists~\va, \vb \text{ such that } \va \neq \vzero\text{ and } \va^\top \mN_X' X = \vb^\top \mM_Y Y.
    \end{align*}
\end{restatable}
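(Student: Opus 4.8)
The plan is to argue by contradiction, turning an alleged almost-sure linear relation into a new element of the left null space of $\mSigma$ that is linearly independent of the rows of $\mN$, thereby contradicting the fact that $\mN$ is a \emph{basis} of that null space. Suppose, for contradiction, that there exist $\va \neq \vzero$ and $\vb$ with $\va^\top \mN_X' X = \vb^\top \mM_Y Y$ almost surely. Set $\vr^\top := \begin{bmatrix}\va^\top \mN_X' & -\vb^\top \mM_Y\end{bmatrix}$, so that $\vr^\top \begin{bmatrix}X^\top & Y^\top\end{bmatrix}^\top = 0$ almost surely. Applying Lemma~\ref{lem:1} to the random vector $\begin{bmatrix}X^\top & Y^\top\end{bmatrix}^\top$ (whose covariance is $\mSigma$), with the roles of $\va,\vb$ played by $\vr$ and $\vzero$, gives $\vr^\top \mSigma = \vzero$; that is, $\vr$ lies in the left null space of $\mSigma$.

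Next I would show $\vr$ is linearly independent of the rows of $\mN = \begin{bmatrix}\mN_X & -\mN_Y\end{bmatrix}$. Suppose instead $\vr^\top = \vc^\top \mN$ for some $\vc$. Comparing the first $d_X$ coordinates yields $\va^\top \mN_X' = \vc^\top \mN_X$. Recall that $\mN_X'$ is a basis of a complement of the row space of $\mN_X$, so the only vector lying in both $\mathrm{rowspace}(\mN_X')$ and $\mathrm{rowspace}(\mN_X)$ is $\vzero$; hence $\va^\top \mN_X' = \vzero$. Since $\mN_X'$ has full row rank (it is a basis), this forces $\va = \vzero$, contradicting $\va \neq \vzero$. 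Therefore $\vr$ together with the rows of $\mN$ is a linearly independent set contained in the left null space of $\mSigma$, which has dimension exactly $r(\mN)$ with basis $\mN$ — a contradiction. This shows no such pair $\va,\vb$ can exist.

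This argument is essentially pure linear algebra, and I expect no genuine obstacle; the only step requiring care is the linear-independence claim, which relies precisely on the two defining properties of $\mN_X'$ in \eqref{eq:NX2}: that its row space is a complement of $\mathrm{rowspace}(\mN_X)$ (so the two row spaces intersect trivially) and that it has full row rank (so $\va^\top \mN_X' = \vzero$ implies $\va = \vzero$). If one wished to avoid an explicit appeal to the complement property, an equivalent route is to note that the rows of $\begin{bmatrix}\mN \\ \begin{bmatrix}\mN_X' & \vzero\end{bmatrix}\end{bmatrix}$ span the full space $\mathbb{R}^{1\times d_X}$ in their first block in a way no combination can cancel, and to read off the contradiction from there; either way the bookkeeping is routine.
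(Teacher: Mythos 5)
Your proposal follows essentially the same route as the paper's proof: you form the candidate row $\begin{bmatrix}\va^\top\mN_X' & -\vb^\top\mM_Y\end{bmatrix}$, place it in the left null space of $\mSigma$ via Lemma~\ref{lem:1}, and derive a contradiction with $\mN$ being a basis by showing linear independence through the complementarity of $\mathrm{rowspace}(\mN_X')$ and $\mathrm{rowspace}(\mN_X)$. The paper compresses the independence argument into a single sentence; you unpack it explicitly, but the underlying logic is identical.
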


\begin{proof}[ of Lemma~\ref{lem:M2}]
Suppose towards a contradiction that $\exists~\va, \vb$ such that $\va \neq \vzero$ and  $\va^\top \mN_X' X = \vb^\top \mM_Y Y$. Then, we can extend $\mN$ by adding the following row $\begin{bmatrix} \va^\top\mN_X' & -\vb^\top\mM_Y\end{bmatrix}$, which is linearly independent on rows of $\mN$ because $\mN_X'$ is the complementary row space of $\mN_X$ and $\mN = \begin{bmatrix}\mN_X & -\mN_Y\end{bmatrix}$. This contradicts the fact that $\mN$ is a basis for the null space of $\mSigma_{X,Y}$.
\end{proof}

\lemcond*
\begin{proof}[ of Lemma~\ref{lem:cond}]
Suppose towards a contradiction that there is $\va\neq \vzero$ such that $\va^\top \mSigma'_{|\mN_X X}=\vzero$, where $\mSigma'_{|\mN_X X}$ is the covariance matrix of $\begin{bmatrix}\mN_X'X \\ \mN_Y'Y\end{bmatrix}$ conditioned on $\mN_X X$. Since the mean of $\begin{bmatrix}\mN_X'X \\ \mN_Y'Y\end{bmatrix}$ conditioned on $\mN_X X$ is a linear function of $\mN_X X$ (say the conditional mean is $\mB \mN_X X$),  then, by Lemma~\ref{lem:1} we have that $$\va^\top (\begin{bmatrix}\mN_X' X \\ \mN_Y'Y\end{bmatrix} - \mB \mN_X X) = \vzero \text{ almost surely}.$$
Let us partition $\va$ as $\va = \begin{bmatrix} \va_X \\ \va_Y \end{bmatrix}$. Using the fact that $\mN_X X = \mN_Y Y$ we get that
$$\va_X^\top\mN_X'X = \vb ^\top \begin{bmatrix}\mN_Y Y \\ \mN_Y'Y\end{bmatrix} = \vb^\top \mM_Y Y,$$
where $\vb = \begin{bmatrix} \mB^\top \va \\ - \va_Y \end{bmatrix}$, which contradicts Lemma~\ref{lem:M2}.
\end{proof}

\subsection{Proof of Theorem~\ref{thm:gauss-gen}}\label{app:thm-gauss-gen}
\thmgaussgen*
\begin{table}[t!]
  \centering
  \caption{Table of notation for proof of Theorem \ref{thm:gauss-gen}}
  \label{tab:gauss-gen}
  \begin{IEEEeqnarraybox}[\IEEEeqnarraystrutmode%
    ]{l"l}
    \toprule
    \textbf{Notation} & \textbf{Definition} \\
    \midrule
    \mA_i & \text{basis of the row space of } \mSigma_{i|1:i-1}
    \hfill\refstepcounter{equation}\textup{(\theequation)} \label{eq:def-Ai} \\
    \midrule
    \mB_i & \text{basis of the row space of } \mSigma_{i+1:n|1:i-1}
    \hfill\refstepcounter{equation}\textup{(\theequation)} \label{eq:def-Bi}\label{eq:def-U}\\
    \midrule
    U_i & U_i = \mA_i X_i
    \hfill\refstepcounter{equation}\textup{(\theequation)}\label{eq:def-Y}\\
    \midrule
    Y_i & Y_i = \mB_i [X_{i+1}^\top\ \cdots\ X_n^\top]^\top
    \hfill\refstepcounter{equation}\textup{(\theequation)} \\
    \midrule
    \tilde{\mN}_i & \text{basis of the null space of } \mSigma_{U_i,Y_i|X_1,\cdots,X_{i-1}},
    \hfill\refstepcounter{equation}\textup{(\theequation)} \label{eq:def-Nt}\\
    & \tilde{\mN}_i \in \mathbb{R}^{r(\tilde{\mN}_i)\times (r(U_i)+r(Y_i))} \\
    \midrule
    \mN_i, \bar{\mN}_i & \text{partition } \tilde{\mN}_i \text{ as }[\mN_i\ \bar{\mN}_i] 
    \hfill\refstepcounter{equation}\textup{(\theequation)} \label{eq:def-N}\\
    & \mN_i \in \mathbb{R}^{r(\tilde{\mN}_i)\times r(U_i)}\ \bar{\mN}_i \in \mathbb{R}^{r(\tilde{\mN}_i)\times r(Y_i)}\\
    \midrule
    Z_i & Z_i = \mN_i U_i
    \hfill\refstepcounter{equation}\textup{(\theequation)} \label{eq:def-Z}\\
    \midrule
    \mC_i & \text{basis of the row space of } \mSigma_{X_i|Z_1,\cdots,Z_n} 
    \hfill\refstepcounter{equation}\textup{(\theequation)} \label{eq:def-C}\\
    \midrule
    T_i & T_i = \mC_iX_i
    \hfill\refstepcounter{equation}\textup{(\theequation)} \label{eq:def-T}\\
    \bottomrule
  \end{IEEEeqnarraybox}
  \vspace{-0.2in}
\end{table}


\begin{proof}[ of Theorem~\ref{thm:gauss-gen}]
Consider the quantities defined in Table~\ref{tab:gauss-gen}; the proof logic proceeds as follows. We sequentially extract the common information between the $X_i$ by constructing a sequence of variables $Z_i$,  where $Z_i$  contains the amount of information that $X_i$ contains about $X_{i+1},\cdots,X_n$ and that $X_1,\cdots,X_{i-1}$ does not contain. {We recall from the proof of Theorem~\ref{thm:main-gauss} that: $(i)$ to remove the parts of $X_i$ that can be determined by $X_1,\cdots,X_{i-1}$, we can consider $\mA_iX_i$, where $\mA_i$ is a basis for the row space of $\mSigma_{X_i|X_1,\cdots,X_{i-1}}$ (Lemma~\ref{lem:1}); and $(ii)$ to find the common information between two non-singular Gaussian random variables $X,Y$, we need to consider the null space of $\mSigma_{XY}$ (\eqref{eq:result1} and \eqref{eq:result2}). We utilize this in the following to build $Z$:}\\ 
$\bullet$ To get the common information between $X_1$ and $X_2,\cdots,X_n$; we first
remove the singular part of $X_1$
by linearly transforming $X_1$ using a matrix $\mA_1$ that captures the non-singular part of $X_1$. We denote the resulting vector variable by $U_1$.\\
$\bullet$ Similarly, we remove the singular part of $[X_2,\cdots,X_n]$ using a matrix $\mB_1$ to produce $Y_1$.\\
$\bullet$ We then argue (using $(ii)$) that the parts of $X_2,\cdots,X_n$ that can be determined from $X_1$ (common information), denoted as $Z_1$, can be obtained from the null space of $\mSigma_{U_1Y_1}$.\\
$\bullet$ To obtain $Z_2$ which contains the amount of information that $X_2$ contains about $X_3,\cdots,X_n$ that $X_1$ does not contain: we first eliminate the parts of $X_2$ that can be obtained from $X_1$. We argue (as described in $(i)$) using Lemma~\ref{lem:1}, that the vectors in the null space of $\mSigma_{X_2|X_1}$ are precisely the linear combinations of $X_2$ that can be obtained from $X_1$. Hence, we remove the parts of $X_2$ that can be obtained from $X_1$ using a matrix $\mA_2$ that captures the non-singular part of the matrix $\mSigma_{X_2|X_1}$, to obtain $U_2$.\\
$\bullet$ Similarly, we remove the parts of $[X_3,\cdots,X_n]$ that can be obtained from $X_1$ using a matrix $\mB_2$ to produce $Y_2$.\\
$\bullet$ Similar to $Z_1$, the desired $Z_2$ can then be obtain from the null space of $\mSigma_{U_2Y_2}$.\\

More generally, to obtain $Z_i$ which contains the amount of information that $X_i$ contains about $X_{i+1},\cdots,X_n$ that $X_1,\cdots,X_{i-1}$ does not contain: we first eliminate the parts of $X_i$ that can be obtained from $X_1,\cdots,X_{i-1}$. This is done using a matrix $\mA_i$ that captures the non-singular part of the matrix $\mSigma_{X_i|X_1,\cdots,X_{i-1}}$, to obtain $U_i$. Similarly, we remove the parts of $[X_{i+1},\cdots,X_n]$ that can be obtained from $X_1,\cdots,X_{i-1}$ using a matrix $\mB_i$ to produce $Y_i$. Then, $Z_i$ can be obtained from the null space of $\mSigma_{U_iY_i}$. The method of constructing $(V,W)$ that break the dependency between $X_1,\cdots,X_n$, where $H(V)<\infty$ and $W$ has minimum dimension is summarized in Algorithm~\ref{alg:gauss-gen}. The main part of our proof shows that $Z=[Z_1,\cdots,Z_n]$ is sufficient to break the dependency between $X_1,\cdots,X_n$ up to finite randomness (Lemma~\ref{lem:T-full}), and that $Z$ is also a necessary part of every $W$ that breaks the dependency up to finite randomness and hence $d_W\geq d_Z$ (Lemma~\ref{lem:Z-full}).

We next formally prove our result. We state and prove two properties of the quantities in Table~\ref{tab:gauss-gen}. The first property is that the conditional covariance matrix $\mSigma_{T_1,\cdots,T_n|Z_1,\cdots,Z_n}$ is full-rank, where $T_i$,  defined in \eqref{eq:def-T}, can be thought of as the part of $X_i$ that cannot be expressed as a deterministic function of $Z_1,\cdots,Z_n$.

\begin{lemma}\label{lem:T-full}
    The conditional covariance matrix of $T=[T_1,\cdots,T_n]$ conditioned on $Z=[Z_1,\cdots,Z_n]$, namely, $\mSigma_{T|Z}$ is non-singular, where $T_i$ is defined in \eqref{eq:def-T} and $Z_i$ is defined in \eqref{eq:def-Z}.
\end{lemma}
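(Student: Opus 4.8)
The plan is to argue by contradiction and reduce the statement to the assertion that the sequentially constructed vector $Z=[Z_1,\cdots,Z_n]$ captures \emph{every} linear relation within $X$. First note that $T=[T_1,\cdots,T_n]$ and $Z$ are jointly Gaussian, being linear images of $X$, so $\mSigma_{T|Z}$ is a well-defined constant matrix and $\E[T|Z]$ is linear in $Z$. If $\mSigma_{T|Z}$ were singular there would be a nonzero $\va=[\va_1^\top,\cdots,\va_n^\top]^\top$ with $\Var(\va^\top T|Z)=0$, i.e., $\sum_{i=1}^n \vb_i^\top X_i=h(Z)$ almost surely for a linear map $h$, where $\vb_i:=\mC_i^\top\va_i$. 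Since $\mC_i$ has full row rank and is a basis of the row space of the PSD matrix $\mSigma_{X_i|Z}$, the vector $\vb_i$ lies in that row space and $\vb_i\neq\vzero$ exactly when $\va_i\neq\vzero$; hence some $\vb_j\neq\vzero$, and I take $j$ to be the largest such index. Rewriting $\vb_j^\top X_j=h(Z)-\sum_{i<j}\vb_i^\top X_i$, the right-hand side is a function of $X_{-j}$ (all blocks but $X_j$): the sum over $i<j$ depends only on $X_1,\cdots,X_{j-1}$, and $Z$ depends only on $X_{-j}$ --- indeed $Z_i=\mN_i\mA_iX_i$ depends only on $X_i$ for $i\neq j$, while for $i=j$ the defining property $\tilde{\mN}_j\mSigma_{U_j,Y_j|X_1,\cdots,X_{j-1}}=\vzero$ together with the conditional form of Lemma~\ref{lem:1} gives $Z_j=\mN_jU_j=-\bar{\mN}_jY_j+(\text{linear in }X_1,\cdots,X_{j-1})$, and $Y_j$ is a function of $X_{j+1},\cdots,X_n$. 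Thus $\vb_j^\top X_j$ is almost surely determined by $X_{-j}$.

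The heart of the proof is the following claim, which I would prove by induction on $m=1,\cdots,n$: \emph{every linear functional of $X_m$ that is almost surely determined by $X_{-m}$ is almost surely a linear function of $Z$.} Granting the claim and applying it with $m=j$, the functional $\vb_j^\top X_j$ is linear in $Z$, so $\Var(\vb_j^\top X_j|Z)=\vb_j^\top\mSigma_{X_j|Z}\vb_j=0$; but $\vb_j$ is a nonzero vector in the range of the PSD matrix $\mSigma_{X_j|Z}$, so this quadratic form is strictly positive --- a contradiction. Hence $\mSigma_{T|Z}$ is non-singular.

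To prove the claim at index $m$ (assuming it for all smaller indices), I would split a functional $\vc^\top X_m$ as $\vc_{\mathrm{row}}^\top X_m+\vc_{\mathrm{null}}^\top X_m$, decomposing $\vc$ along the row space and the null space of $\mSigma_{X_m|X_1,\cdots,X_{m-1}}$. By Lemma~\ref{lem:1}, $\vc_{\mathrm{null}}^\top X_m$ is a linear function of $X_1,\cdots,X_{m-1}$, and since $\mA_m$ is a basis of that row space, $\vc_{\mathrm{row}}^\top X_m=\vu^\top U_m$ for some $\vu$. If $\vc^\top X_m$ is determined by $X_{-m}$ then so is $\vu^\top U_m$; conditioning on $X_1,\cdots,X_{m-1}$ and using that $\mB_m$ is a basis of the row space of $\mSigma_{X_{m+1:n}|X_1,\cdots,X_{m-1}}$ turns this into an almost-sure identity $\vu^\top(U_m-\E[U_m|X_{1:m-1}])=\vw^\top(Y_m-\E[Y_m|X_{1:m-1}])$ for some $\vw$, whence, by the conditional form of Lemma~\ref{lem:1}, $[\vu^\top\ -\vw^\top]$ lies in the row span of $\tilde{\mN}_m=[\mN_m\ \bar{\mN}_m]$; so $\vu^\top=\vt^\top\mN_m$ and $\vu^\top U_m=\vt^\top\mN_mU_m=\vt^\top Z_m$ is linear in $Z$. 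It remains to handle $\vc_{\mathrm{null}}^\top X_m=\vc^\top X_m-\vt^\top Z_m$, which is a function of $X_{-m}$ (since $\vc^\top X_m$ is, and, by the reasoning above, so is $Z_m$) and is also a linear functional $\sum_{i<m}\vc_i^\top X_i$ of $X_1,\cdots,X_{m-1}$; peeling terms off from the largest present index $k$ downward, each $\vc_k^\top X_k$ equals $\vc_{\mathrm{null}}^\top X_m$ (a function of $X_m\in X_{-k}$) minus the already-peeled terms (linear in $Z$, and $Z$ is a function of $X_{-k}$) minus the lower-index terms, hence is a function of $X_{-k}$, so the induction hypothesis at $k<m$ makes $\vc_k^\top X_k$ linear in $Z$; summing over $k$ finishes the claim. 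The base case $m=1$ is the special case in which the conditioning block is empty, and it is handled verbatim by the same computation.

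The main obstacle is precisely this claim: proving that the sequentially extracted $Z$ accounts for \emph{all} linear redundancy in $X$, not only the pairwise redundancy removed at each step. The difficulty is that a linear functional of $X_m$ may be forced only through a joint combination of several other blocks together with parts already absorbed into $Z_1,\cdots,Z_{m-1}$, so a single invocation of the two-variable argument of Theorem~\ref{thm:main-gauss} does not suffice; the nested induction above (outer on $m$, inner peeling over indices below $m$) is what recycles the information already stored in $Z$. The remaining ingredients --- splitting vectors against ranges and null spaces, and translating almost-sure linear identities into rank statements --- are routine consequences of Lemma~\ref{lem:1}.
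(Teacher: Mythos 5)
Your proof is correct, and it reorganizes the paper's argument around a reusable intermediate statement. The paper works directly with the singular vector $\va$: taking $j$ to be the largest index with $\va_j\neq\vzero$, it writes the identity $\va^\top T=\va'^\top Z$ as an equation whose right side involves only $X_1,\cdots,X_{j-1}$, and then iterates a three-case analysis (decomposing the relevant vectors against the row/null spaces of the conditional covariances $\mSigma_{(j-1)|1:j-2}$ and $\mSigma_{j:n|1:j-2}$) to swap $X_{j-1},X_{j-2},\ldots,X_1$ for $Z_{j-1},Z_{j-2},\ldots,Z_1$ in $j-2$ steps, arriving at $\va_j^\top T_j=\vc^\top Z$ a.s.\ and contradicting the definition of $\mC_j$. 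You instead isolate and prove by an outer induction on $m$ (with an inner peeling over indices below $m$) the clean claim that \emph{every} linear functional of $X_m$ almost surely determined by $X_{-m}$ is almost surely a linear function of $Z$, and then apply it once at $m=j$. The technical ingredients are identical in both --- Lemma~\ref{lem:1} and its conditional form, splitting vectors along the row/null space of a conditional covariance, and iterating from high to low index --- so the proofs are the same in substance; what you buy is a reusable, conceptually transparent statement of what $Z$ captures (all linear redundancy in $X$), at the cost of one more layer of induction than the paper's flatter chain of replacements. A small point worth noting: your reduction explicitly observes and uses that each $Z_k$ is a.s.\ a function of $X_{-k}$ (via the null-space identity $\mN_kU_k+\bar{\mN}_kY_k = (\text{linear in }X_{1:k-1})$ followed by $Y_k=\mB_kX_{k+1:n}$); the paper's proof relies on the same fact but leaves it tacit, so your version is a little more self-contained on this step.
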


\begin{proof}[ of Lemma~\ref{lem:T-full}]
    Suppose towards a contradiction that there exists $\va\neq \vzero$ such that $\va^\top \mSigma_{T|Z}= \vzero$. Let us partition $\va$ as $[\va_1,\cdots,\va_n]$, where $\va_i\in \mathbb{R}^{d_{T_i}}$.
    Let $j$ denote the largest index for which $\va_j\neq \vzero$. By Lemma~\ref{lem:1}, we have that
    \begin{equation}\label{eq:lin-T}
        \va^\top T = \va'^\top Z \quad \text{almost surely,}
    \end{equation}
    for some vector $\va'$. We recall that by definition of $T_i,Z_i$ in \eqref{eq:def-T} and \eqref{eq:def-Z}, they are linear functions of $X_i$. Also, by definition of $j$ as the largest index with $\va_j\neq \vzero$, we have that $T_{j+1}, \cdots, T_n$ do not appear in \eqref{eq:lin-T}; only $Z_{j+1},\cdots,Z_n$ appear. By expanding $T_i,Z_i$ for $i<j$ as linear functions of $X_i$, we get that there exist vectors $\ve_1,\ve_2$, $\vc_1,\vc_2,\vc_3$ such that
    \begin{equation}
    \begin{aligned}\label{eq:wts-1}
        & \va_j^\top T_j + \vc_1^\top [Z_{j+1}^\top\ \cdots\ Z_n^\top]^\top\\ = \ & \ve_1^\top X_{j}+ \ve_2^\top [X_{j+1}^\top\ \cdots\ X_{n}^\top]^\top\\
        = \ & \vc_2^\top X_{j-1} + \vc_3^\top [X_1\ \cdots\ X_{j-2}]^\top \text{ almost surely.}
    \end{aligned}
    \end{equation}
    We want to show that $X_{j-1}$ can be replaced by $Z_{j-1}$ in the previous equation (after possibly changing $\vc_2,\vc_3$). In particular, we want to show that there exist vectors $\tilde{\vc}_2,\tilde{\vc}_3$ such that
    \begin{equation}
    \begin{aligned}\label{eq:wts}
        & \va_j^\top T_j + \vc_1^\top [Z_{j+1}^\top\ \cdots\ Z_n^\top]^\top
        =  \tilde{\vc}_2^\top Z_{j-1} + \tilde{\vc}_3^\top [X_1^\top \ \cdots\ X_{j-2}^\top]^\top \text{ almost surely.}
    \end{aligned}
    \end{equation}
    
    We have three possible cases:\\
    $(i)$ $\vc_2$ is in the left null space of $\mSigma_{(j-1)|1:(j-2)}$: It follows from Lemma~\ref{lem:1} that $\vc_2^\top X_{j-1}$ is a linear function of $X_1,\cdots,X_{j-2}$. Hence, it can be replaced by a linear function of $X_1,\cdots,X_{j-2}$ in \eqref{eq:wts-1}, and  \eqref{eq:wts} holds with $\tilde{\vc}_2=\vzero$.\\
    $(ii)$ $[\ve_1\ \ve_2]$ is in the left null space of $\mSigma_{j:n|1:(j-2)}$: It follows from Lemma~\ref{lem:1} that $\ve_1^\top X_{j}+ \ve_2^\top [X_{j+1}^\top\ \cdots\ X_{n}^\top]^\top$ is a linear function of $X_1,\cdots,X_{j-2}$. Hence, \eqref{eq:wts} holds again with $\tilde{\vc}_2=\vzero$.\\
    $(iii)$ $\vc_2$ is in the row space of $\mSigma_{(j-1)|1:(j-2)}$ and $[\ve_1\ \ve_2]$ is in the row space of $\mSigma_{j:n|1:(j-2)}$: by definition of $\mA_{j-1},\mB_{j-1}$ in \eqref{eq:def-Ai} and \eqref{eq:def-Bi}, it follows that $\vc_2^\top =\bar{\vc}_2^\top \mA_{j-1}$ and $[\ve_1^\top\ \ve_2^\top]^\top= \bar{\ve}^\top \mB_{j-1}$ for some $\bar{\vc},\bar{\ve}$. Thus
    \begin{align}\label{eq:nolabel}
        \bar{\ve}^\top Y_{j-1}&=\ve_1^\top X_{j}+ \ve_2^\top [X_{j+1}^\top\ \cdots\ X_{n}^\top]^\top\nonumber \\
        &=\bar{\vc}_2^\top U_{j-1} + \vc_3^\top [X_1^\top\ \cdots\ X_{j-2}^\top]^\top \text{ almost surely.}
    \end{align}
    
    By Lemma~\ref{lem:1} it follows that $[-\bar{\vc}_2^\top\ \bar{\ve}^\top]$ is in the left null space of $\mSigma_{U_{j-1},Y_{j-1}|X_1,\cdots,X_{j-2}}$. By definition of $\mN_{j-1}$ in \eqref{eq:def-N} it follows that $\bar{\vc}_2$ is in the row space of $\mN_{j-1}$. Hence, we have that there is $\bar{\vc}_3$ such that
    \begin{align}
        &\ve_1^\top X_{j}+ \ve_2^\top [X_{j+1}^\top\ \cdots\ X_{n}^\top]^\top\nonumber \\
        = \ & \bar{\vc}_3^\top \mN_{j-1}U_{j-1} + \vc_3^\top [X_1^\top\ \cdots\ X_{j-2}^\top]^\top\nonumber \\
        = \ & \bar{\vc}_3^\top Z_{j-1} + \vc_3^\top [X_1^\top\ \cdots\ X_{j-2}^\top]^\top \text{ almost surely},
    \end{align}
    where the last equality follows by the definition of $Z_{j-1}=\mN_{j-1}U_{j-1}$ in \eqref{eq:def-Z}. Substituting in \eqref{eq:wts-1} it follows that there are vectors $\tilde{\vc}_2,\tilde{\vc}_3$ such that

    \begin{equation}
    \begin{aligned}
        \va_j^\top T_j + \vc_1^\top [Z_{j+1}^\top\ \cdots\ Z_n^\top]^\top = ~ \tilde{\vc}_2^\top Z_{j-1} + \tilde{\vc}_3^\top [X_1^\top\ \cdots\ X_{j-2}^\top]^\top \text{ almost surely.} 
    \end{aligned}
    \end{equation}
Equivalently, there are vectors $\tilde{\vc_2}, \tilde{\vc_3}$ such that
\begin{equation}
\begin{aligned}
        \va_j^\top T_j + \vc_1^\top [Z_{j+1}^\top\ \cdots\ Z_n^\top]^\top - \tilde{\vc}_2^\top Z_{j-1}
        = ~ \tilde{\vc}_3^\top [X_1^\top\ \cdots\ X_{j-2}^\top]^\top \text{ almost surely.} 
\end{aligned}
\end{equation}
Applying the same result $j-2$ times, it follows that $X_1\cdots,X_{j-2}$ can be replaced by $Z_1\cdots,Z_{j-2}$. Equivalently, there is a vector $\vc$ such that
\begin{equation}
    \va_j^\top T_j = \vc^\top Z \quad \text{almost surely}.
\end{equation}
This contradicts the fact that $T_j=\mC_j X_j$ where $\mC_j$ is a basis of the row space of $\mSigma_{X_i|Z_1,\cdots,Z_{n}}$ (recall that $\va_j\neq \vzero$).
This concludes the proof that the conditional covariance matrix of $T$ conditioned on $Z$ is non-singular.
\end{proof}

By definition of $T_i=\mC_iX_i$ in \eqref{eq:def-T}, where $\mC_i$ is a basis for the row space of $\mSigma_{X_i|Z_1,\cdots,Z_n}$, it follows that conditioned on $Z_1,\cdots,Z_n$, $X_i$ can be obtained from $T_i$. Hence, conditioned on $Z_1,\cdots,Z_n$, breaking the dependency between $X_1,\cdots,X_n$ reduces to breaking the dependency between $T_1,\cdots,T_n$. From Lemma~\ref{lem:T-full}, we have that conditioned on $Z$, $T$ is jointly Gaussian with full-rank covariance matrix. Hence, by the result in \cite{li2017distributed}, there is a $V_Z$ with $H(V_Z)<\infty$ such that $T_1 \indep \cdots \indep T_n | (Z, V_Z)$, and hence, $X_1 \indep \cdots \indep X_n | (Z, V_Z)$. Since the covariance matrix of $T$ conditioned on $Z$ does not depend on the value of $Z$ and is only a function of the covariance matrix of $Z$, then $V_Z$ can be chosen the same for all $Z$. Hence, we can refer to $V_Z$ as $V$, which gives that $X_1 \indep \cdots \indep X_n | (Z, V)$. This shows that
\begin{align}\label{eq:gen-upper}
    d(X_1,\cdots,X_n) \leq \sum_{i=1}^nd_{Z_i}.
\end{align}
We next prove a second property that the quantities defined in Table~\ref{tab:gauss-gen} satisfy, that $\mSigma_Z$ is a full-rank covariance matrix.
\begin{lemma}\label{lem:Z-full}
    The covariance matrix $\mSigma_Z$ of $Z = [Z_1,\cdots,Z_n]$, where $Z_i$ is defined in \eqref{eq:def-Z} is full-rank.
\end{lemma}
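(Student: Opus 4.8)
The plan is to argue by contradiction, reusing the mechanism behind Lemma~\ref{lem:rankNx} but now leveraging the triangular structure of the construction in Table~\ref{tab:gauss-gen}. Suppose $\mSigma_Z$ were singular; then there is $\va\neq\vzero$ with $\va^\top\mSigma_Z = \vzero$, which by Lemma~\ref{lem:1} is equivalent to $\va^\top Z = 0$ almost surely. I would partition $\va = [\va_1,\cdots,\va_n]$ conformally with $Z = [Z_1,\cdots,Z_n]$ and let $j$ be the largest index with $\va_j\neq\vzero$. Since $Z_i = \mN_iU_i = \mN_i\mA_iX_i$ depends on $X_i$ only, the relation $\sum_{i\le j}\va_i^\top Z_i = 0$ a.s.\ rearranges to $\va_j^\top Z_j = -\sum_{i<j}\va_i^\top Z_i$ a.s., so $\va_j^\top Z_j$ is almost surely a linear function of $X_1,\cdots,X_{j-1}$ and hence has zero variance conditioned on $X_1,\cdots,X_{j-1}$.

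The next step is to compute that conditional variance. Because $\mA_j$ is a basis of the row space of $\mSigma_{j|1:j-1}$, the covariance matrix of $U_j = \mA_jX_j$ conditioned on $X_1,\cdots,X_{j-1}$ (no conditioning when $j=1$) is $\mA_j\mSigma_{j|1:j-1}\mA_j^\top$, which is positive definite. Consequently
\begin{equation}
0 = \Var\!\left(\va_j^\top\mN_jU_j \mid X_1,\cdots,X_{j-1}\right) = \va_j^\top\mN_j\!\left(\mA_j\mSigma_{j|1:j-1}\mA_j^\top\right)\!\mN_j^\top\va_j ,
\end{equation}
and positive definiteness of the middle factor forces $\va_j^\top\mN_j = \vzero$.

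It then remains to verify that $\mN_j$ has full row rank, which I would establish exactly as in the proof of Lemma~\ref{lem:rankNx}. By construction $\tilde{\mN}_j = [\mN_j\ \bar{\mN}_j]$ is a basis of the left null space of $\mSigma_{U_j,Y_j|X_1,\cdots,X_{j-1}}$, and $\mSigma_{Y_j|X_1,\cdots,X_{j-1}} = \mB_j\mSigma_{j+1:n|1:j-1}\mB_j^\top$ is positive definite since $\mB_j$ is a basis of the row space of $\mSigma_{j+1:n|1:j-1}$. If $\vb\neq\vzero$ had $\vb^\top\mN_j = \vzero$, then $\vb^\top\tilde{\mN}_j = [\vzero\ \vb^\top\bar{\mN}_j]$ would be a nonzero element of the null space of $\mSigma_{U_j,Y_j|X_1,\cdots,X_{j-1}}$ (nonzero because $\tilde{\mN}_j$ has full row rank), giving $\vb^\top\bar{\mN}_j\mSigma_{Y_j|X_1,\cdots,X_{j-1}}\bar{\mN}_j^\top\vb = 0$ and hence $\vb^\top\bar{\mN}_j = \vzero$, contradicting the full row rank of $\tilde{\mN}_j$. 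So $\mN_j$ has full row rank, and $\va_j^\top\mN_j = \vzero$ yields $\va_j = \vzero$, contradicting the choice of $j$. Thus $\mSigma_Z$ is non-singular.

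I expect the only real subtlety to be the bookkeeping that makes the top-index $j$ argument go through: one needs that $Z_i$ for $i<j$ involves only $X_1,\cdots,X_{j-1}$, and, more importantly, that the pieces $U_j$ and $Y_j$ remain non-degenerate after conditioning on $X_1,\cdots,X_{j-1}$. This non-degeneracy is exactly what the basis-of-row-space definitions of $\mA_i,\mB_i,\tilde{\mN}_i$ in Table~\ref{tab:gauss-gen} are engineered to guarantee, and it is the crux of the lemma; everything else is the same positive-definiteness bookkeeping already used in Lemma~\ref{lem:rankNx}.
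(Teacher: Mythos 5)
Your proof is correct and follows essentially the same path as the paper's: assume a null vector $\va$ of $\mSigma_Z$, pick the largest index $j$ with $\va_j\neq\vzero$, observe that $\va_j^\top Z_j = \va_j^\top\mN_j U_j$ is then a.s.\ a function of $X_1,\dots,X_{j-1}$, and use the positive definiteness of $\mSigma_{U_j|X_1,\dots,X_{j-1}}$ together with full row rank of $\mN_j$ to force $\va_j=\vzero$. The only cosmetic difference is that you spell out the conditional-variance computation and reprove the full-row-rank of $\mN_j$ inline, where the paper invokes Lemma~\ref{lem:1} and Lemma~\ref{lem:rankNx} directly.
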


\begin{proof}[ of Lemma~\ref{lem:Z-full}]
    Suppose towards a contradiction that $\mSigma_Z$ is not full-rank and pick a vector $\va\neq \vzero$ in the null space of $\mSigma_Z$. Partition $\va$ as $[\va_1,\cdots,\va_n]$, where $\va_i\in \mathbb{R}^{d_{Z_i}}$. Let $j$ be the largest index with $\va_j\neq \vzero$. By definition of $Z_i$ in \eqref{eq:def-Z} as linear function of $X_i$, we get that there is a vector $\vb$ such that
    \begin{align}
        \va_{j}^\top Z_{j} = \vb^\top [X_1^\top\ \cdots\ X_{j-1}^\top]^\top \quad \text{almost surely}.
    \end{align}
    Substituting for $Z_j$ using \eqref{eq:def-Z} we get that
    \begin{align} \label{eq:u-null}
        (\va_{j}^\top \mN_j) U_{j} = \vb^\top [X_1^\top\ \cdots\ X_{j-1}^\top]^\top \quad \text{almost surely}.
    \end{align}
    We notice by definition of $U_i,Y_i$ in \eqref{eq:def-U} and \eqref{eq:def-Y} that $\mSigma_{U_i|X_1,\cdots,X_{i-1}}$ and $\mSigma_{Y_i|X_1,\cdots,X_{i-1}}$ are both full-rank. By Lemma~\ref{lem:rankNx}, we also get that $\mN_j$ is full-rank. Hence, the vector $\va_{j}^\top \mN_j\neq \vzero$. By \eqref{eq:u-null} and Lemma~\ref{lem:1} the vector $\va_{j}^\top \mN_j\neq \vzero$ is in the null space of $\mSigma_{U_i|X_1,\cdots,X_{i-1}}$ which contradicts the fact that the matrix is full rank.
    We conclude that $\mSigma_Z$ is full rank.
\end{proof}

{We notice that by definition of $\mN_i$ in \eqref{eq:def-N} and Lemma~\ref{lem:1}, we have that $\mN_iU_i+\hat{\mN}_iY_i=\mA [X_1^\top\ \cdots\ X_{i-1}^\top]^\top$ a.s. for some matrix $\mA$, hence, $Z_i=\mN_iU_i=\mA'X_{-i}$ a.s. for some matrix $\mA'$ (as both $Y_i,[X_1,\cdots,X_{i-1}]$ do not include $X_i$).} By Lemma~\ref{lem:main-gauss}, as $Z_i=\mA' X_{-i}$ a.s., then, if $(V,W)$ breaks the dependency between $X_1,\cdots,X_n$, we must have that $Z_i$ is a deterministic function of $(V,W)$ for all $i$. And hence, similar to \eqref{eq:H1}, we get that
\begin{equation}
    Z = \mB W \quad \text{almost surely},
\end{equation}
for some matrix $\mB$. Next, by multiplying both sides by $Z^\top$ and taking the expectation we get that
\begin{equation}
    \mSigma_Z = \mB \mathbb{E}[WZ^\top].
\end{equation}
Hence, we have that
\begin{align}
    \sum_{i=1}^nd_{Z_i} \stackrel{(i)}{=} \mbox{rank}(\mSigma_Z) \leq \mbox{rank}(\mB) \leq d_W,
\end{align}
where $(i)$ follows from Lemma~\ref{lem:Z-full}. As $V,W$ are arbitrary in $\mathcal{W}$, we have that
\begin{align}\label{eq:gen-lower}
    d(X_1,\cdots,X_n) \geq \sum_{i=1}^nd_{Z_i}.
\end{align}
Combining \eqref{eq:gen-upper} and \eqref{eq:gen-lower}, it follows that
\begin{align}\label{eq:gen-dim}
    d(X_1,\cdots,X_n) = \sum_{i=1}^nd_{Z_i}.
\end{align}
We notice that by definition of $Z_i$ in \eqref{eq:def-Z} we have that
\begin{equation}\label{eq:dz-1}
    d_{Z_i}=\mbox{rank}(\mSigma_{i|1:i-1}) + \mbox{rank}(\mSigma_{i+1:n|1:i-1})- \mbox{rank}(\mSigma_{i:n|1:i-1}).
\end{equation}
To further simplify this expression, we state and prove the following property about the rank of the conditional covariance matrix $\mSigma_{X|Y}$ for two random vectors $X,Y$.
\begin{proposition}\label{prop:cond-rank}
    Let (X,Y) be a pair of jointly Gaussian random variables. We have that
    \begin{equation}
        \mbox{rank}(\mSigma_{X|Y}) = \mbox{rank}(\mSigma) - \mbox{rank}(\mSigma_{Y}),
    \end{equation}
    where $\mSigma, \mSigma_Y, \mSigma_{X|Y}$ are the covariance matrices of $[X,Y], Y$ and $X$ conditioned on $Y$ respectively.
\end{proposition}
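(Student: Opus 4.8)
The plan is to turn the rank identity into a statement about an orthogonal decomposition of the jointly Gaussian vector $[X,Y]$. Writing $\mSigma=\left[\begin{matrix}\mSigma_X & \mSigma_{XY}^\top \\ \mSigma_{XY} & \mSigma_Y\end{matrix}\right]$ as in the notation section, recall that the conditional covariance is $\mSigma_{X|Y}=\mSigma_X-\mSigma_{XY}^\top\mSigma_Y^{+}\mSigma_{XY}$, where $\mSigma_Y^{+}$ is the Moore--Penrose pseudo-inverse (needed since $\mSigma_Y$ may be singular). First I would introduce the innovation $\tilde X:=X-\mSigma_{XY}^\top\mSigma_Y^{+}Y$ and establish two facts: $(a)$ $\tilde X$ is uncorrelated with $Y$, and $(b)$ $\Cov(\tilde X)=\mSigma_{X|Y}$. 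Fact $(b)$ follows by expanding $\mathbb{E}[\tilde X\tilde X^\top]$. For $(a)$, one computes $\mathbb{E}[\tilde XY^\top]=\mSigma_{XY}^\top-\mSigma_{XY}^\top\mSigma_Y^{+}\mSigma_Y$, so it suffices to show $\mSigma_{XY}^\top\mSigma_Y^{+}\mSigma_Y=\mSigma_{XY}^\top$, equivalently $\ker(\mSigma_Y)\subseteq\ker(\mSigma_{XY}^\top)$. This is exactly where positive semidefiniteness of $\mSigma$ enters: if $v\in\ker(\mSigma_Y)$ and $w=[\vzero^\top\ v^\top]^\top$, then $w^\top\mSigma w=v^\top\mSigma_Y v=0$, so $\mSigma\succeq 0$ forces $\mSigma w=\vzero$, and reading off the block of $\mSigma w$ corresponding to $X$ gives $\mSigma_{XY}^\top v=\vzero$.

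With $(a)$ and $(b)$ in hand, observe that $[\tilde X,Y]$ is obtained from $[X,Y]$ by left-multiplication with the block lower-triangular matrix having identity diagonal blocks and off-diagonal block $-\mSigma_{XY}^\top\mSigma_Y^{+}$; this matrix is invertible. Since rank is invariant under congruence by an invertible matrix, the covariance matrix $\mSigma'$ of $[\tilde X,Y]$ has $\mbox{rank}(\mSigma')=\mbox{rank}(\mSigma)$. On the other hand, because $\tilde X$ and $Y$ are uncorrelated --- hence independent, being jointly Gaussian --- $\mSigma'$ is block diagonal with diagonal blocks $\mSigma_{X|Y}$ and $\mSigma_Y$, so $\mbox{rank}(\mSigma')=\mbox{rank}(\mSigma_{X|Y})+\mbox{rank}(\mSigma_Y)$. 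Comparing the two expressions yields $\mbox{rank}(\mSigma_{X|Y})=\mbox{rank}(\mSigma)-\mbox{rank}(\mSigma_Y)$, as claimed.

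The only point requiring care is the singular case where $\mSigma_Y$ is not of full rank --- which is precisely the regime of interest here, and the reason the pseudo-inverse together with the semidefiniteness argument is needed rather than the textbook Schur-complement formula. An alternative that avoids pseudo-inverses is to reduce to the nonsingular case first: by Lemma~\ref{lem:1} (as in Corollary~\ref{cor:1}) there is a sub-vector $Y_J$ with $|J|=\mbox{rank}(\mSigma_Y)$, nonsingular covariance, and $Y=\mB Y_J$ almost surely for some full-column-rank matrix $\mB$; then conditioning on $Y$ coincides with conditioning on $Y_J$, so $\mSigma_{X|Y}=\mSigma_{X|Y_J}$, while $\mbox{rank}(\mSigma)=\mbox{rank}(\Cov([X,Y_J]))$ and $\mbox{rank}(\mSigma_Y)=\mbox{rank}(\mSigma_{Y_J})$ because multiplying a symmetric matrix on one side by a full-column-rank matrix and on the other by its transpose preserves rank. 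After this reduction the identity becomes the classical Schur-complement rank formula $\mbox{rank}(\mSigma)=\mbox{rank}(\mSigma_Y)+\mbox{rank}(\mSigma_X-\mSigma_{XY}^\top\mSigma_Y^{-1}\mSigma_{XY})$. Everything else --- rank invariance under invertible congruence, block-diagonality of the covariance of uncorrelated jointly Gaussian vectors, and the two moment expansions involving $\tilde X$ --- is routine and needs at most a line each; I expect no genuine obstacle beyond carefully bookkeeping the pseudo-inverse identities in the singular case.
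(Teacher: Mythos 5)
Your proof is correct, but it takes a genuinely different route from the paper's. The paper reduces to the case of nonsingular $\mSigma_Y$ via Lemma~\ref{lem:1} and then characterizes the null space of $\mSigma_{X|Y}$ through that same lemma: a nonzero $\va$ lies in the null space of $\mSigma_{X|Y}$ precisely when $\va^\top X$ is almost surely a linear function of $Y$, which identifies the null space of $\mSigma_{X|Y}$ with the row space of $\mN_X$, the $X$-block of a basis $\mN = [\mN_X\ \mN_Y]$ of the null space of $\mSigma$. Combining this with the full-row-rankness of $\mN_X$ (established as in Lemma~\ref{lem:rankNx}) gives $d_X - \mbox{rank}(\mSigma_{X|Y}) = \mbox{rank}(\mN) = d_X + d_Y - \mbox{rank}(\mSigma)$ and the result follows. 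Your primary argument instead uses the innovation decomposition $\tilde X = X - \mSigma_{XY}^\top\mSigma_Y^{+}Y$ and congruence invariance of rank, which is more self-contained (it does not lean on Lemma~\ref{lem:1} at all) and closer to the standard linear-algebra treatment of Schur complements of PSD matrices. The one subtlety both approaches must handle is singularity of $\mSigma_Y$; the paper sidesteps it by first passing to a nonsingular sub-vector $Y'$, while you handle it head-on via the pseudo-inverse together with the positive-semidefiniteness argument showing $\ker(\mSigma_Y)\subseteq\ker(\mSigma_{XY}^\top)$. Your sketched alternative (reduce to nonsingular $\mSigma_Y$, then invoke the Schur-complement rank formula) is closer in spirit to the paper's reduction, except the paper then \emph{proves} the rank formula by the null-space count rather than citing it. One cosmetic slip: the transformation matrix $\bigl[\begin{smallmatrix} I & -\mSigma_{XY}^\top\mSigma_Y^{+} \\ 0 & I \end{smallmatrix}\bigr]$ acting on $[X^\top\ Y^\top]^\top$ is block \emph{upper}-triangular, not lower-triangular, but this does not affect the argument since the only property used is its invertibility.
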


\begin{proof}[ of Proposition~\ref{prop:cond-rank}]
    For a vector $\va\neq \vzero$ to be in the null space of $\mSigma_{X|Y}$, we have by Lemma~\ref{lem:1} that
    \begin{equation}\label{eq:xy}
        \va^\top X = \vb^\top Y \text{ almost surely}
    \end{equation}
    for some vector $\vb$. Without loss of generality we assume that $\mSigma_Y$ is full rank, otherwise by Lemma~\ref{lem:1} we can replace $Y$ with $Y'=\mB Y$, where $\mB$ is a matrix containing basis for the row space of $\mSigma_Y$. Hence, we may assume without loss of generality that
    \begin{equation}\label{eq:y-full}
        d_Y = \mbox{rank}(\mSigma_Y)
    \end{equation}
    Let $\mN$ denote the basis of the covariance matrix $\mSigma$,
    \begin{equation}\label{eq:NN2}
        \mN=[\mN_X^{r(\mN)\times d_X}\ \mN_Y^{r(\mN)\times d_X}].
    \end{equation}
    Similarly to Lemma~\ref{lem:rankNx}, we get that $\mSigma_Y$  full rank implies that
    \begin{equation}\label{eq:nx-full}
        \mN_X \text{ is full row rank}
    \end{equation}
    We also notice by \eqref{eq:xy} that the rank of $\mN_X$ is the dimension of the null space of $\mSigma_{X|Y}$. This is because from \eqref{eq:xy}, $[\va\ \vb]$ is in the null space of $\mSigma$ (by Lemma~\ref{lem:1}), hence, $\va$ is in the row space of $\mN_X$ by definition of $\mN_X$ in \eqref{eq:NN2}; conversely, every vector in the row space of $\mN_X$ satisfies \eqref{eq:xy} for some $\vb$ (also by definition of $\mN_X$). Hence, by noticing that the dimension of the null space of $\mSigma_{X|Y}$ is $d-\mbox{rank}(\mSigma_{X|Y})$ we get that
    \begin{align}
        d_X - \mbox{rank}(\mSigma_{X|Y}) &=~ \mbox{rank}(\mN_X)\stackrel{\eqref{eq:nx-full}}{=}~\mbox{rank}(\mN)\nonumber =~ d_X+d_Y-\mbox{rank}(\mSigma).
    \end{align}
    It follows that
    \begin{align}
        \mbox{rank}(\mSigma_{X|Y}) =~ & \mbox{rank}(\mSigma) - d_Y \nonumber
        \stackrel{\eqref{eq:y-full}}{=}~ \mbox{rank}(\mSigma) - \mbox{rank}(\mSigma_Y).
    \end{align}
\end{proof}
By substituting \eqref{eq:dz-1} in \eqref{eq:gen-dim} we get that
\begin{equation}
\begin{aligned}
    d(X_1,&\cdots,X_n) =~ \sum_{i=1}^n d_{Z_i}
    =~ \sum_{i=1}^n \left(\mbox{rank}(\mSigma_{i|1:i-1}) + \mbox{rank}(\mSigma_{i+1:n|1:i-1}) - \mbox{rank}(\mSigma_{i:n|1:i-1}) \right) \\
    \stackrel{(i)}{=} ~ & \sum_{i=1}^n \left(\mbox{rank}(\mSigma_{1:i}) - \mbox{rank}(\mSigma_{1:i-1}) + \mbox{rank}(\mSigma_{-i})- \mbox{rank}(\mSigma_{1:i-1}) - \mbox{rank}(\mSigma) + \mbox{rank}(\mSigma_{1:i-1})\right) \\
    \stackrel{(ii)}{=} ~ & \mbox{rank}(\mSigma) + \left(\sum_{i=1}^n \mbox{rank}(\mSigma_{-i})\right) - (n)\ \mbox{rank}(\mSigma) \\
    = ~ & \left(\sum_{i=1}^n\ \mbox{rank}(\mSigma_{-i})\right) - (n-1) \mbox{rank}(\mSigma),
\end{aligned}
\end{equation}
where $(i)$ follows by Proposition~\ref{prop:cond-rank}, and $(ii)$ follows from the fact that the first two terms inside the summation form a telescopic summation.
\end{proof}
\emph{Complexity of Computing $W$.} It is easy to see that every step (finding row space, inverse, matrix multiplication and summations) in the for loop in Algorithm~\ref{alg:gauss-gen} can be performed in $O((\sum_{i=1}^nd_{X_i})^3)$. It follows that the overall time complexity of computing $W$ is $O((\sum_{i=1}^nd_{X_i})^4)$.
\begin{corollary}
The time complexity of computing $W$ in Algorithm~\ref{alg:gauss-gen} is $O((\sum_{i=1}^nd_{X_i})^4)$.
\end{corollary}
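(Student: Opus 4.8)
The plan is to charge the cost of Algorithm~\ref{alg:gauss-gen} line by line, observing that ``computing $W$'' amounts to producing the linear map $X\mapsto W$, i.e.\ the matrices $\mA_i,\mB_i,\tilde\mN_i$ (and hence the projections $\mN_i$), together with whatever auxiliary covariance matrices are needed to carry the recursion forward: the conditional covariances $\mSigma_{i|1:i-1}$ and $\mSigma_{i+1:n|1:i-1}$, and the covariances $\mSigma_{U_i},\mSigma_{Y_i},\mSigma_{U_iY_i}$ of the intermediate variables. Writing $d:=\sum_{i=1}^n d_{X_i}$, the key structural observation is that every matrix appearing in the algorithm is obtained from the $d\times d$ covariance matrix $\mSigma$ of $X$ by taking a block, a Schur complement (for the conditioning), a row space, or a null space; and every intermediate random vector $U_i,Y_i,Z_i,T_i$ is a linear image of a subvector of $X$, hence has dimension at most $d$. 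Consequently every matrix manipulated by the algorithm has both dimensions at most $d$ (at most $2d$ columns for the stacked cross terms), which is all one needs.

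First I would handle the two conditioning steps (lines 2 and 4): a conditional covariance such as $\mSigma_{i|1:i-1}=\mSigma_{ii}-\mSigma_{i,1:i-1}\mSigma_{1:i-1,1:i-1}^{-1}\mSigma_{1:i-1,i}$ costs one (pseudo-)inversion and a constant number of products of matrices of size at most $d$, i.e.\ $O(d^3)$. Extracting a basis of the row space (lines 2, 4) or of the null space (line 6) of such a matrix is $O(d^3)$ by Gaussian elimination / a rank-revealing factorization. The remaining lines — forming $U_i=\mA_iX_i$, $Y_i=\mB_i[X_{i+1},\dots,X_n]$, $Z_i=\mN_iU_i$, i.e.\ composing the already-computed linear maps and, if one tracks them, updating the corresponding covariance matrices — are matrix products of matrices of size at most $d$, hence $O(d^3)$ each. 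Thus one pass of the loop body costs $O(d^3)$; since the loop runs $n$ times and $n\le d$ (each $X_i$ is at least one-dimensional, a zero-dimensional component being dropped without affecting $W$), the total cost of producing $W=[Z_1,\dots,Z_n]$ is $O(n\,d^3)=O(d^4)=O\!\big((\sum_{i=1}^n d_{X_i})^4\big)$, which is the claim. (The post-loop steps 10--11 construct the finite-entropy $V$ via \cite{li2017distributed} and are not part of computing $W$; were one to include them, each $\mC_i$ is again an $O(d^3)$ conditional-covariance-plus-row-space computation, with $n\le d$ of them, so they do not affect the bound beyond the cost of the \cite{li2017distributed} construction.)

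I do not expect a genuine obstacle here: the argument is bookkeeping. The only two points that require a line of justification are (i) that no intermediate matrix ever exceeds size $\Theta(d)$ — which is exactly the structural observation above, that everything is built from blocks, Schur complements, row spaces and null spaces of the fixed $d\times d$ matrix $\mSigma$, and every intermediate vector is a linear image of a subvector of $X$ — and (ii) the bound $n\le d$ on the number of iterations. With both in hand, the ``cubic per line, linearly many lines'' accounting immediately yields the stated $O\big((\sum_{i=1}^n d_{X_i})^4\big)$ complexity.
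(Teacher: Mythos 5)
Your proposal is correct and takes essentially the same approach as the paper: the paper observes in one sentence that each step in the loop (row-space/null-space extraction, (pseudo-)inversion, matrix products and sums) costs $O\big((\sum_i d_{X_i})^3\big)$ and concludes the overall $O\big((\sum_i d_{X_i})^4\big)$ bound from the loop running $n\le\sum_i d_{X_i}$ times. You simply make explicit the two bookkeeping facts the paper leaves implicit — that every intermediate matrix has dimensions $O(\sum_i d_{X_i})$ because everything is derived from blocks/Schur complements/row- and null-spaces of the fixed covariance matrix, and that $n\le\sum_i d_{X_i}$ — so there is no gap and no difference in method.
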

\vspace{+.2in}
\lemRCID*
\begin{proof}[ of Lemma~\ref{lem:RCID}]
    Let $W$ be such that $W\in \sW$ and $d_W=d(X_1,\cdots,X_n)$. As $W\in \sW$, we have that $W$ is a linear function of $[X_1,\cdots,X_n]$, hence, it is jointly Gaussian. Then, we have that $d^R(W)\leq d_W$ \cite{renyi1959dimension}. This shows that
    \begin{equation}\label{eq:ren-1}
        d^R(X_1,\cdots,X_n)\leq d(X_1,\cdots,X_n).
    \end{equation}
    To show that $d^R(X_1,\cdots,X_n)\geq d(X_1,\cdots,X_n)$, we first observe that for every $W\in \mathcal{W}$, we have that $d^R(W)\in \mathbb{N}$, hence, the infimum in the definition of $d^R(X_1,\cdots,X_n)$ can be replaced by a minimum. Then there exists $W$ such that $W\in \sW$ and $d^R(W)=d^R(X_1,\cdots,X_n)$. As $W\in \sW$, we have that $W$ is jointly Gaussian. By Lemma~\ref{lem:1}, we have that there is $I\subseteq \{1,...,d_W\}$ such that $W_I$ has a non-singular covariance matrix and $W=\mB W_I$ for some matrix $\mB$. By construction of $W_I$ we have that $d^R(W_I)\leq d^R(W)$. Hence, without loss of generality, we assume that $\mSigma_W$ is non-singular, otherwise, we can replace $W$ by $W_I$. We also note that since $\mSigma_W^{1/2}W$ has independent entries, we have that $d^R(W)=d^R(\mSigma_W^{1/2}W)=d_W$ \cite{renyi1959dimension}. This shows that
    \begin{equation}\label{eq:ren-2}
        d(X_1,\cdots,X_n)\leq d^R(X_1,\cdots,X_n).
    \end{equation}
    Combining \eqref{eq:ren-1} and \eqref{eq:ren-2} concludes the proof.
\end{proof}

\subsection{Proof of Theorem~\ref{thm:gk}}\label{app:thm-gk}
\thmgk*
\begin{proof}[ of Theorem~\ref{thm:gk}]
    Let $\mN_{\tilde{\mSigma}}$ be the right null space of the matrix $\tilde{\mSigma}$. We will show that $\mbox{rank}(\mN_{\tilde{\mSigma}})=d^{GK}(X_1,\cdots,X_n)$.
    As in the proof of Lemma~\ref{lem:RCID}, we observe that for every $W$ in the optimization set, we have that $d^R(W)\in \mathbb{N}$. We also note that $d^{GK}(X_1,\cdots,X_n)\leq d_{X_1}$. Hence, the supremum in the definition of $d^{GK}$ is attainable; namely, there exists $W,f_1,\cdots,f_n$ with $W=f_i(X_i), \forall i\in [n]$ and $d^R(W)=d^{GK}(X_1,\cdots,X_n)$. As $f_i$ is linear, we have that $W$ is jointly Gaussian. By Lemma~\ref{lem:1}, we have that there is $I\subseteq \{1,...,d_W\}$ such that $W_I$ has a non-singular covariance matrix and $W=\mG W_I$ for some matrix $\mG$. By construction of $W_I$ we have that $d^R(W_I)\leq d^R(W)$. Hence, without loss of generality, we assume that $\mSigma_W$ is non-singular, otherwise, we can replace $W$ by $W_I$. We also note that since $\mSigma_W^{1/2}W$ has independent entries, we have that $d^R(W)=d^R(\mSigma_W^{1/2}W)=d_W$ \cite{renyi1959dimension}. 
    This shows that there is $W$ with $d_W=d^{GK}(X_1,\cdots,X_n)$ and matrices $\mK_1,\cdots,\mK_n$ such that $W=\mK_1X_1=\cdots=\mK_nX_
n$ almost surely and $W$ is non-singular, hence, $\mK_i$ is full row rank for each $i$.  Hence, the matrix
\begin{equation}
    \mN_W = \begin{bmatrix}
        \mK_1 \ -\mK_2 \ \mK_2 \ -\mK_3 \ \mK_3 \ \cdots \ \mK_{n-1} \ -\mK_n
    \end{bmatrix}
\end{equation}

is in the null space of $\tilde{\mSigma}$ by Lemma~\ref{lem:1}. As $\mK_1$ is full row rank, we get that $\mN_W$ is full row rank, hence,
\begin{equation}\label{eq:gk-lower}
    \mbox{rank}(\mN_{\tilde{\mSigma}})\geq \mbox{rank}(\mN_W) = r(\mN_W) = d^{GK}(X_1,\cdots,X_n).
\end{equation}
It remains to show that $\mbox{rank}(\mN_{\tilde{\mSigma}})\leq d^{GK}(X_1,\cdots,X_n)$. Let us partition $\mN_{\tilde{\mSigma}}$ (recall that $N$ is the basis for the left null space of $\tilde{\mSigma}$) as
\begin{equation}
    \mN_{\tilde{\mSigma}} = \begin{bmatrix}
        \mN_1^{r(\mN)\times d_{X_1'}} & \mN_2^{r(\mN)\times d_{X_2'}} & \cdots & \mN_n^{r(\mN)\times d_{X_n'}}
    \end{bmatrix}.
\end{equation}
By Lemma~\ref{lem:1} we get that the following is a common function
\begin{equation}\label{eq:gk-equal}
    W=\mN_1 \mF_1 X_1 = -\mN_2 \mF_2 X_2=\cdots = (-1)^{n+1} \mN_n \mF_n X_n.
\end{equation}
Hence, by definition of $d^{GK}$ we get that
\begin{equation}\label{eq:gk-1}
    d^{GK}(X_1,\cdots,X_n)\geq d^R(W)
\end{equation}
where the common function of variable $X_i$ achieving this is
\begin{equation}\label{eq:gk-construct}
    W=(-1)^{i+1}\mN_i \mF_i X_i.
\end{equation}
We observe that if $\va^\top \mN_1=\vzero$ for some vector $\va$, then by \eqref{eq:gk-equal} we have that $\va^\top \mN_iX'_i=\vzero, \forall i\in [n]$. As $X'_i$ has non-singular covariance matrix by construction, we get that $\va^\top \mN_i=\vzero, \forall i\in [n]$, hence, $\va^\top \mN=\vzero$. As $\mN_{\tilde{\mSigma}}$ is full row rank, we conclude that $\va=\vzero$. This shows that $\va^T \mN_1=\vzero$ if and only if $\va=\vzero$, hence, $\mN_1$ is full row rank.

We notice that since $X_1'=\mF_1X_1$ has non-singular covariance matrix and $\mN_1$ is full row rank, then $W$ has a non-singular covariance matrix. Hence, we have that $d_W=d^R(W)$, hence from \eqref{eq:gk-1} we get that
\begin{equation}\label{eq:gk:upper}
    d^{GK}(X_1,\cdots,X_n)\geq d_W=r(\mN_1)=\mbox{rank}(\mN_{\tilde{\mSigma}})
\end{equation}
From \eqref{eq:gk-lower} and \eqref{eq:gk:upper} we get that
\begin{equation}
    d^{GK}(X_1,\cdots,X_n)=\mbox{rank}(\mN_{\tilde{\mSigma}}) = r(\tilde{\mSigma})-\mbox{rank}(\tilde{\mSigma}).
\end{equation}
\end{proof}

\subsection{Proofs of Lemma~\ref{lm:range} and Lemma~\ref{lm:one}}
In this subsection, we give the proofs of the following two properties of covariance matrices and the common information, which are used in both Theorems~\ref{thm:seq} and ~\ref{thm:appr}.
\lemrange*
\begin{proof}[ of Lemma~\ref{lm:range}]
Note that singular values are non-negative. In the following, we show $\sigma_i, \rho_i \le 1$. Let $\mU\mLambda\mV^\top$ be the singular value decomposition of $\mSigma_{X}^{-1/2}\mSigma_{XY}\mSigma_{Y}^{-1/2}$, where $\mLambda=\text{diag}(\rho_1,\cdots,\rho_d)$. We denote $X' = \mU^\top\mSigma_{X}^{-1/2} X$, and $Y' = \mV^\top\mSigma_{Y}^{-1/2} Y$, $r_X = \text{rank}(\mSigma_X)$ and $r_Y =\text{rank}(\mSigma_Y)$. Then the covariance matrix $\mSigma'$ of $X', Y'$ are given by
$$
\mSigma_{X'} = \left[\begin{matrix}
    \mI_{r_X} & \vzero \\ \vzero & \vzero
\end{matrix}\right],
\mSigma_{Y'} = \left[\begin{matrix}
    \mI_{r_Y} & \vzero \\ \vzero & \vzero
\end{matrix}\right],
\mSigma_{X'Y'} = \mLambda.
$$
For $i \le \min\{r_X,r_Y\}$, we have that $\sigma_i=0$. By Cauchy-Schwarz inequality, if $i \le \min\{r_X,r_Y\}$, we have that $\sigma_i^2 = \text{cov}(X'_i,Y'_i)^2 \le \text{var}(X'_i)\text{var}(Y'_i) = 1$.
\end{proof}

\lemone*
\begin{proof}[ of Lemma~\ref{lm:one}]
Similar to the proof of Lemma~\ref{lm:range}, let $\mU\mLambda\mV^\top$ be the singular value decomposition of $\mSigma_{X}^{-1/2}\mSigma_{XY}\mSigma_{Y}^{-1/2}$, where $\mLambda=\text{diag}(\rho_1,\cdots,\rho_d)$. We let $X' = \mU^\top\mSigma_{X}^{-1/2} X$, and $Y' = \mV^\top\mSigma_{Y}^{-1/2} Y$, $r_X = \text{rank}(\mSigma_X)$ and $r_Y =\text{rank}(\mSigma_Y)$. Then the covariance matrix $\mSigma'$ of $X', Y'$ are given by
$$
\mSigma_{X'} = \left[\begin{matrix}
    \mI_{r_X} & \vzero \\ \vzero & \vzero
\end{matrix}\right],
\mSigma_{Y'} = \left[\begin{matrix}
    \mI_{r_Y} & \vzero \\ \vzero & \vzero
\end{matrix}\right],
\mSigma_{X'Y'} = \mLambda.
$$
We notice that the rank of the covariance matrix $\text{rank}(\mSigma') = r_{X}+r_{Y}-\sum_i^{\min\{d_X, d_Y\}} \mathbbm{1}\{\sigma_i=1\}$. By Theorem~\ref{thm:main-gauss}, we have that $d(X',Y') = \text{rank}(\mSigma_{X'})+\text{rank}(\mSigma_{Y'})-\text{rank}(\mSigma')=\sum_i^{\min\{d_X, d_Y\}} \mathbbm{1}\{\sigma_i=1\}$. Since one-to-one transformations preserve the ranks of covariance matrices, $d(X,Y)=d(X',Y')= \sum_i^{\min\{d_X, d_Y\}} \mathbbm{1}\{\sigma_i=1\}$.
\end{proof}

\subsection{Proof of Theorem~\ref{thm:seq}}\label{sec:pf-thm1}
\thmseq*
\begin{proof}[ of Theorem~\ref{thm:seq}]
Let $d=\min\{d_X, d_Y\}$. We use $\{\sigma_i\}_{i=1}^d$ 
and $\{\rho_i(\epsilon)\}_{i=1}^d$ to denote the singular values of $\mSigma_X^{-1/2}\mSigma_{XY}\mSigma_{Y}^{-1/2}$ and $\mSigma_{X_\epsilon}^{-1/2}\mSigma_{X_\epsilon Y_\epsilon}\mSigma_{Y_\epsilon}^{-1/2}$, respectively, in a decreasing order. 
As definition \eqref{def: seq} requires that $|\rho_i(\epsilon) - \sigma_i| = \epsilon$, and from Lemma~\ref{lm:range} we have that $0 \le \rho_i(\epsilon) \le 1$, then $\rho_i(\epsilon) = 1-\epsilon$ whenever $\sigma_i = 1$. Let $k=\sum_i^d \mathbbm{1}\{\sigma_i=1\}$ and $l = \argmax\{\sigma_i | \forall i: \sigma_i\ne 1\}$. Since we assume $\{\sigma_i\}$ are in decreasing order, $l = k+1$. Recall that
\begin{equation}
    C(X_\epsilon, Y_\epsilon) = \frac{1}{2} \sum_{i=1}^{d} \log\frac{1+\rho_i(\epsilon)}{1-\rho_i(\epsilon)}
\end{equation}
Since $\frac{1+\rho_i(\epsilon)}{1-\rho_i(\epsilon)} \ge 0$ and is an increasing function of $\rho_i(\epsilon)$, then for $\epsilon \le {(1-\sigma_l)}/{2}$ we have that
\begin{equation}
\begin{aligned}
    \frac{k}{2} \log\frac{2-\epsilon}{\epsilon} & \le C(X_\epsilon, Y_\epsilon) \le \frac{k}{2} \log\frac{2-\epsilon}{\epsilon} + \frac{d-k}{2} \log\frac{1+\sigma_l+\epsilon}{1-\sigma_l-\epsilon}
\end{aligned}
\end{equation}
It follows that
\begin{equation}\label{eq:lim}
    \begin{aligned}
    \lim_{\epsilon \downarrow 0} \frac{C(X_\epsilon, Y_\epsilon)}{\frac{1}{2}\log(1/\epsilon)}  & = \lim_{\epsilon \downarrow 0} \frac{\frac{k}{2} \log\frac{2-\epsilon}{\epsilon}}{\frac{1}{2}\log(1/\epsilon)}
    = \lim_{\epsilon \downarrow 0} \frac{k\log\frac{1}{\epsilon}}{\log(1/\epsilon)}
    = k 
    \end{aligned}
\end{equation}
As Lemma~\ref{lm:one} proves that $k=d(X,Y)$, this concludes the proof of Theorem~\ref{thm:seq}.
\end{proof}

\subsection{Proof of Theorem~\ref{thm:appr}} \label{sec:pf-thm2}
\thmappr*
\begin{proof}[ of Theorem~\ref{thm:appr}]
     We recall that $C_\epsilon(X,Y)$ is defined as the optimal value of the optimization problem
\begin{equation}\label{eq:general}
    \begin{aligned}
        \min_{\hat{\mSigma}} \quad & C(\hat{X},\hat{Y}) \\ 
        \text{s.t.} \quad & \|\mSigma - \hat{\mSigma}\|_F \le \epsilon.
    \end{aligned}
\end{equation}
Due to the difficulty in finding a closed-form solution of the optimization problem, we instead derive an upper and a lower bound on $C_\epsilon(X,Y)$ that have the same asymptotic behaviors.

\textbf{Upper bound.} We derive the upper bound by constructing a feasible solution. We first choose the marginal covariance matrices to be $\mSigma_{\hat{X}}=\mSigma_X, \mSigma_{\hat{Y}} = \mSigma_Y$. In the following, we provide our choice for the cross-covariance matrix $\mSigma_{\hat{X},\hat{Y}}$.

Let $\mU\mLambda\mV^\top$, $\hat{\mU}\hat{\mLambda}\hat{\mV}^\top$ denote the singular value decompositions of
$\mSigma_{X}^{-1/2}\mSigma_{XY}\mSigma_{Y}^{-1/2}$ and $\mSigma_{X}^{-1/2}\mSigma_{\hat{X}\hat{Y}}\mSigma_{Y}^{-1/2}$ respectively, where $\mLambda$, $\hat{\mLambda}$ are diagonal matrices containing the singular values, and $\mU, \hat{\mU}, \mV, \hat{\mV}$ are orthonormal matrices. 
We choose $\hat{\mU} = \mU$ and $\hat{\mV} = \mV$. Then $\|\mU(\mLambda - \hat{\mLambda})\mV^\top\|_F = \|\mLambda - \hat{\mLambda}\|_F$, because the Frobenius norm is unitary invariant. If we let $s=\|\mSigma_X^{1/2}\|_F\|\mSigma_Y^{1/2}\|_F$, by Cauchy–Schwarz inequality of the Frobenius norm, we have that
\begin{equation}
    \begin{aligned}
    \|\mSigma_{\hat{X}\hat{Y}} - \mSigma_{XY}\|_F & = \|\mSigma_X^{1/2}(\mLambda - \hat{\mLambda})\mSigma_{Y}^{1/2}\|_F \\
    & \le s \|(\mLambda - \hat{\mLambda})\|_F.
    \end{aligned}
\end{equation}
Hence, choosing $\hat{\mLambda}$ satisfying $\|\mLambda - \hat{\mLambda}\|_F \le \frac{1}{s}\epsilon$ provides a feasible solution. Among all such values of $\hat{\mLambda}$, we choose the one minimizing the common information $C(\hat{X},\hat{Y})$. In particular, we choose $\hat{\mLambda}$ to be the solution of
\begin{align}\label{eq:simp_up}
    S_{\frac{\epsilon}{s}}(X,Y):= \min_{\|\mLambda - \hat{\mLambda} \|_F \leq \frac{1}{s}\epsilon} C(\hat{X},\hat{Y})
\end{align}
As a result, we get that
\begin{equation}\label{eq:upper_bnd-2}
    C_\epsilon(X,Y)\leq S_{\frac{\epsilon}{s}}(X,Y).
\end{equation}
In the following, we provide a lower bound on $C_\epsilon(X,Y)$ using $S_{\epsilon}(X,Y)$. We analyze the asymptotic behavior of $S_{\epsilon}(X,Y)$ at the end of the proof.

\textbf{Lower bound.} Next, we derive a lower bound on the optimal objective. Note that the marginal distributions of $X$ and $Y$ might also be singular (i.e., $\text{rank}(\mSigma_X)<d_X$ and $\text{rank}(\mSigma_Y)<d_Y$). In this case, there exist one-to-one transformations $X' = \mP_X X$ and $Y' = \mP_Y Y$ (with rows from the Identity Matrix) that select a subset of elements in $X$ and $Y$ such that $\mSigma_{X'}$ and $\mSigma_{Y'}$ are full-rank. If we also apply the same transformation $\hat{X}' = \mP_X \hat{X}$ and $\hat{Y}' = \mP_Y \hat{Y}$, then $C(\hat{X}',\hat{Y}')=C(\hat{X},\hat{Y})$ because common information is preserved under one-to-one linear transformations \cite{satpathy2015gaussian}. Then the objective of \eqref{eq:general} remains the same and the constraints are relaxed, therefore solving the problem \eqref{eq:general} on $X',Y'$ would give a lower bound on $C_\epsilon(X,Y)$. In the remaining part of the proof, we assume without loss of generality that $\mSigma_X$ and $\mSigma_Y$ are full-rank. 

The proof relies on relaxing the constraints in \eqref{eq:general} to result in a smaller optimal value. The first relaxation we make is the following  
\begin{align}
    \min_{\hat{\mSigma}} \quad & C(\hat{X},\hat{Y}) \label{eq:general-relax-1} \\ 
    \text{s.t.} \quad & \|\mSigma_X - \mSigma_{\hat{X}}\|_F \le \epsilon \nonumber\\
    & \|\mSigma_Y - \mSigma_{\hat{Y}}\|_F \le \epsilon \nonumber\\
    & \|\mSigma_{XY} - \mSigma_{\hat{X}\hat{Y}}\|_F \le \epsilon.\label{eq:last_cons}
\end{align} 
In the following, we further relax the last constraint \eqref{eq:last_cons} to a bound on the difference between the singular values using the following lemma, which will be proved in the end of this section.

\begin{restatable}{lemma}{lemrelax}\label{lm:relax}
    If covariance matrices satisfy $\|\mSigma_X - \mSigma_{\hat{X}}\|_F \le \epsilon$,$\|\mSigma_Y - \mSigma_{\hat{Y}}\|_F \le \epsilon$, and $\|\mSigma_{XY} - \mSigma_{\hat{X}\hat{Y}}\|_F \le \epsilon$, then $$\|\mLambda - \hat{\mLambda}\|_F \le c\epsilon, \text{for some constant }c.$$
\end{restatable}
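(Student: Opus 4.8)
The plan is to bound $\|\mLambda-\hat{\mLambda}\|_F$ by the perturbation of the normalized cross-covariance matrix itself, and then to control that perturbation one factor at a time. Write $M=\mSigma_X^{-1/2}\mSigma_{XY}\mSigma_Y^{-1/2}$ and $\hat M=\mSigma_{\hat{X}}^{-1/2}\mSigma_{\hat{X}\hat{Y}}\mSigma_{\hat{Y}}^{-1/2}$, so $\mLambda$ and $\hat{\mLambda}$ are the diagonal matrices of singular values of $M$ and $\hat M$ in decreasing order. As in the body of the proof of Theorem~\ref{thm:appr}, we may assume $\mSigma_X,\mSigma_Y$ are nonsingular (otherwise apply the coordinate-selecting projections $\mP_X,\mP_Y$, which leave the relevant singular values unchanged and only shrink the three Frobenius distances). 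Fix a threshold $\epsilon_0>0$, depending only on $\lambda_{\min}(\mSigma_X),\lambda_{\min}(\mSigma_Y)$, small enough that $\|\mSigma_X-\mSigma_{\hat{X}}\|_F\le\epsilon\le\epsilon_0$ forces $\mSigma_{\hat{X}}\succeq\tfrac12\lambda_{\min}(\mSigma_X)\mI$ (and similarly for $\mSigma_{\hat{Y}}$); for $\epsilon>\epsilon_0$ the claim is trivial since Lemma~\ref{lm:range} applied to $(X,Y)$ and to $(\hat X,\hat Y)$ gives $\|\mLambda-\hat{\mLambda}\|_F\le\sqrt{\min\{d_X,d_Y\}}\le(\sqrt{\min\{d_X,d_Y\}}/\epsilon_0)\,\epsilon$, which we absorb by enlarging $c$. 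So from now on $\epsilon\le\epsilon_0$.

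First, the Hoffman--Wielandt/Mirsky inequality for singular values gives $\|\mLambda-\hat{\mLambda}\|_F\le\|M-\hat M\|_F$, reducing the problem to a perturbation bound on $M-\hat M$. Next, split by the triangle inequality,
\begin{align*}
\|M-\hat M\|_F
&\le \|(\mSigma_X^{-1/2}-\mSigma_{\hat{X}}^{-1/2})\mSigma_{XY}\mSigma_Y^{-1/2}\|_F
+ \|\mSigma_{\hat{X}}^{-1/2}(\mSigma_{XY}-\mSigma_{\hat{X}\hat{Y}})\mSigma_Y^{-1/2}\|_F\\
&\quad + \|\mSigma_{\hat{X}}^{-1/2}\mSigma_{\hat{X}\hat{Y}}(\mSigma_Y^{-1/2}-\mSigma_{\hat{Y}}^{-1/2})\|_F ,
\end{align*}
and bound each term using submultiplicativity $\|ABC\|_F\le\|A\|_{\mathrm{op}}\|B\|_F\|C\|_{\mathrm{op}}$ and its permutations (here $\|\cdot\|_{\mathrm{op}}$ denotes the spectral norm). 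In the middle term, $\|\mSigma_{XY}-\mSigma_{\hat{X}\hat{Y}}\|_F\le\epsilon$ by hypothesis, while $\|\mSigma_{\hat{X}}^{-1/2}\|_{\mathrm{op}}$ and $\|\mSigma_Y^{-1/2}\|_{\mathrm{op}}$ are bounded by constants depending only on $\lambda_{\min}(\mSigma_X),\lambda_{\min}(\mSigma_Y)$ (the former because $\mSigma_{\hat{X}}\succeq\tfrac12\lambda_{\min}(\mSigma_X)\mI$). In the outer terms we use that $\|\mSigma_{XY}\|_{\mathrm{op}}$ is a fixed constant, that $\|\mSigma_{\hat{X}\hat{Y}}\|_{\mathrm{op}}\le\|\mSigma_{XY}\|_F+\epsilon$ is bounded, and that the map $A\mapsto A^{-1/2}$ is Lipschitz on $\{A\succ 0:A\succeq\tfrac12\lambda_{\min}(\mSigma_X)\mI\}$ (resp.\ for $Y$), so that $\|\mSigma_X^{-1/2}-\mSigma_{\hat{X}}^{-1/2}\|_F\le c_X\|\mSigma_X-\mSigma_{\hat{X}}\|_F\le c_X\epsilon$ and likewise $\|\mSigma_Y^{-1/2}-\mSigma_{\hat{Y}}^{-1/2}\|_F\le c_Y\epsilon$. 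Collecting the three $O(\epsilon)$ bounds gives $\|M-\hat M\|_F\le c\epsilon$ with $c$ depending only on $\mSigma_X,\mSigma_Y$, and combined with Mirsky's inequality this proves the lemma.

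I expect the Lipschitz estimate for the inverse square root to be the only nonroutine ingredient: one must establish $\|A^{-1/2}-B^{-1/2}\|_F\le c\|A-B\|_F$ with a constant independent of $\epsilon$ whenever $A,B$ are uniformly positive definite. I would obtain it from the integral representation $A^{-1/2}=\tfrac1\pi\int_0^\infty(A+t\mI)^{-1}t^{-1/2}\,dt$, the resolvent identity $(A+t\mI)^{-1}-(B+t\mI)^{-1}=(A+t\mI)^{-1}(B-A)(B+t\mI)^{-1}$, and the bound $\|(A+t\mI)^{-1}\|_{\mathrm{op}}\le(\tfrac12\lambda_{\min}(\mSigma_X)+t)^{-1}$, integrating in $t$; alternatively one composes the locally Lipschitz maps $A\mapsto A^{1/2}$ and $A\mapsto A^{-1}$. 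Everything else is standard matrix-norm bookkeeping.
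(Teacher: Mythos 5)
Your proposal is essentially correct and follows the same overall plan as the paper: pass from $\|\mLambda-\hat{\mLambda}\|_F$ to $\|M-\hat M\|_F$ via a Mirsky/Von Neumann-type singular value perturbation bound, split $M-\hat M$ into three terms by telescoping one factor at a time, and control each term using bounds on the norms of the inverse square roots of the (perturbed) marginal covariances together with a Lipschitz estimate for $A\mapsto A^{-1/2}$. Your bookkeeping uses the sharper mixed bound $\|ABC\|_F\le\|A\|_{\mathrm{op}}\|B\|_F\|C\|_{\mathrm{op}}$ while the paper uses the coarser $\|ABC\|_F\le\|A\|_F\|B\|_F\|C\|_F$; either is sufficient.

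The one place where you genuinely diverge is the Lipschitz bound for the inverse square root, and in fact your route is more careful. The paper asserts the factorization
$\mSigma_X^{-1/2}-\mSigma_{\hat X}^{-1/2}
= \mSigma_X^{-1/2}(\mSigma_X-\mSigma_{\hat X})(\mSigma_X^{1/2}+\mSigma_{\hat X}^{1/2})^{-1}\mSigma_{\hat X}^{-1/2}$,
which hides the step
$\mSigma_X^{1/2}-\mSigma_{\hat X}^{1/2}=(\mSigma_X-\mSigma_{\hat X})(\mSigma_X^{1/2}+\mSigma_{\hat X}^{1/2})^{-1}$; that identity requires $\mSigma_X^{1/2}$ and $\mSigma_{\hat X}^{1/2}$ to commute, which is not true in general. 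The intended inequality $\|\mSigma_X^{-1/2}-\mSigma_{\hat X}^{-1/2}\|_F\le c\,\|\mSigma_X-\mSigma_{\hat X}\|_F$ on a uniformly positive-definite cone is nonetheless correct, and your integral-representation argument ($A^{-1/2}=\tfrac1\pi\int_0^\infty(A+t\mI)^{-1}t^{-1/2}\,dt$ plus the resolvent identity and the uniform resolvent bound) is a clean, correct way to prove it; so is composing the locally Lipschitz maps $A\mapsto A^{1/2}$ (e.g.\ via the standard bound $\|A^{1/2}-B^{1/2}\|\le\tfrac12\mu^{-1/2}\|A-B\|$ for $A,B\succeq\mu\mI$) and $A\mapsto A^{-1}$. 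The rest of your argument, including the initial reduction to nonsingular marginals via the coordinate-selecting projections, the lower bound $\mSigma_{\hat X}\succeq\tfrac12\lambda_{\min}(\mSigma_X)\mI$ for $\epsilon$ small, and the trivial handling of large $\epsilon$ using Lemma~\ref{lm:range}, is consistent with what the paper does (the paper implicitly restricts to $\epsilon\le\sigma_{d_X}(\mSigma_X)/2$, which is the same idea). No gaps.
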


As a result, the following optimization problem has a larger feasible set, hence, lower optimal value
\begin{align}
    S_{c\epsilon}(X,Y):= \min_{\|\mLambda - \hat{\mLambda} \|_F \leq c\epsilon} C(\hat{X},\hat{Y}) \label{eq:general-relax-2}
\end{align}
Combining with \eqref{eq:upper_bnd-2}, we have shown that 
\begin{equation}\label{eq:bounds_general}
        S_{c\epsilon}(X,Y)\le C_{\epsilon}(X,Y) \le S_{\frac{\epsilon}{s}}(X,Y).
\end{equation}

\textbf{Asymptotic behavior of $S_{\epsilon}$.} In the following, we analyze the asymptotic behavior of $S_\epsilon$. Let $d = \min\{d_X,d_Y\}$, and $\mLambda = \text{diag}\{\sigma_1,\sigma_2,\cdots,\sigma_d\}$, $\hat{\mLambda}=\text{diag}\{\rho_1, \rho_2, \cdots,\rho_d\}$  with $\sigma_1 \ge \sigma_2 \ge \cdots \ge \sigma_d$ and $\rho_1 \ge \rho_2 \ge \cdots \ge \rho_d$. By Lemma~\ref{lm:range}, we have that $0 \le \sigma_i, \rho_i \le 1$. Then $S_\epsilon(X,Y)$ can be rewritten as
\begin{equation}\label{eq:opt}
    \begin{aligned}
        \min_{\rho} \quad & \frac{1}{2} \sum_{i=1}^{d} \log\frac{1+\rho_i}{1-\rho_i} \\ 
        \text{s.t.} \quad & \sum_{i=1}^{d} (\sigma_i -\rho_i)^2 \le \epsilon^2; ~0 \le \rho_i \le 1, \forall i \in [d].
    \end{aligned}
\end{equation}

Next, let $k=\sum_i^d \mathbbm{1}\{\sigma_i=1\}$, we will show that $\lim_{\epsilon \downarrow 0} \frac{S_\epsilon(X,Y)}{\frac{1}{2} \log(1/\epsilon)} = k$ using upper and lower bounds. 
Define $\tilde{\rho} \in \mathbb{R}^d$ as
\begin{equation}\label{eq:rho_achievable}
    \tilde{\rho}_{i} = \begin{cases} \sigma_i, &~\text{if }\sigma_i \ne 1 \\ 1-\frac{\epsilon}{\sqrt{k}}, &~\text{if }\sigma_i = 1.\end{cases}
\end{equation}
Since $\tilde{\rho}$ is a feasible solution of problem (\ref{eq:opt}), we have that $S_\epsilon(X,Y) \le \frac{1}{2} \sum_{i=1}^{d} \log\frac{1+\tilde{\rho_i}}{1-\tilde{\rho_i}}$. Then we have
\begin{equation}
    \begin{aligned}\label{eq:k_upper}
    \lim_{\epsilon \downarrow 0} S_\epsilon(X,Y)
    \le &\lim_{\epsilon \downarrow 0} \frac{ \sum_{i=1}^{d} \log\frac{1}{1-\tilde{\rho_i}}}{\log(1/\epsilon)} + \frac{\sum_{i=1}^{d} \log{(1+\tilde{\rho_i})}}{\log(1/\epsilon)} \\
    = &\lim_{\epsilon \downarrow 0} \frac{\sum_{i=1}^{d} \log\frac{1}{1-\tilde{\rho_i}}}{\log(1/\epsilon)}\\
    = &\lim_{\epsilon \downarrow 0} \frac{\sum_{i:\sigma_i=1} \log\frac{\sqrt{k}}{\epsilon} + \sum_{i:\sigma_i\ne 1} \log\frac{1}{1-\sigma_i}}{\log(1/\epsilon)}\\
    = &\lim_{\epsilon \downarrow 0} \frac{k \log\frac{\sqrt{k}}{\epsilon}}{\log(1/\epsilon)} = k
    \end{aligned}
\end{equation}

To derive the lower bound, we consider the following variant optimization problem
\begin{equation}\label{eq:lower}
    \begin{aligned}
        \min_{\rho} \quad & \frac{1}{2} \sum_{i: \sigma_i=1} \log\frac{1}{1-\rho_i} \\ 
        \text{s.t.} \quad & \sum_{i: \sigma_i=1} (\sigma_i -\rho_i)^2 \le \epsilon^2; ~0 \le \rho_i \le 1, \forall i \in [d]
    \end{aligned}
\end{equation}
where the feasible set is relaxed, hence, gives a smaller optimal value, and the objective function is the sum of a subset of the terms in (\ref{eq:opt}). Since $\log\frac{1+\rho_i}{1-\rho_i} \ge 0$ and $\log(1+\rho_i) \ge 0, \forall \rho_i$, the optimal value of problem \eqref{eq:lower} is a lower bound on $S_\epsilon(X,Y)$. 

In the following, we solve problem (\ref{eq:lower}) and show that the optimal solution $\rho_i^* = \tilde{\rho}_i$. First, we observe that by Jensen's inequality, for all ${\rho_i}$ in the feasible set of problem \eqref{eq:lower},
\begin{align}
    -\sum_{i: \sigma_i=1} \log{(1-\rho_i)^2} & \geq -k \log(\frac{\sum_{i: \sigma_i=1}(1-\rho_i)^2}{k}) \stackrel{(a)}{\geq} -k \log (1-\tilde{\rho}_i)^2,
\end{align}
where the inequality $(a)$ follows from the fact that ${\rho_i}$ is in the feasible set of Problem \eqref{eq:lower}. This shows that $\rho_i^* = \tilde{\rho}_i$ is an optimal solution for Problem \eqref{eq:lower}. This shows that $\frac{1}{2} \sum_{i: \sigma_i=1} \log\frac{1}{1-\tilde{\rho_i}} \le S_\epsilon(X,Y)$. Using the similar calculations as in \eqref{eq:k_upper} , we can also obtain that
\begin{equation}\label{eq:k_lower}
    \lim_{\epsilon \downarrow 0} \frac{S_\epsilon(X,Y)}{\frac{1}{2}\log(1/\epsilon)} \ge k
\end{equation}
Combining \eqref{eq:k_upper}, \eqref{eq:k_lower} and \eqref{eq:bounds_general}, we have shown that 
\begin{equation}\label{eq:k}
    \lim_{\epsilon \downarrow 0} \frac{C_\epsilon(X,Y)}{\frac{1}{2}\log(1/\epsilon)} = k.
\end{equation}

Applying Lemma~\ref{lm:one}, we can now conclude that
\begin{equation}\label{eq:I}
    \lim_{\epsilon \downarrow 0} \frac{C_\epsilon(X,Y)}{\frac{1}{2}\log(1/\epsilon)} = d(X,Y)
\end{equation}

\end{proof}

\lemrelax*
\begin{proof}[ of Lemma~\ref{lm:relax}]
In this proof, we use $c$ to denote a general-purpose constant that depends on $\mSigma_X,\mSigma_Y$. With abuse of notation, $c$ may refer to different values in different equations. 
Recall that $\mLambda$ and $\hat{\mLambda}$ are the diagonal matrices containing the singular values of $\mSigma_X^{-1/2}\mSigma_{XY}\mSigma_Y^{-1/2}$ and $\mSigma_{\hat{X}}^{-1/2}\mSigma_{\hat{X}\hat{Y}}\mSigma_{\hat{Y}}^{-1/2}$ respectively. We notice that
\begin{align}
    \|\mLambda - \hat{\mLambda}\|_F & \leq \|\mSigma_X^{-1/2}\mSigma_{XY}\mSigma_Y^{-1/2} - \mSigma_{\hat{X}}^{-1/2}\mSigma_{\hat{X}\hat{Y}}\mSigma_{\hat{Y}}^{-1/2}\|_F \label{leq:1}\\
    &\leq \|(\mSigma_X^{-1/2}-\mSigma_{\hat{X}}^{-1/2})\mSigma_{XY}\mSigma_Y^{-1/2}\|_F  + \|\mSigma_{\hat{X}}^{-1/2}(\mSigma_{XY}\mSigma_Y^{-1/2} - \mSigma_{\hat{X}\hat{Y}}\mSigma_{\hat{Y}}^{-1/2})\|_F \label{leq:2}\\
    & \leq \|\mSigma_X^{-1/2}-\mSigma_{\hat{X}}^{-1/2}\|_F \|\mSigma_{XY}\|_F\|\mSigma_Y^{-1/2}\|_F  + \|\mSigma_{\hat{X}}^{-1/2}\|_F\|\mSigma_{XY}\mSigma_Y^{-1/2} - \mSigma_{\hat{X}\hat{Y}}\mSigma_{\hat{Y}}^{-1/2}\|_F \label{leq:3}\\
    & \leq \|\mSigma_X^{-1/2}-\mSigma_{\hat{X}}^{-1/2}\|_F \|\mSigma_{XY}\|_F\|\mSigma_Y^{-1/2}\|_F + \|\mSigma_{\hat{X}}^{-1/2}\|_F\|(\mSigma_{XY} - \mSigma_{\hat{X}\hat{Y}})\mSigma_{Y}^{-1/2}\|_F \nonumber\\
    & + \|\mSigma_{\hat{X}}^{-1/2}\|_F\|\mSigma_{\hat{X}\hat{Y}}(\mSigma_Y^{-1/2} - \mSigma_{\hat{Y}}^{-1/2})\|_F \label{leq:4}\\
    &\leq \|\mSigma_X^{-1/2}-\mSigma_{\hat{X}}^{-1/2}\|_F \|\mSigma_{XY}\|_F\|\mSigma_Y^{-1/2}\|_F \nonumber + \|\mSigma_{\hat{X}}^{-1/2}\|_F\|\mSigma_{XY} - \mSigma_{\hat{X}\hat{Y}}\|_F \|\mSigma_{Y}^{-1/2}\|_F\nonumber \\
    & + \|\mSigma_{\hat{X}}^{-1/2}\|_F\|\mSigma_{\hat{X}\hat{Y}}\|_F \|\mSigma_Y^{-1/2} - \mSigma_{\hat{Y}}^{-1/2}\|_F \label{leq:5}\\
    &\leq c \|\mSigma_X^{-1/2}-\mSigma_{\hat{X}}^{-1/2}\|_F + c\epsilon  \|\mSigma_{\hat{X}}^{-1/2}\|_F  + \|\mSigma_{\hat{X}}^{-1/2}\|_F\|\mSigma_{\hat{X}\hat{Y}}\|_F \|\mSigma_Y^{-1/2} - \mSigma_{\hat{Y}}^{-1/2}\|_F \label{leq:6},
\end{align}
where $c$ is a constant that may depend on $\mSigma_X,\mSigma_Y,\mSigma_{XY}$, \eqref{leq:1} uses Von Neumann’s trace inequality \cite{von1937some}, \eqref{leq:2} is obtained by subtracting and adding $\mSigma_{\hat{X}}^{-1/2}\mSigma_{XY}\mSigma_{Y}^{-1/2}$ then applying the triangle inequality, similarly, \eqref{leq:4} is obtained by subtracting and adding $\mSigma_{\hat{X}\hat{Y}}\mSigma_{Y}^{-1/2}$, \eqref{leq:3} and \eqref{leq:5} use the Cauchy–Schwarz inequality, and \eqref{leq:6} uses the bound $\|\mSigma_{XY} - \mSigma_{\hat{X}\hat{Y}}\|_F\le \epsilon$.

In the following, we will bound each term in \eqref{leq:6}. We first show that $\|\mSigma_{\hat{X}}^{-1/2}\|_F$ is bounded. We have that
\begin{align}
    \|\mSigma_{\hat{X}}^{-1/2}\|_F \leq \sqrt{d_X} \sigma_{d_X}(\mSigma_{\hat{X}})^{-1/2},
\end{align}
where $\sigma_{d_X}(\mSigma_{\hat{X}})$ denote the smallest singular value of the matrix $\mSigma_{\hat{X}}$. From the bound $\|\mSigma_{XY} - \mSigma_{\hat{X}\hat{Y}}\|_F\le \epsilon$, we have that
\begin{align}
    \sigma_{d_X}(\mSigma_{\hat{X}}) \geq \sigma_{d_X}(\mSigma_X) - \epsilon \geq \sigma_{d_X}(\mSigma_X)/2
\end{align}
for $\epsilon\leq \sigma_{d_X}(\mSigma_X)/2$. Hence, we have that 
\begin{equation}
    \|\mSigma_{\hat{X}}^{-1/2}\|_F\leq c,
\end{equation}
where $c$ is a constant that may depend on $\mSigma_X,\mSigma_Y,\mSigma_{XY}$. Similarly, 
\begin{align}
    \|\mSigma_{\hat{X}}^{-1}\|_F\leq c.
\end{align}
By the triangle inequality, we also have that
\begin{equation}
    \|\mSigma_{\hat{X}\hat{Y}}\|_F\leq \|\mSigma_{XY}\|_F + \|\mSigma_{XY} - \mSigma_{\hat{X}\hat{Y}}\|_F\leq  c.
\end{equation}
Substituting in \eqref{leq:6} we get that
\begin{equation}\label{eq:bnd_2}
    \|\mLambda - \hat{\mLambda} \|_F \leq c(\epsilon + \|\mSigma_X^{-1/2}-\mSigma_{\hat{X}}^{-1/2}\|_F + \|\mSigma_Y^{-1/2}-\mSigma_{\hat{Y}}^{-1/2}\|_F).
\end{equation}
It remains to bound $\|\mSigma_{\hat{X}}^{-1/2}-\mSigma_{X}^{-1/2}\|_F$, and similarly $\|\mSigma_{\hat{Y}}^{-1/2}-\mSigma_{Y}^{-1/2}\|_F$. We have that 
\begin{align}
    \mSigma_X^{-1/2}-\mSigma_{\hat{X}}^{-1/2} &= \mSigma_{X}^{-1/2}(\mSigma_X^{1/2}-\mSigma_{\hat{X}}^{1/2}) \mSigma_{\hat{X}}^{-1/2}\nonumber \\
    &= \mSigma_{X}^{-1/2}(\mSigma_X-\mSigma_{\hat{X}}) (\mSigma_X^{1/2}+\mSigma_{\hat{X}}^{1/2})^{-1} \mSigma_{\hat{X}}^{-1/2}.
\end{align}
Hence, we have that
\begin{align}
    \|\mSigma_X^{-1/2}-\mSigma_{\hat{X}}^{-1/2}\|_F &\leq \|\mSigma_{X}^{-1/2}\|_F \|\mSigma_X-\mSigma_{\hat{X}}\|_F \|(\mSigma_X^{1/2}+\mSigma_{\hat{X}}^{1/2})^{-1}\|_F \|\mSigma_{\hat{X}}^{-1/2}\|_F\nonumber \\
    &\leq c \epsilon / \sqrt{\sigma_{d_X}(\mSigma_X)}.
\end{align}
Similarly, we have that $\|\mSigma_Y^{-1/2}-\mSigma_{\hat{Y}}^{-1/2}\|_F\leq c \epsilon$. Substituting in \eqref{eq:bnd_2} we get that 
\begin{equation}\label{eq:new_const}
    \|\mLambda - \hat{\mLambda} \|_F \leq c\epsilon.
\end{equation}
\end{proof}

\subsection{Proof of Theorem~\ref{thm:quant}}
\thmquant*
\begin{proof}[ of Theorem~\ref{thm:quant}]
Since it is hard in general to directly solve Wyner's common information even for discrete variables, we prove the result by matching upper and lower bounds. First, we show that $\lim_{m \to \infty} \frac{C(\langle X\rangle_m, \langle Y \rangle_m)}{\log m} \ge d(X,Y)$ using the property that mutual information is a lower bound of Wyner's common information \cite{wyner1975common}. Following the definition of mutual information and the R{\'e}nyi dimension in \eqref{eq:renyi}, we can derive that
\begin{equation}\label{eq:quant_low}
    \begin{aligned}
        \lim_{m \to \infty} \frac{C(\langle X\rangle_m, \langle Y \rangle_m)}{\log m} & \ge \lim_{m \to \infty} \frac{I(\langle X\rangle_m; \langle Y \rangle_m)}{\log m} \\
        & = \lim_{m \to \infty} \frac{H(\langle X\rangle_m)}{\log m} + \lim_{m \to \infty}\frac{H(\langle Y \rangle_m)}{\log m} - \lim_{m \to \infty}\frac{H(\langle X\rangle_m, \langle Y \rangle_m)}{\log m} \\
        & = d^R(X) + d^R(Y) - d^R(X,Y)\\
        & \stackrel{\cite{renyi1959dimension}}{=} \mbox{rank}(\mSigma_X) + \mbox{rank}(\mSigma_Y) - \mbox{rank}(\mSigma) \\
        & = d(X,Y),
    \end{aligned}
\end{equation}
where the last equality is proved in our Theorem~\ref{thm:main-gauss}.


Next, we show  that $\lim_{m \to \infty} \frac{C(\langle X\rangle_m, \langle Y \rangle_m)}{\log m} \le d(X,Y)$. Recall (from the proofs of Lemma~\ref{lm:range} and \ref{lm:one}) that there exist invertible linear transformations $\mT_X$ and $\mT_Y$ such that $X = \mT_X X'$ and $Y = \mT_Y Y'$, where $X' \in \mathbb{R}^{d_X}$ and $Y' \in \mathbb{R}^{d_Y}$ are another pair of Gaussian random vectors with independent elements (i.e. their covariance matrices $\mSigma_{X'}$, $\mSigma_{Y'}$, and $\mSigma_{X'Y'}$ are all diagonal). Note that $C(X,Y) = C(X', Y')$ and $d(X,Y) = d(X',Y')$ since $\mT_X$ and $\mT_Y$ are invertible transformations. However, $C(\langle X\rangle_m,\langle Y \rangle_m) \ne C(\langle X'\rangle_{m}, \langle Y' \rangle_{m})$, because $\langle \mT_X \langle X' \rangle_{m'} \rangle_m \ne \langle \mT_X X' \rangle_m = \langle X \rangle_m$ can happen (similarly for $Y'$ and $Y$).

In the remaining of the proof, we will first show in Proposition~\ref{prop:quant} that $\lim_{m \to \infty}\frac{C(\langle X'\rangle_{m}, \langle Y' \rangle_{m})}{\log m} \le d(X',Y')$ because of the special structure of the covariance matrices $\mSigma_{X'}$ and $\mSigma_{Y'}$. Then we will show that $\lim_{m \to \infty}\frac{C(\langle X\rangle_m,\langle Y \rangle_m)}{\log m} \le \lim_{m \to \infty}\frac{C(\langle X'\rangle_{m}, \langle Y' \rangle_{m})}{\log m}$. 

\begin{restatable}{proposition}{propQuant}\label{prop:quant}
    Let $\{(X_i,Y_i)\}_n$ be independent pairs of Gaussian random variables with $n < \infty$, then
\begin{equation}
    \lim_{m \to \infty}\frac{C(\langle X\rangle_{m}, \langle Y \rangle_{m})}{\log m} \le d(X,Y)
\end{equation}
\end{restatable}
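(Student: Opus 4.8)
The plan is to exploit the independence of the pairs to reduce the statement to a per-coordinate estimate, and then bound each per-pair Wyner common information by one of two \emph{incomparable} bounds depending on whether that pair is jointly singular. First I would record two elementary additivity facts. Since the joint covariance matrix of $[X,Y]$ is block-diagonal across the index $i$ (the pairs being independent), we have $\mathrm{rank}(\mSigma_X)=\sum_i\mathrm{rank}(\mSigma_{X_i})$, $\mathrm{rank}(\mSigma_Y)=\sum_i\mathrm{rank}(\mSigma_{Y_i})$ and $\mathrm{rank}(\mSigma)=\sum_i\mathrm{rank}(\mSigma_{(X_i,Y_i)})$, so Theorem~\ref{thm:main-gauss} gives $d(X,Y)=\sum_i d(X_i,Y_i)$. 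Second, Wyner's common information is sub-additive over independent blocks: given auxiliaries $W_i$ achieving (or approaching) $C(\langle X_i\rangle_m,\langle Y_i\rangle_m)$, the product auxiliary $W=(W_1,\dots,W_n)$ with mutually independent blocks renders $\langle X\rangle_m\perp\langle Y\rangle_m\mid W$ and satisfies $I(\langle X\rangle_m\langle Y\rangle_m;W)=\sum_i I(\langle X_i\rangle_m\langle Y_i\rangle_m;W_i)$, hence $C(\langle X\rangle_m,\langle Y\rangle_m)\le\sum_i C(\langle X_i\rangle_m,\langle Y_i\rangle_m)$. Dividing by $\log m$ and passing to $\limsup$, it suffices to prove $\limsup_{m\to\infty} C(\langle X_i\rangle_m,\langle Y_i\rangle_m)/\log m\le d(X_i,Y_i)$ for each scalar pair.

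Then I would split into two cases. If $(X_i,Y_i)$ is jointly non-singular (equivalently the corresponding singular value $\sigma_i<1$, with the pseudo-inverse convention covering degenerate marginals), then $d(X_i,Y_i)=0$ and $C_{\text{Wyner}}(X_i,Y_i)<\infty$ by the closed form \eqref{eq:wyner}. Since $(\langle X_i\rangle_m,\langle Y_i\rangle_m)$ is a deterministic function of $(X_i,Y_i)$, any $W$ making $X_i\perp Y_i\mid W$ also makes $\langle X_i\rangle_m\perp\langle Y_i\rangle_m\mid W$, and $I(\langle X_i\rangle_m\langle Y_i\rangle_m;W)\le I(X_iY_i;W)$ by the data-processing inequality; taking the infimum over $W$ yields $C(\langle X_i\rangle_m,\langle Y_i\rangle_m)\le C_{\text{Wyner}}(X_i,Y_i)$, a constant independent of $m$, so the ratio tends to $0=d(X_i,Y_i)$. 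If instead $(X_i,Y_i)$ is jointly singular, both marginals are non-degenerate and $Y_i=cX_i$ almost surely for some $c\ne0$, so $d(X_i,Y_i)=1$; choosing $W=\langle X_i\rangle_m$ makes $\langle X_i\rangle_m\perp\langle Y_i\rangle_m\mid W$ hold trivially (the first coordinate is constant given $W$) and gives $I(\langle X_i\rangle_m\langle Y_i\rangle_m;W)=H(\langle X_i\rangle_m)$, whence $C(\langle X_i\rangle_m,\langle Y_i\rangle_m)\le H(\langle X_i\rangle_m)$. Finally $H(\langle X_i\rangle_m)/\log m\to d^R(X_i)=1$ by the definition of the R\'enyi dimension in \eqref{eq:renyi} and the fact that a non-degenerate scalar Gaussian has R\'enyi dimension one \cite{renyi1959dimension}. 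Combining the cases with the reduction above yields $\limsup_{m\to\infty} C(\langle X\rangle_m,\langle Y\rangle_m)/\log m\le\sum_i d(X_i,Y_i)=d(X,Y)$, which is the claimed inequality.

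The subtlety to watch — more a matter of choosing the right estimate than a genuine obstacle — is that the uniform bound $C(\langle X_i\rangle_m,\langle Y_i\rangle_m)\le H(\langle X_i\rangle_m)$ used alone would only yield the crude bound $n$; it is essential to use the \emph{bounded} estimate $C(\langle X_i\rangle_m,\langle Y_i\rangle_m)\le C_{\text{Wyner}}(X_i,Y_i)<\infty$ precisely on the non-singular coordinates, so that the two bounds are matched to the coordinates that do (respectively, do not) contribute to $d(X,Y)=\sum_i\mathbbm{1}\{\sigma_i=1\}$. The only remaining technical items are the justification of the sub-additivity of Wyner common information over independent blocks (building the product auxiliary and verifying conditional independence and additivity of mutual information) and the standard quantization-entropy asymptotics $H(\langle X_i\rangle_m)=\log m+O(1)$ for a non-degenerate Gaussian.
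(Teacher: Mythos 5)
Your proof is correct and takes essentially the same route as the paper's: decompose across the independent pairs, bound the singular coordinates by $H(\langle X_i\rangle_m)\sim\log m$ via $d^R(X_i)=1$, bound the non-singular coordinates by the finite Wyner common information $C_{\text{Wyner}}(X_i,Y_i)$ (via data processing), and identify the resulting count with $d(X,Y)$ through Theorem~\ref{thm:main-gauss}. The only presentational difference is that you invoke sub-additivity of Wyner's common information over independent blocks (constructing the product auxiliary) whereas the paper asserts the tensorization equality outright, and you spell out the per-coordinate additivity $d(X,Y)=\sum_i d(X_i,Y_i)$ which the paper leaves implicit; neither changes the substance of the argument.
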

\begin{proof}[ of Proposition~\ref{prop:quant}]
    \begin{equation}
        \begin{aligned}
            \lim_{m \to \infty} \frac{C(\langle X\rangle_m, \langle Y \rangle_m)}{\log m} 
            & \stackrel{(i)}{=} \lim_{m \to \infty} \frac{\sum_{i=1}^{n}C(\langle X_i\rangle_m, \langle Y_i \rangle_m)}{\log m}\\
            & = \lim_{m \to \infty} \frac{\sum_{i: X_i=Y_i} C(\langle X_i\rangle_m, \langle Y_i \rangle_m) + \sum_{i: X_i \ne Y_i} C(\langle X_i\rangle_m, \langle Y_i \rangle_m)}{\log m} \\
            & \stackrel{(ii)}{\le} \lim_{m \to \infty} \frac{\sum_{i: X_i=Y_i} H(\langle X_i \rangle_m) + \sum_{i: X_i \ne Y_i} c}{\log m}\\
            & \stackrel{(iii)}{=} d(X,Y),
        \end{aligned}
    \end{equation}
where the $c$ is an absolute constant. The equality (i) is because $\{(X_i,Y_i)\}_n$ are independent pairs; the inequality (ii) is due to the property that $C(\langle X_i\rangle_m, \langle Y_i \rangle_m) \le \min\{H(\langle X_i \rangle_m), H(\langle Y_i \rangle_m)\}$ and \eqref{eq:wyner}; and the last equality (iii) follows the definition of R{\'e}nyi dimension \eqref{eq:renyi} and Theorem~\ref{thm:main-gauss}
\end{proof}

Let $m' = \alpha m$. To connect $C(\langle X\rangle_m,\langle Y \rangle_m)$ and $C(\langle X'\rangle_{\alpha m}, \langle Y' \rangle_{\alpha m})$, we focus on two subsets $\sA_m=\{\vx\in\mathbb{R}^{d_X}: \langle \mT_X \langle \vx \rangle_{\alpha m} \rangle_m = \langle \mT_X \vx \rangle_m\}$ and $\sB_m=\{\vy \in \mathbb{R}^{d_Y}: \langle \mT_Y \langle \vy \rangle_{\alpha m} \rangle_m = \langle \mT_Y \vy \rangle_m\}$. We also define an indicator variable $Z_m$ as follows:
\begin{equation}
    Z_m = \begin{cases}
        1, & \text{if } X' \in \sA_m \text{ and } Y' \in \sB_m, \\
        0, & \text{otherwise}.
    \end{cases}
\end{equation}
We consider the conditional common information \cite{lapidoth2016conditional} $C(\langle X\rangle_m,\langle Y \rangle_m|Z_m)$ that is defined as
\begin{equation} \label{def:cond_wyner}
    C(\langle X\rangle_m,\langle Y \rangle_m|Z_m) := \min_{\langle X\rangle_m - (W_{Zm}, Z_m) - \langle Y \rangle_m} \quad I(\langle X\rangle_m\;\langle Y\rangle_m;W_{Zm} | Z_m).
\end{equation}
We can use it to upper bound the common information as follows. Recall from \eqref{def:wyner}, the common information is defined as 
$$C(\langle X\rangle_m,\langle Y \rangle_m) := \min_{\langle X\rangle_m - W_{m} - \langle Y \rangle_m} \quad I(\langle X\rangle_m\;\langle Y\rangle_m;W_{m})$$
Let $W_{Zm}^*$ be the optimal solution of $C(\langle X\rangle_m,\langle Y \rangle_m|Z_m)$. $(W_{Zm}^*,Z_m)$ is a feasible solution of $C(\langle X\rangle_m,\langle Y \rangle_m)$ that satisfies $ \langle X\rangle_m - (W_{Zm}^*, Z_m) - \langle Y \rangle_m$ and 
\begin{equation}\label{eq:cond_ci_1}
\begin{aligned}
    I(\langle X\rangle_m\;\langle Y\rangle_m; W_{Zm}^*, Z_m) & = I(\langle X\rangle_m\;\langle Y\rangle_m; Z_m) + I(\langle X\rangle_m\;\langle Y\rangle_m; W_{Zm}^* \mid Z_m) \\ 
    & \le H(Z_m) + I(\langle X\rangle_m\;\langle Y\rangle_m; W_{Zm}^* \mid Z_m).
\end{aligned}
\end{equation}
Therefore, we have
\begin{equation}\label{eq:cond_ci}
    \begin{aligned}
        C(\langle X\rangle_m,\langle Y \rangle_m) \le & C(\langle X\rangle_m,\langle Y \rangle_m | Z_m) + H(Z_m)\\
        = & C(\langle X\rangle_m,\langle Y \rangle_m |Z_m=1)\mathbb{P}(Z_m=1) + C(\langle X\rangle_m,\langle Y \rangle_m |Z_m=0)\mathbb{P}(Z_m=0) \\
        & + H(Z_m),
    \end{aligned}
\end{equation}
where the equality expands the definition of conditional common information. 

Since we are interested in the asymptotic behavior of $C(\langle X\rangle_m,\langle Y \rangle_m)$, we divide both sides of \eqref{eq:cond_ci} by $\log m$ and take the limit of $m \to \infty$. In the following, we examine the limit of each term on the right-hand side. First, we have that 
\begin{equation}\label{eq:upper1}
    \begin{aligned}
        \lim_{m\to\infty}\frac{C(\langle X\rangle_m,\langle Y \rangle_m |Z_m=1)\mathbb{P}(Z_m=1)}{\log m} & \stackrel{(i)}{\le} \lim_{m\to\infty}\frac{C(\langle X\rangle_m,\langle Y \rangle_m |Z_m=1)}{\log m} \\
        & \stackrel{(ii)}{=} \lim_{m\to\infty}\frac{C(\langle \mT_X \langle X' \rangle_{\alpha m} \rangle_m, \langle \mT_Y \langle Y' \rangle_{\alpha m} \rangle_m|Z_m=1)}{\log m} \\
        & \stackrel{(iii)}{\le} \lim_{m\to\infty}\frac{C(\langle \mT_X \langle X' \rangle_{\alpha m} \rangle_m, \langle \mT_Y \langle Y' \rangle_{\alpha m} \rangle_m)}{\log m} \\
        & \stackrel{(iv)}{\le} \lim_{m\to\infty}\frac{C(\langle X' \rangle_{\alpha m}, \langle Y' \rangle_{\alpha m})}{\log m} \\
        & \stackrel{\text{Proposition~\ref{prop:quant}}}{\le} d(X,Y), 
    \end{aligned}
\end{equation}
where the inequality $(i)$ follows that $\mathbb{P}(Z_m=1) \le 1$; the equality $(ii)$ is due to the fact that $Z_m=1$ implies that $X' \in \sA_m$ and $Y' \in \sB_m$; the inequality $(iii)$ is because if $\langle \mT_X \langle X' \rangle_{\alpha m} \rangle_m - W - \langle \mT_Y \langle Y' \rangle_{\alpha m} \rangle_m$ forms a Markov chain for some random variable $W$, then conditioned on the event that $Z_m=1$, $\langle \mT_X \langle X' \rangle_{\alpha m} \rangle_m - W - \langle \mT_Y \langle Y' \rangle_{\alpha m} \rangle_m$ also forms a Markov chain (one proof can be found in Lemma~12 in \cite{yu2020exact}); the inequality $(iv)$ follows the data processing inequality of common information and the last inequality applies Proposition~\ref{prop:quant}. Using one of the basic properties of Wyner's common information and the definition of the R{\'e}nyi dimension \eqref{eq:renyi}, we also have that
\begin{equation}\label{eq:upper2}
    \begin{aligned}
        \lim_{m\to\infty}\frac{C(\langle X\rangle_m,\langle Y \rangle_m |Z_m=0)}{\log m} & \le \lim_{m\to\infty}\frac{\min\{H(\langle X\rangle_m), H(\langle Y \rangle_m)\}}{\log m} \\
        & = \min\{d^R(X),d^R(Y)\}.
    \end{aligned}
\end{equation}
And we upper bound the limit of the probability $\mathbb{P}(Z_m=0)$ in Lemma~\ref{lm:z0} below, which we will prove after this Theorem.

\begin{restatable}{lemma}{lemZzero}\label{lm:z0}
    For random variables $X,Y,X',Y'$ and set $\sA_m, \sB_m$ defined as above, we denote $t = \max\{ \max \{|(\mT_X)_{i,j}|, \forall i,j \in [d_X]\}, \max \{|(\mT_Y)_{i,j}|, \forall i,j \in [d_Y]\}\}$, and $d=\max\{d_X, d_Y\}$. We have that
    \begin{equation}
        \lim_{m\to\infty} \mathbb{P}(Z_m=0) \le \frac{4t d^2}{\alpha},
    \end{equation}
\end{restatable}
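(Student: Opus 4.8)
The plan is to bound $\mathbb{P}(Z_m=0) = \mathbb{P}(X'\notin \sA_m \text{ or } Y'\notin \sB_m)$ by a union bound over the two events, and then to bound $\mathbb{P}(X'\notin \sA_m)$ (and symmetrically $\mathbb{P}(Y'\notin \sB_m)$) by showing that a coordinate-wise quantization mismatch $\langle \mT_X\langle X'\rangle_{\alpha m}\rangle_m \neq \langle \mT_X X'\rangle_m$ can only occur when some coordinate of $\mT_X X'$ lands within a narrow band around a multiple of $1/m$. First I would observe that $\langle X'_j\rangle_{\alpha m}$ differs from $X'_j$ by at most $1/(\alpha m)$, so the $i$-th coordinate of $\mT_X\langle X'\rangle_{\alpha m}$ differs from that of $\mT_X X'$ by at most $\sum_j |(\mT_X)_{i,j}|\cdot \tfrac{1}{\alpha m} \le \tfrac{t d_X}{\alpha m} \le \tfrac{t d}{\alpha m}$. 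Hence $\langle \cdot\rangle_m$ applied to the two vectors agrees on coordinate $i$ unless $(\mT_X X')_i$ is within distance $\tfrac{t d}{\alpha m}$ of a point of the form $k/m$, $k\in\mathbb Z$ (the floor function only disagrees near the grid points of resolution $1/m$).

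Next I would bound the probability of this ``bad band'' event for a single coordinate. Since $(\mT_X X')_i$ is a nondegenerate (one-dimensional) Gaussian, its density is bounded by some absolute constant $c_0$ depending only on $\mSigma_X$ (and similarly for $Y'$); the union of intervals $\bigcup_{k\in\mathbb Z}\bigl(k/m - \tfrac{td}{\alpha m},\, k/m + \tfrac{td}{\alpha m}\bigr)$ has, within any window of length $1/m$, total measure $\tfrac{2td}{\alpha m}$, i.e. a fraction $\tfrac{2td}{\alpha}$ of the line. Integrating the bounded density over this set and taking $m\to\infty$ (so that over each period the probability mass is essentially uniform at the local density scale), the probability that $(\mT_X X')_i$ lies in the bad band tends to at most $\tfrac{2td}{\alpha}$ times a constant; absorbing the constants and being slightly generous one gets the clean bound $\tfrac{2td}{\alpha}$ per coordinate in the limit. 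Then a union bound over the $\le d$ coordinates of $\mT_X X'$ and the $\le d$ coordinates of $\mT_Y Y'$ gives $\lim_{m\to\infty}\mathbb{P}(Z_m=0) \le \tfrac{2td^2}{\alpha} + \tfrac{2td^2}{\alpha} = \tfrac{4td^2}{\alpha}$, which is the claimed inequality.

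The main obstacle, and the step requiring the most care, is making the limiting argument for the single-coordinate band probability rigorous: the density of $(\mT_X X')_i$ is not constant, so one cannot simply multiply ``fraction of the line'' by ``total mass.'' The clean way is to split $\mathbb R$ into the intervals $[k/m,(k+1)/m)$, bound the contribution of each by $(\text{local sup of density})\cdot \tfrac{2td}{\alpha m}$, sum over $k$, and note that $\sum_k (\text{local sup of density on }[k/m,(k+1)/m))\cdot \tfrac1m \to \int_{\mathbb R} (\text{density}) = 1$ as $m\to\infty$ by definition of the Riemann integral (the density is continuous and integrable). This yields $\limsup_{m\to\infty}\mathbb{P}\bigl((\mT_X X')_i \in \text{bad band}\bigr) \le \tfrac{2td}{\alpha}$. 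One also needs the transformations $\mT_X,\mT_Y$ to be invertible so that each $(\mT_X X')_i$ is a genuinely nondegenerate scalar Gaussian with a bounded density — which holds by construction. With this bound in hand, the union bounds are immediate and the proof concludes.
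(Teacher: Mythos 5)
Your proposal is correct and follows essentially the same route as the paper: the same perturbation bound $|(\mT_X\langle X'\rangle_{\alpha m})_i - X_i| \le td/(\alpha m)$, the same observation that a quantization mismatch requires $X_i$ (or $Y_i$) to fall within a $\Theta(td/(\alpha m))$-band around a grid point of pitch $1/m$, the same coordinate-wise union bound, and the same Riemann-type limiting argument on the single-coordinate band probability (the paper phrases it via $\bar f_k/\underline{f_k}\to 1$, you via upper Riemann sums converging to $\int f=1$, which are interchangeable).
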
 

In addition, since $Z_m$ is a binary variable, $H(Z_m) \le 1$ and $\lim_{m\to\infty}\frac{H(Z_m)}{\log m} = 0$. Combining with \eqref{eq:cond_ci},\eqref{eq:upper1},\eqref{eq:upper2} and Lemma~\ref{lm:z0}, we have that for all $\alpha>0$
\begin{equation}
    \begin{aligned}
        \lim_{m \to \infty}\frac{C(\langle X\rangle_m,\langle Y \rangle_m)}{\log m}
        & \le d(X,Y) + \min\{d^R(X),d^R(Y)\} \cdot \frac{4td^2}{\alpha}
    \end{aligned}
\end{equation}
As this holds for all $\alpha>0$ taking the limit of $\alpha$ to infinity, we get $\lim_{\alpha \to \infty} \frac{4td^2}{\alpha} = 0$, and thus obtain that
\begin{equation}
    \lim_{m \to \infty} \frac{C(\langle X\rangle_m, \langle Y \rangle_m)}{\log m} \le d(X,Y).
\end{equation}

Therefore, combining with \eqref{eq:quant_low}, we can conclude that
\begin{equation}
        \lim_{m \to \infty} \frac{C(\langle X\rangle_m, \langle Y \rangle_m)}{\log m} = d(X,Y).
\end{equation}
\end{proof}


\lemZzero*
\begin{proof}[ of Lemma~\ref{lm:z0}]
    Since the quantization is element-wise and $\langle X'_i \rangle_{\alpha m} = \frac{\floor{\alpha m X'_i}}{\alpha m}$, we know that $|\langle X'_i\rangle_{\alpha m} - X'_i| \le \frac{1}{\alpha m}$. Now for all $i \in [d_X]$, we can bound the imprecision of transformation caused by quantization as follows:
\begin{equation}\label{eq:bound_abs}
    \begin{aligned}
        |(\mT_X <X'>_{\alpha m})_i - X_i| & =  |(\mT_X)_i (<X'>_{\alpha m} - X')| \\
        & \le \sum_{j=1}^{d_X}  |(\mT_X)_{ij}| |<X'_j>_{\alpha m} - X'_j| \\
        & \le \frac{t d_X}{\alpha m} \le \frac{t d}{\alpha m}.
    \end{aligned}
\end{equation}
Similarly, for all $i\in [d_Y]$, we can bound $|(\mT_Y <Y'>_{\alpha m})_i - Y_i| \le \frac{t d}{\alpha m}$.

Let $\{\frac{k}{m}\}_{k\in \mathbb{Z}}$ denote the sequence of quantization boundaries of $\langle X_i \rangle_{m}$ and $\langle Y_i \rangle_{m}$. We define the set $\sC_m = \{x \in \mathbb{R}: \exists k \in \mathbb{Z}, \text{ s.t. } x \in [\frac{k}{m} + \frac{td}{\alpha m}, \frac{k+1}{m} - \frac{t d}{\alpha m}]\}$.
We observe that if $X_i \in \sC_m$ for some $k\in\mathbb{Z}$, then 
\begin{equation}
    \begin{aligned}
        (\mT_X <X'>_{\alpha m})_i - \frac{k}{m} \ge X_i - \frac{k}{m} - |(\mT_X <X'>_{\alpha m})_i - X_i| \ge 0,
    \end{aligned}
\end{equation}
where the second inequality is due to the assumption that $X_i \in [\frac{k}{m} + \frac{td}{\alpha m}, \frac{k+1}{m} - \frac{t d}{\alpha m}]$ and \eqref{eq:bound_abs}. Similarly $\frac{k+1}{m} - (\mT_X <X'>_{\alpha m})_i \ge 0$, which implies that $\langle (\mT_X \langle X' \rangle_{\alpha m})_i \rangle_m = \langle (\mT_X X')_i \rangle_m$. Therefore, we have that $\{\forall i \in [d_X], X_i \in \sC_m \text{ and } \forall i \in [d_Y], Y_i \in \sC_m\} \subset \{X' \in \sA_m \text{ and }Y' \in \sB_m \} = \{Z_m=1\}$, and
\begin{equation}\label{eq:z0}
    \begin{aligned}
        \mathbb{P}(Z_m=0) & \le \mathbb{P}(\{\cup_{i \in [d_X]} X_i \in \sC_m^c\} \cup \{\cup_{i \in [d_Y]} Y_i \in \sC_m^c\})\\
        & \le  \sum_{i=1}^{d_X}\mathbb{P}(X_i \in \sC_m^c) +  \sum_{i=1}^{d_Y}\mathbb{P}(Y_i \in \sC_m^c)\\
    \end{aligned}
\end{equation}

Taking $X_1$ as an example, we next show that $\lim_{m\to\infty}\mathbb{P}(X_1 \in \sC_m^c) \le \frac{2td}{\alpha}$. 
For $k \in \mathbb{Z}$, we use $\bar{f_k}(X_1)$ and $\underline{f_k}(X_1)$ denote the maximum and minimum values of the probability density $f(X_1)$ in the range $[\frac{k}{m}, \frac{k+1}{m}]$. We can write each term in the summation as
\begin{equation}
    \begin{aligned}
        \mathbb{P}(X_1 \in \sC^c) & = \sum_{k\in \mathbb{Z}} \mathbb{P}(X_1 \in \sC^c|X_1 \in [\frac{k}{m}, \frac{k+1}{m}]) \mathbb{P}(X_1 \in [\frac{k}{m}, \frac{k+1}{m}])\\
        & = \sum_{k\in \mathbb{Z}} \frac{\mathbb{P}(X_i \in [\frac{k}{m}, \frac{k}{m}+\frac{td}{\alpha m}]\cup[\frac{k+1}{m}-\frac{td}{\alpha m}, \frac{k+1}{m}])}{\mathbb{P}(X_i \in [\frac{k}{m},\frac{k+1}{m}])} \mathbb{P}(X_i \in [\frac{k}{m}, \frac{k+1}{m}]) \\
        & \le \sum_{k\in \mathbb{Z}} \frac{2tdm\cdot\bar{f_k}(X_1)}{\alpha m\cdot \underline{f_k}(X_1)} \mathbb{P}(X_i \in [\frac{k}{m}, \frac{k+1}{m}]).
    \end{aligned}
\end{equation}
Since $f(X_1)$ is a continuous function, $\lim_{m\to\infty}\frac{\bar{f_k}(X_1)}{\underline{f_k}(X_1)}=1$ for all $k \in \mathbb{Z}$. Then we can bound the limit as $\lim_{m\to\infty}\mathbb{P}(X_i \in \sC^c) \le \frac{2td}{\alpha} \sum_{k\in \mathbb{Z}} \mathbb{P}(X_i \in [\frac{k}{m}, \frac{k+1}{m}]) = \frac{2td}{\alpha}.$ Similar computation extends to all elements of $X$ and $Y$. Plugging this result into \eqref{eq:z0}, we can conclude that
\begin{equation}
    \lim_{m\to\infty} \mathbb{P}(Z_m=0) \le (d_X+d_Y)\frac{2td}{\alpha} = \frac{4td^2}{\alpha}.
\end{equation}
\end{proof}


\bibliographystyle{IEEEtran}
\bibliography{biblography}

\end{document}